\Crefname{algocf}{Algorithm}{Algorithms}
\crefname{algocfline}{line}{lines}
\Crefname{invariant}{Invariant}{Invariants}
\Crefname{claim}{Claim}{Claims}
\Crefname{subclaim}{Subclaim}{Subclaims}
\newcommand{\nf}{\nicefrac}
\newtheorem{theorem}{Theorem}[section]
\newtheorem{lemma}[theorem]{Lemma}
\newtheorem{definition}[theorem]{Definition}
\newtheorem{corollary}[theorem]{Corollary}
\newtheorem{claim}[theorem]{Claim}
\newcommand{\sse}{\subseteq}
\newcommand{\ZZ}{\mathbb{Z}}
\newcommand{\cE}{\mathcal{E}}
\newcommand{\subtree}{\Gamma}
\newcommand{\vote}{\gamma}
\newcommand{\eT}{T}
\newcommand{\xT}{\overline{T}}
\newcommand{\ST}{C}
\newcommand{\e}{\varepsilon}
\newcommand{\ExEx}{\operatorname{ExtraExp}}
\newcommand{\cost}{M}
\newcommand{\mem}{\operatorname{mem}}
\newcommand{\TreeX}{\textsc{TreeX}\xspace}
\newcommand{\AlgoKnownD}{\textsc{TreeX-KnownDist}\xspace}
\newcommand{\FullInfoX}{\textsc{FullInfoX}\xspace}
\title{Graph Searching with Predictions}
\author{
  Siddhartha Banerjee \and
  Vincent Cohen-Addad \and
  Anupam Gupta \and
  Zhouzi Li
}
\begin{document}
\maketitle

\begin{abstract}
  Consider an agent 
  exploring an unknown graph in search
  of some goal state. As it walks around the graph, it learns the
  nodes and their neighbors. The agent only knows where the goal state
  is when it reaches it. How do we reach this goal while moving only a
  small distance? This problem seems hopeless, even on trees of
  bounded degree, unless we give the agent some help. This setting
  with ``help'' often arises in exploring large search spaces (e.g.,
  huge game trees) where we assume access to some score/quality
  function for each node, which we use to guide us towards the
  goal. In our case, we assume the help comes in the form of \emph{distance
    predictions}: each node $v$ provides a prediction $f(v)$ of its
  distance to the goal vertex.  Naturally if these predictions are
  correct, we can reach the goal along a shortest path. What if the
  predictions are unreliable and some of them are erroneous? Can we
  get an algorithm whose performance relates to the error of the
  predictions?

  \medskip In this work, we consider the problem on trees  and give deterministic
  algorithms whose total movement cost is only $O(OPT + \Delta \cdot ERR)$,
  where $OPT$ is the distance from the start to the goal vertex,
  $\Delta$ the maximum degree, and the $ERR$ is the total number of
  vertices whose predictions are erroneous. We show this guarantee is
  optimal. We then consider a ``planning'' version of the problem
  where the graph and predictions are known at the beginning, so the
  agent can use this global information to devise a search strategy of
  low cost. For this planning version, we go beyond trees and give an
  algorithms which gets good performance on (weighted) graphs with bounded doubling dimension.
\end{abstract}


\section{Introduction}
\label{sec:introduction}

Consider an agent (say a robot) traversing an environment modeled as
an undirected graph $G = (V,E)$. It starts off at some \emph{root}
vertex $r$, and commences looking for a goal vertex $g$. However, the
location of this goal is initially unknown to the agent, who gets to
know it only when it visits vertex $g$. So the agent starts exploring
from $r$, 
visits various vertices $r = v_0, v_1, \cdots, v_t, \cdots$ in $G$ one
by one, until it reaches $g$. The cost it incurs at timestep $t$ is
the distance it travels to get from $v_{t-1}$ to $v_t$. How can the
agent minimize the total cost? This framework is very general,
capturing not only problems in robotic exploration, but also general
questions related to game tree search: how to reach a goal state with
the least effort? 


Since this is a question about optimization under uncertainty, we use
the notion of \emph{competitive analysis}: we relate the cost incurred
by the algorithm on an instance to the optimal cost incurred in
hindsight. The latter is just the distance $D := d(r,g)$ between the
start and goal vertices. Sadly, a little thought tells us that this
problem has very pessimistic guarantees in the absence of any further
constraints. For example, even if the graph is known to be a complete
binary tree and the goal is known to be at some distance $D$ from the
root, the adversary can force any algorithm to incur an expected cost
of $\Omega(2^D)$. Therefore the competitiveness is unbounded as $D$
gets large. This is why previous works in online algorithms enforced
topological constraints on the graph, such as restricting the graph to
be a path, or $k$ paths meeting at the root, or a
grid~\cite{Baezayates1993234}.

But in many cases (such as in game-tree search) we want to solve this
problem for broader classes of graphs---say for complete binary trees
(which were the bad example above), or even more general settings. The
redeeming feature in these settings is that we are not searching
blindly: the nodes of the graph come with estimates of their quality,
which we can use to search effectively. 
What are good algorithms in such settings? What can we prove about them?

In this paper we formalize these questions via the idea of
\emph{distance predictions}: each node $v$
gives a prediction $f(v)$ of its distance $d_G(v,g)$ to the goal
state. If these predictions are all correct, we can just ``walk
downhill''---i.e., starting with $v_0$ being the start node, we can
move at each timestep $t$ to a neighbor $v_t$ of $v_{t-1}$ with
$f(v_t) = f(v_{t-1})-1$. This reaches the goal along a shortest
path. However, getting perfect predictions seems unreasonable, so we
ask:
\begin{quote}
  \emph{What if a few of the predictions are incorrect?}  Can we
  achieve an ``input-sensitive'' or ``smooth'' or ``robust'' bound,
  where we incur a cost of $d(g,r) +$ some function of the
  prediction error?
\end{quote}

We consider two versions of the problem:
\begin{description}
\item \textbf{The Exploration Problem.} In this setting the graph $G$
  is initially unknown to the agent: it only knows the vertex
  $v_0 = r$, its neighbors $\partial v_0$, and the predictions on all
  these nodes. In general, at the beginning of time $t \geq 1$, it
  knows the vertices $V_{t-1} = \{v_0, v_1, \cdots, v_{t-1}\}$ visited
  in the past, all their neighboring vertices $\partial V_{t-1}$, and
  the predictions for all the vertices in
  $V_{t-1} \cup \partial V_{t-1}$. The agent must use this information
  to move to some unvisited neighbor (which is now called $v_t$),
  paying a cost of $d(v_{t-1}, v_t)$. It then observes the edges
  incident to $v_t$, along with the predictions for nodes newly
  observed.
  
\item \textbf{The Planning Problem.} This is a simpler version of
  the problem where the agent starts off knowing the entire graph $G$,
  as well as the predictions at all its nodes. It just does not know
  which node is the goal, and hence it must traverse the graph in some
  order. 
\end{description}
The cost in both cases is the total distance traveled by the agent
until it reaches the goal, and the competitive ratio is the ratio of this
quantity to the shortest path distance $d(r,g)$ from the root to the
goal. 

\subsection{Our Results}

Our first main result is for the (more challenging) exploration problem,
for the case of trees.

\begin{restatable}[Exploration]{theorem}{UnknownD}
  \label{thm:main}
  The (deterministic) \TreeX algorithm solves the graph exploration
  problem on trees in the presence of predictions: on any (unweighted)
  tree with maximum degree $\Delta$, for any constant $\delta>0$, the algorithm incurs a cost of
  \[ d(r,g)(1+\delta) + O(\Delta \cdot |\cE|/\delta), \] where 
  $\cE := \{v \in V \mid f(v) \neq d(v,g)\}$ is the set of vertices
  that give erroneous predictions.
\end{restatable}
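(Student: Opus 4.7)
The plan is to structure the analysis in two layers: first analyze an easier subroutine $\AlgoKnownD$ that is given an a priori upper estimate $D$ on the root–goal distance $d(r,g)$, and then wrap it in a geometric doubling scheme to get $\TreeX$. I expect to prove for the subroutine a guarantee of the form ``cost at most $O(D) + O(\Delta \cdot |\cE_D|)$,'' where $\cE_D$ is the set of erroneous vertices the subroutine actually interacts with. The outer doubling scheme would invoke $\AlgoKnownD$ with guesses $D_i = (1+\delta)^i$, and the theorem's bound would fall out of the geometric sum: the $(1+\delta)$ factor on $d(r,g)$ comes from the overshoot of the last guess over the true distance, and the $1/\delta$ factor in the error term comes from summing a geometric series of error costs over the failed guesses (with care taken so that errors encountered in earlier phases are not re-charged in later phases).

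For the $\AlgoKnownD$ subroutine itself, the natural strategy is: at each step maintain a pool of ``live'' candidate subtrees — maximal unexplored subtrees rooted at a known vertex whose predictions are still consistent (given the current budget $D$) with containing $g$ — and greedily visit the nearest live candidate along the already-known tree. Whenever a candidate is visited and fails to produce $g$, the predictions inside that candidate that previously looked consistent with ``$g$ lives here'' must have been wrong, exposing at least one vertex of $\cE$. The cost of entering, exhaustively exploring, and backing out of such a subtree should be bounded by $\Delta$ times the depth/size of the subtree, and this can be amortized against the new erroneous vertices exposed, yielding a per-error charge of $O(\Delta)$.

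Concretely, I would set up a charging scheme that partitions every edge traversal of $\TreeX$ into either a ``progress'' traversal — one that moves the agent closer along a shortest path to $g$, paid for by the $d(r,g)(1+\delta)$ budget — or a ``correction'' traversal, paid for by some $v \in \cE$. The $(1+\delta)$ slack is needed because the algorithm may follow a candidate that locally looks optimal but is slightly suboptimal, and the algorithm's commit-and-explore strategy is only willing to accept such slack relative to the current guess $D_i$. The charging invariant I would push is: once a subtree has been visited and ruled out, the vertices inside it cannot cause the algorithm to re-enter that subtree at a later time, so each $v \in \cE$ contributes at most $O(\Delta)$ per doubling phase, and $O(\Delta/\delta)$ in total after the geometric sum.

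The main obstacle will be the ``no re-fooling'' invariant in $\AlgoKnownD$: a priori, as more of the tree is revealed, a previously-refuted subtree's neighboring predictions might make it look attractive again, or the algorithm might oscillate between two candidates whose errors are intertwined. Handling this carefully — presumably by exploiting that in a \emph{tree} the set of live candidates has a clean hierarchical structure and that ``ruling out a candidate'' is monotone given the distance cap $D$ — is what lets us bound the amortized charge per error by $O(\Delta)$ rather than something larger. The doubling wrapper then only adds the expected $1/\delta$ factor in the error term and the $(1+\delta)$ multiplicative loss on $d(r,g)$, completing the proof.
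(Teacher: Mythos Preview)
Your proposal has two genuine gaps, one in each layer.

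\textbf{The subroutine.} Your charging scheme asserts that whenever a candidate subtree is entered and fails to contain $g$, ``the predictions inside that candidate \ldots\ must have been wrong, exposing at least one vertex of $\cE$.'' This is false. Take the paper's lopsided example: root $r$ has a complete binary tree on one side and a path to $g$ on the other, with the ``null'' prediction $f(v)=D+d(r,v)$ everywhere. Every vertex in the binary tree has a \emph{correct} prediction; the only erroneous vertices are the $D$ vertices on the $r$--$g$ path. So entering and exhausting a branch of the binary tree exposes no errors inside it, and your per-failure charge has nothing to pay it. This is exactly why the paper does not use a nearest-live-candidate greedy: instead it introduces \emph{anchors} (for each visited $u$, the ancestor at level $\tfrac12(D+\ell(u)-f(u))$, which is where $u$ ``votes'' the $r$--$g$ path branches off) and \emph{loads} $\sigma_i^t(v)$, and balances loads across sibling subtrees via the \emph{critical} condition. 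The point is that exploration of a wrong subtree $C_i(v)$ is paid for not by errors inside it, but by the (few) nodes in the \emph{correct} sibling $C_1(v)$ that anchor to $v$ plus the nodes in $C_i(v)$ whose anchor is \emph{not} $v$; the balancing enforces this bookkeeping. Your proposal has no mechanism that plays this role, and the ``no re-fooling'' monotonicity you hope for does not by itself yield the $O(\Delta|\cE|)$ bound.

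\textbf{The wrapper.} You propose to guess $D_i=(1+\delta)^i$ and feed an \emph{upper estimate} to the subroutine. The paper explicitly notes that its subroutine needs $D=d(r,g)$ \emph{exactly} (the anchor level formula breaks otherwise), and that a direct guess-and-double does not seem to work. More quantitatively, even granting a subroutine with cost $O(D)+O(\Delta|\cE|)$ for $D\ge d(r,g)$, summing the failed phases gives a leading term of order $d(r,g)/\delta$, not $d(r,g)(1{+}\delta)$; and when $|\cE|=0$ there are no errors to absorb the cost of the undershooting phases. The paper's wrapper is a different idea: it grows an exploration \emph{budget} $B_\rho=(c_1+\beta)^\rho(2\Delta{+}1)$, then takes a centroid $r_\rho$ of the visited tree, splits it into two large subtrees, and lets each subtree \emph{vote} (majority of $f(u)-d(u,r_\rho)$) on $d(r_\rho,g)$. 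Once $B_\rho\ge 4|\cE|(2\Delta{+}1)$ both subtrees have a correct majority and at most one contains $g$, so the larger vote equals $d(r_\rho,g)$ exactly; from then on the known-distance subroutine applies with the tight constant $d(r_\rho,g)+O(\Delta|\cE|)$. The $(1{+}\delta)$ and $1/\delta$ factors come from choosing $\beta=\Theta(1/\delta)$ in this budget growth, not from a doubling overshoot on $D$.
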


One application of the above theorem is for the layered graph traversal problem (see \S\ref{sec:related-work} for a complete definition). 

\begin{corollary}[Robustness and Consistency for the Layered Graph Traversal problem.]
There exists an algorithm that achieves the following guarantees for the layered graph traversal problem in the presence of predictions: given an instance with maximum degree $\Delta$ and width $k$, for any constant $\delta > 0$, the algorithm incurs an expected cost of at most
$\min(O(k^2 \log \Delta) \, OPT, OPT + O(\Delta |\cE|))$.
\end{corollary}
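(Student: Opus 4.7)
The plan is to obtain the corollary by combining two components in a standard best-of-both-worlds fashion. The first component, $A_{\text{pred}}$, is an instance of \TreeX (\cref{thm:main}) applied to a tree $T$ derived from the layered graph; this will yield the consistency term $OPT + O(\Delta |\cE|)$. The second component, $A_{\text{rob}}$, is any off-the-shelf randomized algorithm for layered graph traversal achieving expected competitive ratio $O(k^2 \log \Delta)$; such a ratio is attainable, for instance, by combining a known $O(k^2)$-competitive algorithm with an $O(\log \Delta)$-distortion tree-embedding of the width-$k$ metric. A meta-algorithm then combines the two so that the total expected cost lies within a constant factor of the smaller of the two guarantees, which produces the $\min(\cdot,\cdot)$ bound.

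To build $A_{\text{pred}}$ I would unroll the layered graph $G$ into a tree $T$ whose nodes are the root-to-vertex walks that a downhill (non-increasing in $f$) search could realize, with each tree-node inheriting the prediction of its terminal graph-vertex. The width-$k$ hypothesis is essential: it forces only $O(k)$ distinct reachable prefixes per layer, so each graph-vertex with an erroneous prediction gives rise to only $O(1)$ erroneous tree-vertices, while $T$'s maximum degree remains $O(\Delta)$. Applying \cref{thm:main} with $\delta$ set to an absolute constant on this tree then yields a walk in $G$ of cost $O(OPT) + O(\Delta |\cE|)$, which reads as $OPT + O(\Delta |\cE|)$ after absorbing constants into the $O(\cdot)$.

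The meta-algorithm proceeds in phases $i = 1, 2, \ldots$. In phase $i$ it runs $A_{\text{pred}}$ from the root up to cost $2^i$, retraces to the root, then runs $A_{\text{rob}}$ (with random coins fixed once and reused across phases) from the root up to cost $2^i$ and retraces, halting as soon as either finds the goal. A geometric-series summation gives total cost $O(\min(\mathrm{cost}(A_{\text{pred}}),\mathrm{cost}(A_{\text{rob}})))$; taking expectations and using $E[\min(\cdot,\cdot)] \le \min(\cdot, E[\cdot])$ yields the stated bound. The main technical burden is the tree-reduction step for $A_{\text{pred}}$: one has to verify that a single erroneous prediction in $G$ does not proliferate into many erroneous nodes in $T$, which is exactly where the width-$k$ hypothesis is used. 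Without this bound, the unrolled tree could contain exponentially many copies of a single graph-vertex and the consistency guarantee would collapse, so pinning down this accounting carefully is the crux of the argument.
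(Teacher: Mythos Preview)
Your proposal has a genuine gap stemming from a misunderstanding of the layered graph traversal model. In the paper's formulation (see \S\ref{sec:related-work}), the input to layered graph traversal \emph{is already a tree} revealed online: at each step some leaves die and others spawn children. Consequently the paper's proof is a one-liner: run \TreeX directly on this tree (no conversion needed), and combine it with the $O(k^2\log\Delta)$-competitive algorithm of Bubeck--Coester--Rabani to get robustness. There is no ``unrolling,'' and hence no issue of a single erroneous graph-vertex proliferating into many erroneous tree-vertices; the set $\cE$ in the corollary is literally the set of erroneous tree-nodes.

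Your tree-unrolling step, by contrast, is both unnecessary and unsubstantiated. The claim that ``the width-$k$ hypothesis forces only $O(k)$ distinct reachable prefixes per layer, so each graph-vertex with an erroneous prediction gives rise to only $O(1)$ erroneous tree-vertices'' does not follow: even with width $k$, the number of root-to-vertex walks in a layered DAG can be exponential, and restricting to walks that are ``downhill in $f$'' entangles the tree construction with the (possibly adversarial) predictions in a way you have not analyzed. This is precisely the step you flagged as the crux, and as written it would not go through. Separately, the $O(k^2\log\Delta)$ bound is a direct result of Bubeck--Coester--Rabani, not something obtained by composing an $O(k^2)$ algorithm with an $O(\log\Delta)$ tree embedding; the latter route is not known to yield that ratio. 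Your doubling meta-algorithm for combining the two components is fine and is a standard way to realize the ``in combination with'' that the paper leaves implicit.
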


The proof of the above corollary is immediate: Since the
input is a tree (with some additional structure that we do not
require) that is revealed online, we can use the algorithm from \Cref{thm:main}. Hence, given an
instance $I$ of layered graph traversal (with predictions), we can use
the algorithm from \Cref{thm:main} in combination with the
\cite{https://doi.org/10.48550/arxiv.2202.04551}, thereby being
both \emph{consistent} and \emph{robust}: if the predictions are of high quality, then our algorithm ensures that the cost will be nearly optimal; otherwise if the predictions are useless, \cite{https://doi.org/10.48550/arxiv.2202.04551}'s algorithm gives an upper bound in the worst case.

Moreover, we show that the guarantee
obtained in  \Cref{thm:main}  is the best possible, up to
constants. 

\begin{theorem}[Exploration Lower Bound]
  \label{thm:lower-bound}
  Any algorithm (even randomized) for the graph exploration problem
  with predictions must incur a cost of at least
  $\max(d(r,g),\Omega(\Delta\cdot |\cE|))$. 
\end{theorem}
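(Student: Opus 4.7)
The $d(r,g)$ part of the lower bound is immediate: any agent that reaches $g$ must traverse a path of length at least $d(r,g)$, so I just note this and move on.

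For $\Omega(\Delta\cdot|\cE|)$, my plan is to exhibit a distribution of hard instances and apply Yao's principle. The instance family I would build is a ``caterpillar'' tree with a spine $v_0 = r, v_1, \ldots, v_k = g$ of length $k$; at each $v_i$ with $i < k$ I attach $\Delta - 1$ additional leaf children, so that $v_i$ has exactly $\Delta$ children and the maximum degree is $\Theta(\Delta)$. The key design choice is the predictions: I assign every child of $v_i$ (both $v_{i+1}$ and the $\Delta - 1$ leaf siblings) the identical prediction value $k - i + 1$. For each leaf this is the correct graph distance to $g$; but $v_{i+1}$ is actually at distance $k - i - 1$, so $f(v_{i+1})$ is wrong (by $2$). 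This gives $\cE = \{v_1, \ldots, v_k\}$ with $|\cE| = k = d(r,g)$, and it makes the agent's view at $v_i$ completely symmetric across the $\Delta$ children. To handle randomized algorithms, at each level $i$ I independently place $v_{i+1}$ uniformly at random among the $\Delta$ children of $v_i$.

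The heart of the argument is an indistinguishability claim. When the agent first arrives at $v_i$, it sees $\Delta$ unvisited children with identical predictions and gains no further information until it physically visits one: a leaf child reveals itself only by being degree $1$, forcing a round-trip of cost $2$ back to $v_i$, while $v_{i+1}$ leads into an isomorphic copy of the same gadget one level deeper. Since the only route to $g$ threads the spine, every $v_i$ must be visited, and by the uniform random placement the expected number of leaf siblings visited at $v_i$ before $v_{i+1}$ is $(\Delta - 1)/2$. Hence any deterministic algorithm incurs expected cost $\Omega(\Delta)$ per level, and $\Omega(k\Delta) = \Omega(\Delta |\cE|)$ in total; Yao's principle then lifts this to an expected-cost lower bound against randomized algorithms on some fixed instance in the support.

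The main obstacle I anticipate is ruling out creative movement strategies — for example, postponing some of $v_i$'s children, descending further along the spine, then returning to $v_i$ later — from helping the agent. I would handle this with a charging argument: any step of the agent that is not a first visit to an unvisited child at the current deepest reached spine-vertex is pure overhead and can only increase cost, while the random placements at distinct levels are independent, so conditioning on the algorithm's entire past leaves the identity of $v_{i+1}$ uniformly distributed among the still-unvisited children at $v_i$. Thus the $\Omega(\Delta)$ expected exploration cost per level cannot be reduced by clever scheduling, and summing over levels closes the argument.
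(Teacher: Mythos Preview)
Your construction is correct and yields the same $\Omega(\Delta|\cE|)$ bound, but the paper takes a different and shorter route. The paper uses a \emph{spider}: the root $r$ has $\Delta$ disjoint paths of length $D$, the goal sits at one of the $\Delta$ leaves, and every vertex receives the ``null'' prediction $f(v)=D+d(r,v)$. These predictions are computable by the agent from the tree structure alone and hence carry zero information; the only erroneous vertices are the $D$ nodes on the $r$--$g$ path, so $|\cE|=D$. The lower bound then reduces to a single-layer guessing argument---the agent must identify which of $\Delta$ symmetric leaves is the goal, so any (randomized) algorithm pays $\Omega(\Delta D)=\Omega(\Delta|\cE|)$ in expectation. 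Your caterpillar instead stacks $k$ local symmetry gadgets, one per spine vertex, and sums an $\Omega(\Delta)$ cost over $k$ levels. The paper's construction sidesteps the ``creative movement'' issue you raise entirely (there is only one decision layer, so no interleaving is possible), whereas your approach is equally valid but genuinely needs the independence-across-levels charging argument you sketch; that argument is correct but is the part that would require the most care to make fully rigorous.
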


\begin{proof}
  The lower bound of $d(r,g)$ is immediate. For the second term,
  consider the setting where the root $r$ has $\Delta$ disjoint paths
  of length $D$ leaving it, and the goal is guaranteed to be at one of
  the leaves. Suppose we are given the ``null'' prediction, where each vertex
  predicts $f(v) = D+\ell(v)$ (where $\ell(v)$ is the distance of the vertex from the
  root, which we henceforth refer to as the \emph{level} of the vertex). The erroneous vertices are the $D$ vertices along the $r$-$g$
  path. Since the predictions do not give any signal at all (they can
  be generated by the algorithm itself), this is a problem of guessing
  which of the leaves is the goal, and any algorithm, even randomized,
  must travel $\Omega(\Delta\cdot D) = \Omega(\Delta \cdot |\cE|)$
  before reaching the goal.
\end{proof}


Our next set of results are for the planning problem, where we know
the graph and the predictions up-front, and must come up with a
strategy with this global information.

\begin{restatable}[Planning]{theorem}{FullInfo}
  \label{thm:full-info}
  For the planning version of the graph exploration problem, 
  there is an algorithm that incurs cost at most 
  \begin{enumerate}[label=(\roman*), nolistsep]
  \item $d(r,g) + O(\Delta \cdot |\cE|)$ if the graph is a tree, where $\Delta$ is the maximal degree. 
  \item $d(r,g) + 2^{O(\alpha)} \cdot O(|\cE|^2)$ where $\alpha$ is the doubling dimension of $G$.
  \end{enumerate}
  Again, $\cE$ is the set of nodes with incorrect predictions.
\end{restatable}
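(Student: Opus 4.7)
The plan is a best-first search exploiting the global information: at the current location $u$, prioritise every unvisited $v$ by the $A^{\ast}$-style score $\pi(v):=d(u,v)+f(v)$, and commit to walking (one edge at a time, re-evaluating) toward the minimiser. The justification is the observation that if $v\notin\cE$ then $\pi(v)=d(u,v)+d(v,g)\ge d(u,g)$, with equality exactly when $v$ lies on the $u$-to-$g$ geodesic, so the true goal attains the minimum of $\pi$ among correctly predicted vertices (assuming $g\notin\cE$; the case $g\in\cE$ is absorbed into $|\cE|$). Consequently, any vertex that strictly beats $g$'s score must lie in $\cE$: every misdirection is witnessed by an erroneous vertex.

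\textbf{Analysis for part (i).} Using the above, I plan to split the algorithm's movement into steps that coincide with the $r$-to-$g$ shortest path (charged to $d(r,g)$) and ``misdirected'' steps (charged to erroneous vertices). The main obstacle is to prove that the charge on each erroneous vertex is $O(\Delta)$ rather than the naive $O(\Delta\cdot\mathrm{depth})$. The plan is a ``once-ruled-out-stays-ruled-out'' invariant: after back-tracking from a misleading subtree, the responsible erroneous vertex is removed from the candidate pool and cannot mislead the search again; combined with a tie-breaking rule that prefers finishing the current subtree before swapping to a sibling, one shows that the number of edges wasted per erroneous vertex is bounded by its degree, giving the claimed $d(r,g)+O(\Delta\,|\cE|)$ overhead.

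\textbf{My approach for part (ii).} For a graph of doubling dimension $\alpha$, the plan is to lift the $\pi$-based strategy onto a hierarchical net. Build a sequence of $2^{i}$-nets $N_{i}$ of $G$; the doubling property yields $|B(v,2^{i+1})\cap N_{i}|\le 2^{O(\alpha)}$, so the local fan-out at each scale is $2^{O(\alpha)}$. The algorithm proceeds scale by scale: at scale $i$, walk to the net point of smallest $\pi$ within the currently feasible ball, then either terminate or refine to scale $i-1$ within a ball of appropriate radius around the reached point. A misdirection at scale $i$ costs at most the scale-$i$ radius, each erroneous vertex can cause at most one misdirection per scale it survives, and the same ``witnessing'' argument bounds the total number of misdirections across all scales by $O(|\cE|)$. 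Summing, with an extra $2^{O(\alpha)}$ per scale for the local fan-out, yields the $2^{O(\alpha)}\cdot O(|\cE|^{2})$ overhead.

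\textbf{Main obstacle.} The principal difficulty in both parts is the charging/amortisation: in part~(i) one must show that a single erroneous vertex does not force the algorithm to pay for it repeatedly across many excursions, and in part~(ii) one must tune the scale-refinement rule so that the per-scale radii actually telescope to $O(|\cE|)$ (rather than incurring a spurious $\log$ factor) and so that erroneous vertices at different scales are not double-counted when summing the $|\cE|$ misdirections over the $O(|\cE|)$ relevant scales.
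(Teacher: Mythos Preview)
Your $A^\ast$-style score $\pi(v)=d(u,v)+f(v)$ has a genuine gap: a single erroneous vertex with a sufficiently small $f$-value can pull the search arbitrarily far off course, and your ``once-ruled-out'' invariant does not help because the damage is done on the \emph{first} (and only) visit. Concretely, take a root $r$ with two paths of length $D$, one ending at $g$ and the other at a leaf $x$, with every prediction correct except $f(x)=-1$. Then $|\cE|=1$ and $\Delta=2$, but from $r$ one has $\pi(x)=D-1<D=\pi(g)$, and this gap only widens as you walk toward $x$; your algorithm traverses the entire wrong arm and back, paying $2D$ extra, whereas the target bound is $D+O(1)$. The charge on $x$ is $\Theta(D)$, not $O(\Delta)$, so the claim ``the number of edges wasted per erroneous vertex is bounded by its degree'' is false. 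Part~(ii) inherits the same flaw, since it is built on the same score: one bad prediction at the finest scale can still send the search across the whole graph before any net-refinement kicks in.

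What the paper does instead is discard $f(v)$ as a direct signal and work with the \emph{implied error} $\varphi(v):=|\{u:f(u)\neq d(u,v)\}|$, the number of predictions that would be wrong if $v$ were the goal. The crucial structural fact is $d(u,v)\le\varphi(u)+\varphi(v)$: vertices of small implied error are automatically close to one another. In the example above $\varphi(x)=\Theta(D)$ while $\varphi(g)=|\cE|=1$, so $x$ is never confused with $g$. The algorithm then buckets vertices by $\varphi$ in geometrically growing thresholds $2^\rho$ and tours each bucket via a Steiner tree; the distance bound controls the tree size, yielding $|C_\rho|=O(\Delta\cdot 2^\rho)$ on trees and $|C_\rho|=2^{O(\alpha)}\cdot 2^{2\rho}$ in the doubling case (the latter via an $\varepsilon$-net of the bucket). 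This is the missing idea: one must aggregate \emph{all} predictions to assess each candidate, rather than trust any individual prediction $f(v)$.
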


Note that result~(i) is very similar to that of~\Cref{thm:main} (for
the harder exploration problem): the differences are that we do not
lose any constant in the distance $d(r,g)$ term, and also that the 
algorithm used here (for the planning problem) is simpler. Moreover,
the lower bound from \Cref{thm:lower-bound} continues to hold in the
planning setting, since the knowledge of the graph and the predictions
does not help the algorithm; hence result~(i) is tight.

We do not yet know an analog of result~(ii) for 
the exploration problem: extending \Cref{thm:main} to general graphs, even those with bounded doubling metrics remains a tantalizing open
problem. Moreover, we currently do not have a lower bound matching
result~(ii); indeed, we conjecture that a cost of 
$d(r,g) + 2^{O(\alpha)} \cdot |\cE|$ should be achievable.
We leave these as questions for future investigation.

\subsection{Our Techniques}
\label{sec:our-techniques}

To get some intuition for the problem, consider the case where given a
tree and a guarantee that the goal is at distance $D$ from the start
node $r$. Suppose each node $v$ gives the ``null'' prediction of
$f(v)=D+d(r,v)$. In case the tree is a complete binary tree, then
these predictions carry no information and we would have to
essentially explore all nodes within distance $D$. But note that the
predictions for about half of these nodes are incorrect, so these
erroneous nodes can pay for this exploration. But now consider a
``lopsided'' example, with a binary tree on one side of the root, and
a path on the other (\Cref{fig: lopsided tree}). Suppose the goal is
at distance $D$ along the path. In this case, only the path nodes are
incorrect, and we only have $O(D + |\cE|) = O(D)$ budget for the
exploration. In particular, we must explore more aggressively along
the path, and balance the exploration on both sides of the root. But
such gadgets can be anywhere in the tree, and the predictions can be
far more devious than the null-prediction, so we need to generalize
this idea.


\begin{wrapfigure}{R}{0.4\textwidth}
  \centering
  \scalebox{1}{    \tikzset{every picture/.style={line width=0.75pt}} 

\begin{tikzpicture}[x=0.75pt,y=0.75pt,yscale=-1,xscale=1]

\draw  [dash pattern={on 4.5pt off 4.5pt}]  (210,150) -- (257.27,156.6) ;
\draw    (183.7,106.3) -- (210,100) ;
\draw    (183.7,140) -- (210,150) ;
\draw    (183.7,140) -- (210,130) ;
\draw  [dash pattern={on 4.5pt off 4.5pt}]  (210,100) -- (256.3,93.7) ;
\draw    (143.7,116.3) -- (183.7,106.3) ;
\draw  [dash pattern={on 4.5pt off 4.5pt}]  (210,190) -- (257.6,209.3) ;
\draw  [dash pattern={on 4.5pt off 4.5pt}]  (210,190) -- (257.6,196.3) ;
\draw  [dash pattern={on 4.5pt off 4.5pt}]  (210,170) -- (257.6,183.3) ;
\draw  [dash pattern={on 4.5pt off 4.5pt}]  (210,170) -- (257.6,169.3) ;
\draw    (183.7,172.6) -- (210,190) ;
\draw    (183.7,172.6) -- (210,170) ;
\draw    (143.7,153.7) -- (183.7,140) ;
\draw    (143.7,153.7) -- (183.7,172.6) ;
\draw  [fill={rgb, 255:red, 208; green, 2; blue, 27 }  ,fill opacity=1 ] (180,172.6) .. controls (180,170.56) and (181.66,168.9) .. (183.7,168.9) .. controls (185.74,168.9) and (187.4,170.56) .. (187.4,172.6) .. controls (187.4,174.64) and (185.74,176.3) .. (183.7,176.3) .. controls (181.66,176.3) and (180,174.64) .. (180,172.6) -- cycle ;
\draw  [fill={rgb, 255:red, 208; green, 2; blue, 27 }  ,fill opacity=1 ] (180,140) .. controls (180,137.96) and (181.66,136.3) .. (183.7,136.3) .. controls (185.74,136.3) and (187.4,137.96) .. (187.4,140) .. controls (187.4,142.04) and (185.74,143.7) .. (183.7,143.7) .. controls (181.66,143.7) and (180,142.04) .. (180,140) -- cycle ;
\draw    (95,135) -- (143.7,116.3) ;
\draw    (95,135) -- (143.7,153.7) ;
\draw  [fill={rgb, 255:red, 74; green, 144; blue, 226 }  ,fill opacity=1 ] (91.3,135) .. controls (91.3,132.96) and (92.96,131.3) .. (95,131.3) .. controls (97.04,131.3) and (98.7,132.96) .. (98.7,135) .. controls (98.7,137.04) and (97.04,138.7) .. (95,138.7) .. controls (92.96,138.7) and (91.3,137.04) .. (91.3,135) -- cycle ;
\draw  [fill={rgb, 255:red, 208; green, 2; blue, 27 }  ,fill opacity=1 ] (140,153.7) .. controls (140,151.66) and (141.66,150) .. (143.7,150) .. controls (145.74,150) and (147.4,151.66) .. (147.4,153.7) .. controls (147.4,155.74) and (145.74,157.4) .. (143.7,157.4) .. controls (141.66,157.4) and (140,155.74) .. (140,153.7) -- cycle ;
\draw  [fill={rgb, 255:red, 208; green, 2; blue, 27 }  ,fill opacity=1 ] (140,116.3) .. controls (140,114.26) and (141.66,112.6) .. (143.7,112.6) .. controls (145.74,112.6) and (147.4,114.26) .. (147.4,116.3) .. controls (147.4,118.34) and (145.74,120) .. (143.7,120) .. controls (141.66,120) and (140,118.34) .. (140,116.3) -- cycle ;
\draw  [fill={rgb, 255:red, 255; green, 255; blue, 255 }  ,fill opacity=1 ] (253.9,183.3) .. controls (253.9,181.26) and (255.56,179.6) .. (257.6,179.6) .. controls (259.64,179.6) and (261.3,181.26) .. (261.3,183.3) .. controls (261.3,185.34) and (259.64,187) .. (257.6,187) .. controls (255.56,187) and (253.9,185.34) .. (253.9,183.3) -- cycle ;
\draw  [fill={rgb, 255:red, 255; green, 255; blue, 255 }  ,fill opacity=1 ] (253.9,169.3) .. controls (253.9,167.26) and (255.56,165.6) .. (257.6,165.6) .. controls (259.64,165.6) and (261.3,167.26) .. (261.3,169.3) .. controls (261.3,171.34) and (259.64,173) .. (257.6,173) .. controls (255.56,173) and (253.9,171.34) .. (253.9,169.3) -- cycle ;
\draw  [fill={rgb, 255:red, 255; green, 255; blue, 255 }  ,fill opacity=1 ] (253.9,209.3) .. controls (253.9,207.26) and (255.56,205.6) .. (257.6,205.6) .. controls (259.64,205.6) and (261.3,207.26) .. (261.3,209.3) .. controls (261.3,211.34) and (259.64,213) .. (257.6,213) .. controls (255.56,213) and (253.9,211.34) .. (253.9,209.3) -- cycle ;
\draw  [fill={rgb, 255:red, 255; green, 255; blue, 255 }  ,fill opacity=1 ] (253.9,196.3) .. controls (253.9,194.26) and (255.56,192.6) .. (257.6,192.6) .. controls (259.64,192.6) and (261.3,194.26) .. (261.3,196.3) .. controls (261.3,198.34) and (259.64,200) .. (257.6,200) .. controls (255.56,200) and (253.9,198.34) .. (253.9,196.3) -- cycle ;
\draw  [fill={rgb, 255:red, 208; green, 2; blue, 27 }  ,fill opacity=1 ] (180,106.3) .. controls (180,104.26) and (181.66,102.6) .. (183.7,102.6) .. controls (185.74,102.6) and (187.4,104.26) .. (187.4,106.3) .. controls (187.4,108.34) and (185.74,110) .. (183.7,110) .. controls (181.66,110) and (180,108.34) .. (180,106.3) -- cycle ;
\draw  [fill={rgb, 255:red, 126; green, 211; blue, 33 }  ,fill opacity=1 ] (252.6,93.7) .. controls (252.6,91.66) and (254.26,90) .. (256.3,90) .. controls (258.34,90) and (260,91.66) .. (260,93.7) .. controls (260,95.74) and (258.34,97.4) .. (256.3,97.4) .. controls (254.26,97.4) and (252.6,95.74) .. (252.6,93.7) -- cycle ;
\draw  [dash pattern={on 4.5pt off 4.5pt}]  (210,150) -- (257.27,142.6) ;
\draw  [dash pattern={on 4.5pt off 4.5pt}]  (210,130) -- (257.27,130.6) ;
\draw  [dash pattern={on 4.5pt off 4.5pt}]  (210,130) -- (257.27,116.6) ;
\draw  [fill={rgb, 255:red, 255; green, 255; blue, 255 }  ,fill opacity=1 ] (253.57,130.6) .. controls (253.57,128.56) and (255.22,126.9) .. (257.27,126.9) .. controls (259.31,126.9) and (260.97,128.56) .. (260.97,130.6) .. controls (260.97,132.64) and (259.31,134.3) .. (257.27,134.3) .. controls (255.22,134.3) and (253.57,132.64) .. (253.57,130.6) -- cycle ;
\draw  [fill={rgb, 255:red, 255; green, 255; blue, 255 }  ,fill opacity=1 ] (253.57,116.6) .. controls (253.57,114.56) and (255.22,112.9) .. (257.27,112.9) .. controls (259.31,112.9) and (260.97,114.56) .. (260.97,116.6) .. controls (260.97,118.64) and (259.31,120.3) .. (257.27,120.3) .. controls (255.22,120.3) and (253.57,118.64) .. (253.57,116.6) -- cycle ;
\draw  [fill={rgb, 255:red, 255; green, 255; blue, 255 }  ,fill opacity=1 ] (253.57,156.6) .. controls (253.57,154.56) and (255.22,152.9) .. (257.27,152.9) .. controls (259.31,152.9) and (260.97,154.56) .. (260.97,156.6) .. controls (260.97,158.64) and (259.31,160.3) .. (257.27,160.3) .. controls (255.22,160.3) and (253.57,158.64) .. (253.57,156.6) -- cycle ;
\draw  [fill={rgb, 255:red, 255; green, 255; blue, 255 }  ,fill opacity=1 ] (253.57,142.6) .. controls (253.57,140.56) and (255.22,138.9) .. (257.27,138.9) .. controls (259.31,138.9) and (260.97,140.56) .. (260.97,142.6) .. controls (260.97,144.64) and (259.31,146.3) .. (257.27,146.3) .. controls (255.22,146.3) and (253.57,144.64) .. (253.57,142.6) -- cycle ;

\draw (80,130) node [anchor=north west][inner sep=0.75pt]  [font=\small]  {$r$};
\draw (130,155) node [anchor=north west][inner sep=0.75pt]  [font=\small]  {$a$};
\draw (130,105) node [anchor=north west][inner sep=0.75pt]  [font=\small]  {$b$};
\draw (265,90) node [anchor=north west][inner sep=0.75pt]  [font=\small]  {$g$};

\end{tikzpicture}}
  \caption{The subtree rooted on $r$'s child $a$ is a complete binary tree, while the subtree rooted on $b$ is a path to the goal $g$. ``Null'' predictions $f(v)=D+d(r,v)$ at every $v$ have a total error $D$ (only nodes on the path from $r$ to $g$ have errors on predictions).}
\vspace{-0.5cm}
\label{fig: lopsided tree}
\end{wrapfigure}
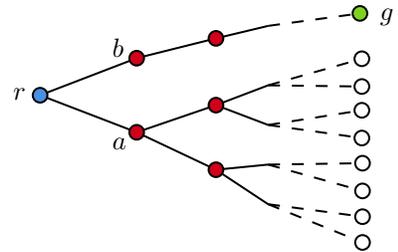

We start off with a special case which we call the
\emph{known-distance} case.  This is almost the same as the general
problem, but with the additional guarantee that the prediction of the
root is correct. Equivalently, we are given the distance $D := d(r,g)$
of the goal vertex from the root/starting node $r$. For this setting,
we can get the following very sharp result:

\begin{restatable}[Known-Distance Case]{theorem}{KnownD}
  \label{thm:known-D}
  The \AlgoKnownD algorithm solves the graph exploration problem
  in the known-distance case, incurring a cost of at most $d(r,g) +
  O(\Delta |\cE|)$.
\end{restatable}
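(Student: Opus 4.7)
The plan is to design a deterministic depth-first exploration whose per-step choices are driven by the predictions $f$, and then bound the total cost via a charging scheme that attributes at most $O(\Delta)$ units of ``extra'' movement to each erroneous vertex. Let $\ell(v) = d(r,v)$ denote a vertex's level; under the known-distance assumption, $D$ serves as a hard budget, since every $r$-to-$g$ path has length exactly $D$, so a prediction $f(u) = k$ is only consistent with the goal if $u$ has an unexplored child $u'$ with $f(u') = k-1$ and $\ell(u') = \ell(u)+1$ (together with analogous descending conditions further down).

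\emph{Algorithm and consistency.} At each visited vertex $v$, \AlgoKnownD descends preferentially into an unexplored child whose prediction continues an $f$-decreasing, level-increasing sequence compatible with reaching the goal at depth $D$. If the subtree it enters eventually contradicts these invariants (for instance, it terminates without producing the promised distance, or admits no $f$-decreasing continuation), the algorithm backtracks one edge and tries the next viable child. First I would verify the \emph{consistency} base case: if $\cE = \emptyset$, the predictions induce the unique shortest $r$-to-$g$ path via the rule ``go to the neighbor whose $f$-value is one smaller,'' and the algorithm follows it without backtracking, paying exactly $d(r,g)$.

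\emph{Charging.} The bulk of the argument partitions the edge traversals of the algorithm into (i)~traversals of edges on the $r$-to-$g$ path, contributing at most $d(r,g)$, and (ii)~``wasted'' traversals inside subtrees $T_v$ that do not contain $g$. For each such wasted exploration, I would identify a witness vertex $u \in T_v$ whose prediction the algorithm relied on to enter, or to keep descending in, $T_v$, and that must have been erroneous: had every prediction in $T_v$ been correct, the consistency check would have ruled out $T_v$ at $v$ itself before entry. A direct amortization then shows that each wasted ``detour'' costs $O(\Delta)$ edges and is charged to a fresh witness in $\cE$; summing gives the $O(\Delta \cdot |\cE|)$ overhead.

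\emph{Main obstacle.} The delicate point will be establishing injectivity of the charging: distinct wasted detours must map to distinct erroneous vertices, and each vertex in $\cE$ must absorb only $O(\Delta)$ total wasted cost. This requires the backtracking rule to ``retire'' the prediction that triggered each mistaken descent, so that the same error is not re-blamed for unboundedly many detours; and it requires the algorithm to never commit to more than $O(1)$ descending paths from any fixed vertex before falling back, so that the $\Delta$ factor enters only once per error rather than being multiplied by depth. Once these combinatorial guarantees are in place, the bound $d(r,g) + O(\Delta \, |\cE|)$ follows by summing the $O(\Delta)$ contribution over the vertices of $\cE$.
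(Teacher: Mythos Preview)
Your proposal has a genuine gap: the greedy ``follow an $f$-decreasing child, backtrack on local inconsistency'' rule is too weak to achieve the bound, and the charging scheme you sketch cannot be made injective in the way you need. The paper's own lopsided example (Figure~1) already breaks it. Take a root $r$ with a path of length $D$ to $g$ on one side and a complete binary tree of depth $D$ on the other, and give every vertex the null prediction $f(v)=D+\ell(v)$. Then every node in the binary tree is \emph{correct} and every node on the $r$--$g$ path is \emph{wrong}, so $|\cE|=D$. At $r$ (and in fact at every vertex) no child is $f$-decreasing, so your descent rule never fires; your proposal does not say what the algorithm does in this regime. Worse, any exploration of the binary-tree side contains \emph{no} erroneous vertices to serve as witnesses, so your charging---which blames a detour into $T_v$ on an error inside $T_v$---has nothing to charge to. The only errors sit on the path to $g$, and a purely local scheme has no mechanism by which those $D$ errors can bound exploration of an error-free sibling subtree of size $2^D$.

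What the paper actually does is quite different and is precisely designed to handle this situation. Each visited node $u$ is assigned an \emph{anchor} $\tau(u)$: its ancestor at level $\tfrac12(D+\ell(u)-f(u))$, which equals the least common ancestor of $u$ and $g$ whenever $f(u)$ is correct. The algorithm then maintains, for every vertex $v$ and each child subtree $C_i(v)$, the \emph{load} $\sigma_i(v)$ counting how many visited nodes in $C_i(v)$ anchor to $v$; whenever one subtree's load dominates (the \emph{critical} condition), the DFS is interrupted and resumed in a least-loaded sibling. This global load-balancing is what lets errors on the $r$--$g$ path pay for exploration in error-free side subtrees: in the lopsided example every node anchors to $r$, so the loads on the two sides of $r$ are kept within a constant factor, forcing only $O(D)$ nodes of the binary tree to be visited. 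The analysis then bounds extra exploration by $O(\Delta|\cE|)$ via a counting argument over anchors (Lemmas~3.9--3.12), and separately bounds movement cost by extra exploration (Lemma~3.8). None of this structure is present in, or recoverable from, your outline; the ``retire the triggering prediction'' idea does not substitute for it, because the triggering predictions may all be correct.
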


Hence in the zero-error case, or in low-error cases where
$|\cE| \ll D$, the algorithm loses very little compared to the
optimal-in-hindsight strategy, which just walks from the root to the
goal vertex, and incurs a cost of $D$. This algorithm is inspired by
the ``lopsided'' example above: it not only balances the exploration on
different subtrees, but also at multiple levels. To generalize this
idea from predictions, we introduce the concepts of \emph{anchor} and
\emph{loads} (see \S\ref{sec:definitions}). At a high level, for each
node we consider the subtrees rooted at its children, and identify
subset of nodes in each of these subtrees which are erroneous
depending on which subtree contains the goal $g$. We ensure that these
sets have near-equal sizes, so that no matter which of these subtrees
contains the goal, one of them can pay for the others. This requires
some delicacy, since we need to ensure this property throughout the
tree. The details appear in \S\ref{sec:known-D}.

Having proved \Cref{thm:known-D}, we use the algorithm to then solve
the problem where the prediction for the root vertex may itself be
erroneous. Given \Cref{thm:known-D} and \Cref{alg:degree-k}, we can
reduced the problem to finding some node $v$ such that $d(v,g)$ is
known; moreover this $v$ must not be very far from the start node
$r$. The idea is conceptually simple: as we explore the graph, if most
predictions are correct we can use these predictions to find such a
$v$, otherwise these incorrect predictions give us more budget to
continue exploring. Implementing this idea (and particularly, doing
this deterministically) requires us to figure out how to ``triangulate''
with errors, which we do in \S\ref{sec: alg for unknown D}.

Finally, we give the ideas behind the algorithms for the
\emph{planning version} of the problem. The main idea is to define the
implied-error function $\varphi(v):=|\{u\mid f(u)\neq d(u,v)\}|$,
which measures the error if the goal is sitting at node $v$. Since we
know all the predictions and the tree structure in this version of the
problem, and moreover $\phi(g)=|\cE|$, it is natural to search the
graph greedily in increasing order of the implied-error. However,
naively doing this may induce a large movement cost, so we bucket
nodes with similar implied-error together, and then show that the
total cost incurred in exploring all nodes with
$\varphi(v) \approx 2^i$ is itself close to $2^i$ (times a factor that
depends on the degree or the doubling dimension). It remains an
interesting open problem to extend this algorithm to broader classes
of graphs. The details here appear in \S\ref{sec:full-information}.  

\subsection{Related Work}
\label{sec:related-work}

\textbf{Graph Searching.} Graph searching is a fundamental problem, and there are too many
variants to comprehensively discuss: we point to the works closest to
ours.
Baeza-Yates, Culberson, and Rawlins~\cite{Baezayates1993234}
considered the exploration problem without predictions on the line
(where it is also called the ``cow-path'' problem),
on $k$-spiders (i.e., where $k$ semi-infinite lines meet at the root) and in the plane:
they showed tight bounds of $9$ on the
deterministic competitive ratio of the line, $1 + 2k^k/(k-1)^{k-1}$
for $k$-spiders, among other results. Their lower bounds (given for
``monotone-increasing strategies'') were generalized by Jaillet and
Stafford~\cite{JailletS01}; \cite{JailletSG02} point out that the
results for $k$-spiders were obtained by Gal~\cite{MR585693} before~\cite{Baezayates1993234} (see
also~\cite{0016627}). Kao et al.~\cite{KaoRT96,KaoMSY98} give tight
bounds for both deterministic and randomized algorithms, even with multiple agents.

The \emph{layered graph traversal} problem~\cite{PAPADIMITRIOU1991127}
is very closely related to our model. A tree is revealed over time. At
each timestep, some of the leaves of the current tree \emph{die}, and
others have some number of children. The agent is required to sit at
one of the current (living) leaves, so if the node the agent previously sat is no longer a leaf or is dead, the agent is forced to move.
The game ends when the goal state is revealed and objective is to minimize the total movement cost. The \emph{width} $k$ of
the problem is the largest number of leaves alive at any time (observe that
we do not parameterize our algorithm with this parameter). This
problem is essentially the cow-path problem for the case of $w=2$, but is
substantially more difficult for larger widths. Indeed, the
deterministic bounds lie between $\Omega(2^k)$~\cite{FiatFKRRV98} and
$O(k 2^k)$~\cite{Burley96}. Ramesh~\cite{Ramesh95} showed that the
randomized version of this problem has a competitive ratio at least
$\Omega(k^2/(\log k)^{1+\e})$ for any $\e > 0$; moreover, his
$O(k^{13})$-competitive algorithm was improved to a nearly-tight bound
of $O(k^2 \log \Delta)$ in recent exciting result by Bubeck, Coester,
and Rabani~\cite{https://doi.org/10.48550/arxiv.2202.04551}. 



Kalyanasundaram and Pruhs \cite{Kalyanasundaram1993139} study
the exploration problem (which they call the \emph{searching} problem)
in the geometric setting of $k$ polygonal obstacles with bounded
aspect ratio in the plane.
Some of their results extend to the \emph{mapping} problem, where they must
determine the locations of all obstacles.  Deng and
Papadimitriou~\cite{deng1999exploring} study the mapping problem,
where the goal is to traverse \emph{all edges} of an unknown directed
graph: they give an algorithm with cost $2|E|$ for Eulerian graphs
(whereas $OPT = |E|$), and cost $\exp(O(d \log d)) |E|$ for graphs
with imbalance at most $d$. Deng, Kameda, and
Papadimitriou~\cite{DengKP98} give an algorithm to map two-dimensional
rectilinear, polygonal environments with a bounded number of
obstacles.

Kalyanasundaram and Pruhs~\cite{kalyanasundaram1994constructing}
consider a different version of the mapping problem, where the goal is
to visit all vertices of an unknown graph using a tour of least cost.
They give an algorithm that is $O(1)$-competitive on planar graphs.
Megow et al.~\cite{Megow201262} extend their algorithm to graphs with
bounded genus, and also show limitations of the algorithm from \cite{kalyanasundaram1994constructing}.

Blum, Raghavan and Schieber~\cite{BlumRS97} study the \emph{point-to-point navigation}
problem of finding a minimum-length path between two known
locations $s$ and $t$ in a rectilinear environment; the obstacles are
unknown axis-parallel rectangles. Their $O(\sqrt{n})$-competitiveness
is best possible given the lower bound
in~\cite{PAPADIMITRIOU1991127}. \cite{KarloffRR94} give lower bounds
for randomized algorithms in this setting.

Our work is related in spirit to graph search algorithms like
$A^*$-search which use \emph{score functions} to choose the next leaf
to explore. One line of work giving provably good algorithms is that
of Goldberg and Harrelson~\cite{GoldbergH05} on computing shortest paths without exploring the entire
graph. Another line of work related in spirit to ours is that of Karp, Saks,
and Wigderson~\cite{DBLP:conf/focs/KarpSW86} on branch-and-bound (see
also \cite{DBLP:journals/jacm/KarpZ93}).

\textbf{Noisy Binary Search.}  A very closely related line of work is
that of computing under noisy queries~\cite{FeigeRPU94}. In this
widely-used model, the agent can query nodes: each node ``points'' to
a neighbor that is closer to the goal, though some of these answers
may be incorrect. Some of these works
include~\cite{OnakP06,MozesOW08,Emamjomeh-Zadeh16,DeligkasMS19,DereniowskiTUW19,BoczkowskiFKR21}.
Apart from the difference in the information model (these works
imagine knowing the entire graph) and the nature of hints
(``gradient'' information pointing to a better node, instead of
information about the quality/score of the node), these works often
assume the errors are independent, or adversarial with bounded noise
rate. Most of these works allow random-access to nodes and seek to
minimize the \emph{number} of queries instead of the distance
traveled, though an exception is the work
of~\cite{BoczkowskiFKR21}. 
As far as we can see, the connections between our models is only in spirit.
Moreover, we show in \S\ref{sec:grad} that results of the
kind we prove are impossible if the predictions don't give us distance
information but instead just edge ``gradients''.

\textbf{Algorithms with Predictions.}
Our work is related to the exciting line of research on algorithms
with predictions, such as in ad-allocation~\cite{MahdianNS07}, auction
pricing \cite{medina2017revenue}, page
migration~\cite{indyk2020online}, flow
allocation~\cite{lavastida2020learnable}, scheduling
\cite{purohit2018improving, lattanzi2020online,
  mitzenmacher2020scheduling}, frequency estimation
\cite{hsu2019learning}, speed scaling~\cite{BamasMRS20}, Bloom filters
\cite{mitzenmacher2018model}, bipartite matching and secretary
problems~\cite{AntoniadisGKK20,DBLP:conf/sigecom/DuttingLLV21}, and online linear
optimization~\cite{bhaskara2020online}. 



\section{Problem Setup and Definitions}
\label{sec:definitions}

We consider an underlying graph $G = (V,E)$ with a known root node $r$
and an unknown \emph{goal} node $g$. (For most of this paper, we
consider the unweighted setting where all edge have unit length;
\S\ref{sec:analys-bound-doubl-2} and \S\ref{sec:general-lengths}
discuss cases where edge lengths are positive integers.) Each node has
degree at most $\Delta$. Let $d(u,v)$ denote the distance between
nodes $u,v$ for any $u,v \in V$, and let $D:=d(r,g)$ be the optimal
distance from $r$ to the goal node $g$.

\begin{wrapfigure}{R}{0.4\textwidth}
  \centering
    \scalebox{1}{    \tikzset{every picture/.style={line width=0.75pt}} 

\begin{tikzpicture}[x=0.75pt,y=0.75pt,yscale=-1,xscale=1]

\draw  [dash pattern={on 4.5pt off 4.5pt}]  (183.7,86.3) -- (216.3,100.3) ;
\draw  [dash pattern={on 4.5pt off 4.5pt}]  (183.7,86.3) -- (216.3,78.3) ;
\draw  [dash pattern={on 4.5pt off 4.5pt}]  (216.3,173.7) -- (248.9,173.7) ;
\draw  [dash pattern={on 4.5pt off 4.5pt}]  (183.7,116.3) -- (216.3,116.3) ;
\draw  [dash pattern={on 4.5pt off 4.5pt}]  (216.3,143.7) -- (248.6,157.3) ;
\draw  [dash pattern={on 4.5pt off 4.5pt}]  (216.3,143.7) -- (236.14,138.54) -- (248.6,135.3) ;
\draw    (183.7,156.3) -- (216.3,173.7) ;
\draw    (183.7,156.3) -- (216.3,143.7) ;
\draw    (146.3,86.3) -- (183.7,86.3) ;
\draw    (146.3,143.1) -- (183.7,116.3) ;
\draw    (146.3,143.1) -- (183.7,156.3) ;
\draw  [fill={rgb, 255:red, 208; green, 2; blue, 27 }  ,fill opacity=1 ] (180,156.3) .. controls (180,154.26) and (181.66,152.6) .. (183.7,152.6) .. controls (185.74,152.6) and (187.4,154.26) .. (187.4,156.3) .. controls (187.4,158.34) and (185.74,160) .. (183.7,160) .. controls (181.66,160) and (180,158.34) .. (180,156.3) -- cycle ;
\draw  [fill={rgb, 255:red, 208; green, 2; blue, 27 }  ,fill opacity=1 ] (180,116.3) .. controls (180,114.26) and (181.66,112.6) .. (183.7,112.6) .. controls (185.74,112.6) and (187.4,114.26) .. (187.4,116.3) .. controls (187.4,118.34) and (185.74,120) .. (183.7,120) .. controls (181.66,120) and (180,118.34) .. (180,116.3) -- cycle ;
\draw    (116.3,115.7) -- (146.3,86.3) ;
\draw    (116.3,115.7) -- (146.3,143.7) ;
\draw  [fill={rgb, 255:red, 208; green, 2; blue, 27 }  ,fill opacity=1 ] (112.6,115.7) .. controls (112.6,113.66) and (114.26,112) .. (116.3,112) .. controls (118.34,112) and (120,113.66) .. (120,115.7) .. controls (120,117.74) and (118.34,119.4) .. (116.3,119.4) .. controls (114.26,119.4) and (112.6,117.74) .. (112.6,115.7) -- cycle ;
\draw  [fill={rgb, 255:red, 208; green, 2; blue, 27 }  ,fill opacity=1 ] (142.6,143.7) .. controls (142.6,141.66) and (144.26,140) .. (146.3,140) .. controls (148.34,140) and (150,141.66) .. (150,143.7) .. controls (150,145.74) and (148.34,147.4) .. (146.3,147.4) .. controls (144.26,147.4) and (142.6,145.74) .. (142.6,143.7) -- cycle ;
\draw  [fill={rgb, 255:red, 208; green, 2; blue, 27 }  ,fill opacity=1 ] (142.6,86.3) .. controls (142.6,84.26) and (144.26,82.6) .. (146.3,82.6) .. controls (148.34,82.6) and (150,84.26) .. (150,86.3) .. controls (150,88.34) and (148.34,90) .. (146.3,90) .. controls (144.26,90) and (142.6,88.34) .. (142.6,86.3) -- cycle ;
\draw  [fill={rgb, 255:red, 208; green, 2; blue, 27 }  ,fill opacity=1 ] (180,86.3) .. controls (180,84.26) and (181.66,82.6) .. (183.7,82.6) .. controls (185.74,82.6) and (187.4,84.26) .. (187.4,86.3) .. controls (187.4,88.34) and (185.74,90) .. (183.7,90) .. controls (181.66,90) and (180,88.34) .. (180,86.3) -- cycle ;
\draw  [fill={rgb, 255:red, 208; green, 2; blue, 27 }  ,fill opacity=1 ] (212.6,173.7) .. controls (212.6,171.66) and (214.26,170) .. (216.3,170) .. controls (218.34,170) and (220,171.66) .. (220,173.7) .. controls (220,175.74) and (218.34,177.4) .. (216.3,177.4) .. controls (214.26,177.4) and (212.6,175.74) .. (212.6,173.7) -- cycle ;
\draw  [fill={rgb, 255:red, 208; green, 2; blue, 27 }  ,fill opacity=1 ] (212.6,143.7) .. controls (212.6,141.66) and (214.26,140) .. (216.3,140) .. controls (218.34,140) and (220,141.66) .. (220,143.7) .. controls (220,145.74) and (218.34,147.4) .. (216.3,147.4) .. controls (214.26,147.4) and (212.6,145.74) .. (212.6,143.7) -- cycle ;
\draw  [fill={rgb, 255:red, 74; green, 144; blue, 226 }  ,fill opacity=1 ] (244.9,157.3) .. controls (244.9,155.26) and (246.56,153.6) .. (248.6,153.6) .. controls (250.64,153.6) and (252.3,155.26) .. (252.3,157.3) .. controls (252.3,159.34) and (250.64,161) .. (248.6,161) .. controls (246.56,161) and (244.9,159.34) .. (244.9,157.3) -- cycle ;
\draw  [fill={rgb, 255:red, 74; green, 144; blue, 226 }  ,fill opacity=1 ] (244.9,135.3) .. controls (244.9,133.26) and (246.56,131.6) .. (248.6,131.6) .. controls (250.64,131.6) and (252.3,133.26) .. (252.3,135.3) .. controls (252.3,137.34) and (250.64,139) .. (248.6,139) .. controls (246.56,139) and (244.9,137.34) .. (244.9,135.3) -- cycle ;
\draw  [fill={rgb, 255:red, 74; green, 144; blue, 226 }  ,fill opacity=1 ] (212.6,116.3) .. controls (212.6,114.26) and (214.26,112.6) .. (216.3,112.6) .. controls (218.34,112.6) and (220,114.26) .. (220,116.3) .. controls (220,118.34) and (218.34,120) .. (216.3,120) .. controls (214.26,120) and (212.6,118.34) .. (212.6,116.3) -- cycle ;
\draw  [fill={rgb, 255:red, 74; green, 144; blue, 226 }  ,fill opacity=1 ] (245.2,173.7) .. controls (245.2,171.66) and (246.86,170) .. (248.9,170) .. controls (250.94,170) and (252.6,171.66) .. (252.6,173.7) .. controls (252.6,175.74) and (250.94,177.4) .. (248.9,177.4) .. controls (246.86,177.4) and (245.2,175.74) .. (245.2,173.7) -- cycle ;
\draw  [fill={rgb, 255:red, 74; green, 144; blue, 226 }  ,fill opacity=1 ] (212.6,100.3) .. controls (212.6,98.26) and (214.26,96.6) .. (216.3,96.6) .. controls (218.34,96.6) and (220,98.26) .. (220,100.3) .. controls (220,102.34) and (218.34,104) .. (216.3,104) .. controls (214.26,104) and (212.6,102.34) .. (212.6,100.3) -- cycle ;
\draw  [fill={rgb, 255:red, 74; green, 144; blue, 226 }  ,fill opacity=1 ] (212.6,78.3) .. controls (212.6,76.26) and (214.26,74.6) .. (216.3,74.6) .. controls (218.34,74.6) and (220,76.26) .. (220,78.3) .. controls (220,80.34) and (218.34,82) .. (216.3,82) .. controls (214.26,82) and (212.6,80.34) .. (212.6,78.3) -- cycle ;
\draw  [fill={rgb, 255:red, 208; green, 2; blue, 27 }  ,fill opacity=1 ] (58,144.3) .. controls (58,142.26) and (59.66,140.6) .. (61.7,140.6) .. controls (63.74,140.6) and (65.4,142.26) .. (65.4,144.3) .. controls (65.4,146.34) and (63.74,148) .. (61.7,148) .. controls (59.66,148) and (58,146.34) .. (58,144.3) -- cycle ;
\draw  [fill={rgb, 255:red, 74; green, 144; blue, 226 }  ,fill opacity=1 ] (57.6,164.7) .. controls (57.6,162.66) and (59.26,161) .. (61.3,161) .. controls (63.34,161) and (65,162.66) .. (65,164.7) .. controls (65,166.74) and (63.34,168.4) .. (61.3,168.4) .. controls (59.26,168.4) and (57.6,166.74) .. (57.6,164.7) -- cycle ;
\draw  [color={rgb, 255:red, 0; green, 0; blue, 0 }  ,draw opacity=0.5 ] (51,133.67) -- (109,133.67) -- (109,176) -- (51,176) -- cycle ;

\draw (69,135.4) node [anchor=north west][inner sep=0.75pt]    {${\textstyle V_{t}}$};
\draw (69,156.4) node [anchor=north west][inner sep=0.75pt]    {${\textstyle \partial V_{t}}$};

\end{tikzpicture}}
    \caption{The observed vertices $V_t \cup \partial V_t$ (and extended subtree $\xT^t := T[V_t \cup \partial V_t]$) at some intermediate stage of the algorithm. Visited nodes $V_t$ are shown in red, and their un-visited neighbors $\partial V_t$ in blue.}
    \label{fig: V and Partial V}
\end{wrapfigure}
Let us formally define the problem setup.
An agent initially starts at root $r$, and wants to visit goal $g$ while traversing the minimum number of edges.
In each timestep $t \in \{1, 2, \ldots\}$, the agent moves from some node $v_{t-1}$ to node $v_t$.
We denote the \emph{visited vertices} at the start of round $t$ by $V_{t-1}$ (with $V_0 = \{r\}$), and use $\partial V_{t-1}$ to denote the \emph{neighboring vertices}---those not in $V_{t-1}$ but having at least one neighbor in
$V_{t-1}$. Their union $V_{t-1} \cup \partial V_{t-1}$ is the set of \emph{observed vertices} at the end of time $t-1$.
Each time $t$ the agent \emph{visits} a new node $v_t$ such that
$V_t := V_{t-1} \cup \{v_t\}$, and it pays the movement cost
$d(v_{t-1}, v_t)$, where $v_0 = r$. Finally, when $v_t = g$ and the
agent has reached the goal, the algorithm stops. The identity of the
goal vertex is known when---and only when---the agent visits it, and
we let $\tau^*$ denote this timestep.
Our aim is to design an algorithm that reaches the goal state with minimum total movement cost:
\[ \sum_{t = 1}^{\tau^*} d^{t-1}(v_{t-1}, v_t). \]
Within the above setting, we consider two problems:
\begin{itemize}
\item In the \emph{planning} problem, the agent knows the graph $G$
  (though not the goal $g$), and in addition, is given a
  \emph{prediction} $f(v) \in \ZZ$ for each $v \in V$ of its distance to the goal $g$; it can then use this information to plan its search trajectory.
\item In the \emph{exploration} problem, the graph $G$ and the
  predictions $f(v) \in \ZZ$ are initially unknown to the agent, and these are revealed only via exploration; in particular, upon visiting a node for the first time, the agent becomes aware of previously unobserved nodes in $v$'s neighborhood.
Thus, at the end of timestep $t$, the agent knows the set of visited vertices $V_t$, neighboring vertices $\partial V_t$, and the predictions $f(v)$ for each observed vertex
$v \in V_t \cup \partial V_t$. 
\end{itemize}
In both cases, we define $\cE := \{v \in V \mid f(v)\ne d(g,v)\}$ to be the
set of \emph{erroneous nodes}. 
Extending this notation, for the exploration problem, we define $\cE^t := \cE \cap V_t$ as the erroneous nodes visited
by time $t$. 



\section{Exploring with a Known Target Distance}
\label{sec:known-D}

Recall that our algorithm for the exploration problem on trees proceeds via the \emph{known-distance} version of the problem: in
addition to seeing the predictions at the various nodes as we explore
the tree, we are promised that the \emph{distance from the starting node/root
$r$ to the goal state $g$ is is exactly some value $D$}, i.e.,
$d(r,g) = D$.  The main result of this section is \Cref{thm:known-D},
and we restate a rigorous version here.

\begin{theorem}
  \label{thm:Known-D rigorous}
  If $D=d(r,g)$, the algorithm $\AlgoKnownD(r, D,+\infty)$ finds the goal node $g$ incurring a cost of at most $d(r,g) +
  O(\Delta |\cE|)$.
\end{theorem}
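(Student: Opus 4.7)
The plan is to prove the bound via a charging argument that separates the total movement cost into the unavoidable ``forward progress'' toward $g$ (which contributes at most $d(r,g)$) and the ``wasted'' exploration into subtrees that do not contain $g$, which I aim to bound by $O(\Delta\,|\cE|)$. Fix the path $P$ from $r$ to $g$; for each internal node $v \in P$, let $u_v$ denote $v$'s child that lies on $P$, and let the remaining children induce subtrees $T_1^v,\dots,T_{k_v-1}^v$ that do \emph{not} contain $g$, where $k_v \leq \Delta$.

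First I would formalize the anchor/load machinery indicated in Section 1.2. For every vertex $w$ in a child-subtree $T_i^v$, call $w$ an \emph{anchor for the hypothesis $g \in T_i^v$} if its prediction $f(w)$ is consistent with that hypothesis given that $d(r,g)=D$. The key point is that if in reality $g \notin T_i^v$, then every anchor in $T_i^v$ must belong to $\cE$. Define the \emph{load} $\gamma_i^v$ at a given time as the number of anchors already visited by the algorithm inside $T_i^v$. I would then verify that the number of steps the algorithm spends inside $T_i^v$ is at most $O(\gamma_i^v)$: each time the algorithm descends further into $T_i^v$, its design forces the newly visited vertex to be an anchor for $g \in T_i^v$ (otherwise the algorithm would not descend there), so exploration steps and anchor counts are in one-to-one correspondence up to constants.

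Second, I would show the algorithm maintains the invariant $\max_i \gamma_i^v \leq \min_i \gamma_i^v + O(1)$ across all child-subtrees of each visited internal node $v$, throughout the execution. This is the role of the balancing rule: \AlgoKnownD always routes its next exploration step into the child-subtree of smallest current load. In particular, at the instant the algorithm first commits to descending through the correct child $u_v$ on its way toward $g$, the incorrect-side loads satisfy $\gamma_i^v \leq \gamma_{u_v}^v + O(1)$. Combined with the previous paragraph, the total wasted cost inside the incorrect subtrees hanging off $v$ is $O\bigl(\sum_{i\neq u_v} \gamma_i^v\bigr) \leq O(\Delta\cdot \gamma_{u_v}^v)$, and every anchor counted on the incorrect side is a certified member of $\cE$.

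Third, I sum the charge over $v \in P$. Each anchor $w\in\cE$ is counted as an incorrect-side anchor for at most one $v$, namely the unique $v\in P$ that is the lowest ancestor of $w$ whose path-child $u_v$ separates $w$ from $g$; hence the aggregate incorrect-side contribution is $O(\Delta\,|\cE|)$. It remains to bound $\sum_{v\in P}\gamma_{u_v}^v$, which the previous balance also reduces to the incorrect side: a correct-side anchor at $v$ that is \emph{not} itself on $P$ must be erroneous (its prediction pretends the goal sits at a different location inside $T_{u_v}^v$ than the true $g$, violating $d(w,g)=f(w)$), so it is charged once to $\cE$; the correct-side anchors that lie on $P$ contribute at most $|P|=d(r,g)$ total across all $v$ and are absorbed into the leading $d(r,g)$ term.

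The main obstacle I foresee is precisely this final no-double-counting step: the argument must pin down the map from exploration steps to distinct elements of $\cE$, so that neither a single erroneous vertex nor a single on-path anchor is blamed for cost at multiple ancestors along $P$. This is where the known-distance hypothesis $d(r,g)=D$ is essential, since it turns ``$f(w)$ is consistent with $g\in T_i^v$'' into a crisp combinatorial condition that partitions the ancestors naturally and lets each anchor choose a canonical $v$ to be charged to. Once that map is set up cleanly, plugging in the invariant from the second step and summing over $v\in P$ yields the stated $d(r,g)+O(\Delta\,|\cE|)$ bound.
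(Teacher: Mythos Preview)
Your plan has the right skeleton---charge off-path exploration at each $v\in P$ to erroneous anchors, use load-balancing across sibling subtrees, then sum---but two of the intermediate claims are wrong for this algorithm, and a third piece of the argument is missing entirely.

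The assertion that ``each time the algorithm descends further into $T_i^v$, its design forces the newly visited vertex to be an anchor for $g\in T_i^v$'' is false: \AlgoKnownD performs DFS (picking the unexplored child of smallest prediction) and only \emph{leaves} a subtree when criticality triggers, so it freely visits nodes in $T_i^v$ whose predictions lend no support to $g\in T_i^v$. In the paper's setup each node $u$ has a single anchor $\tau(u)$, its ancestor at level $\tfrac12(D+\ell(u)-f(u))$, and the load $\sigma_i^t(v)$ counts visited nodes in the $i$th child-subtree with $\tau(u)=v$; the ratio $|C_i^t(v)|/\sigma_i^t(v)$ is unconstrained a priori. What saves the argument is the \emph{second} criticality clause $2\sigma_j^t(v)\ge |C_j^t(v)|$: it guarantees that whenever we abandon subtree $j$, at least half of its visited nodes anchored at $v$. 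From this the paper proves that for every wrong-side index $i$ either $\sigma_i^t(v)\le 2\sigma_1^t(v)$ or $2\sigma_i^t(v)\le |C_i^t(v)|$, and only then deduces $|C_i^t(v)|=O(x_i^t(v))$ with $x_i^t(v)$ counting genuine errors on both sides. Relatedly, the additive invariant $\max_i\gamma_i^v\le\min_i\gamma_i^v+O(1)$ is not what the algorithm maintains; the balance is multiplicative (factor~$2$), and the algorithm does not ``always route its next exploration step into the child-subtree of smallest current load.''

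Separately, your plan bounds only the \emph{number} of off-path nodes visited, not the movement cost. These differ because each time criticality fires the agent walks back up to the anchor, and that callback distance can be large. The paper handles this with an independent argument: for each $v$, the total callback cost with anchor $v$ is at most $4\sigma^t(v)$, shown by proving that $\sigma_j^t(v)$ at least doubles between consecutive callbacks out of subtree $j$ (so the distances form a geometric sum), and then $\sum_v\sigma^t(v)$ is bounded by the extra exploration plus $O(|\cE^t|)$. Without this step, even a correct extra-exploration bound does not yield the theorem.
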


Algorithm $\AlgoKnownD$ is stated in \Cref{alg:degree-k}. For better understanding of it, we first give some key definitions.


\subsection{Definitions: Anchors, Degeneracy, and Criticality}
\label{sec:known-D-defs}
For an unweighted tree $T$, we define the \emph{level} of node $v$
with respect to the root $r$ to be $\ell(v):= d(r,v)$, and so \emph{level
  $L$} denotes the set of nodes $v$ such that $d(r,v) = \ell(v) =
L$. Since the tree is rooted, there are clearly defined notions of
parent and child, ancestor and descendent.
Each node is
both an ancestor and a descendant of itself. For any node $v$, let
$T_v$ denote the \emph{subtree} rooted at $v$.  Extending this
notation, we define the \emph{visited subtree} $\eT^t := T[V_t]$, and
the \emph{extended subtree} $\xT^t := T[V_t \cup \partial V_t]$, and
let $\eT^t_v$ and $\xT^t_v$ be the subtrees of $\eT^t$ and $\xT^t$
rooted at $v$.

\begin{definition}[Active and Degenerate nodes]
\label{definition: k-degree degenerate}
In the exploration setting, at the end of timestep $t$, a node $v\in V_t\cup\partial V_t$ is
\emph{active} if $\eT^t_v \neq \xT^t_{v}$, i.e., there are observed descendants of $v$ (including itself) that have not been visited.\\ 
An active node $v\in V_t\cup\partial V_t$ is \emph{degenerate} at the end of timestep $t$ if 
it has a unique child node in $\xT^t$ that is active.
\end{definition}

In other words, all nodes which have un-visited descendants (including those in the frontier $\partial V_t$) are active.
Active nodes are further partitioned into degenerate nodes that have exactly one child subtree that has not been fully visited, and active nodes that have at least two active children. See~\cref{fig: Anchor} for an illustration.

A crucial definition for our algorithms is that of \emph{anchor} nodes: 
\begin{definition}[Anchor]
  \label{def:anchor}
  For node $u\in T$, define its \emph{anchor} $\tau(u)$ to be its
  ancestor in level $\alpha(u) := \frac12(D+\ell(u)-f(u))$. If the value
  $\alpha(u)$ is negative, or is not an integer, or node $u$ itself belongs
  at level smaller than $\alpha(u)$, we say that $u$ has no anchor
  and that $\tau(u) = \bot$.
\end{definition}
\cref{fig: Anchor} demonstrates the location of an anchor node $\tau(u)$ for given node $u$; it also illustrates the following claim, which forms the main rationale behind the definition: 
\begin{claim}
  \label{clm:anchor}
  If the prediction for some node $u$ is correct, then its anchor
  $\tau(u)$ is the least common ancestor (in terms of level $\ell$) of $u$ and the goal $g$.
  Consequently, if a node $u$ has no anchor, or if its anchor does not lie on the path $P^*$ from $r$ to $g$, then $u \in \cE$.
\end{claim}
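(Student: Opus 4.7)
The plan is to prove both parts by a direct distance computation using the least common ancestor, and then take the contrapositive for the second part.

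First, I would fix a node $u$ and let $w := \mathrm{lca}(u,g)$, the least common ancestor of $u$ and $g$ in the rooted tree $T$. Since $T$ is an unweighted tree rooted at $r$, any root-to-node path is a shortest path, so $d(u,g) = d(u,w) + d(w,g) = (\ell(u)-\ell(w)) + (\ell(g)-\ell(w))$. Using that $\ell(g) = d(r,g) = D$, this simplifies to
\[
d(u,g) = \ell(u) + D - 2\,\ell(w).
\]
Now I would plug in the assumption that the prediction at $u$ is correct, i.e.\ $f(u) = d(u,g)$. Solving the displayed identity for $\ell(w)$ gives
\[
\ell(w) = \tfrac{1}{2}\bigl(D + \ell(u) - f(u)\bigr) = \alpha(u).
\]
Since $w$ is an ancestor of $u$ and its level equals $\alpha(u)$, by \Cref{def:anchor} we conclude $\tau(u) = w = \mathrm{lca}(u,g)$, which proves the first sentence of the claim.

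For the second sentence I would argue the contrapositive. Suppose $u \notin \cE$, so $f(u) = d(u,g)$. The computation above shows that $\alpha(u) = \ell(w)$, which is automatically a non-negative integer with $\alpha(u) \le \ell(u)$; hence $\tau(u)$ is well-defined and equals $\mathrm{lca}(u,g)$. Moreover, $\mathrm{lca}(u,g)$ is by definition an ancestor of $g$, and the set of ancestors of $g$ is exactly the path $P^*$ from $r$ to $g$, so $\tau(u) \in P^*$. Taking the contrapositive: if either $\tau(u) = \bot$ or $\tau(u) \notin P^*$, then $f(u) \neq d(u,g)$, i.e.\ $u \in \cE$.

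I do not anticipate any real obstacle: the whole statement reduces to the one-line identity $d(u,g) = \ell(u) + D - 2\,\ell(\mathrm{lca}(u,g))$ for unweighted rooted trees, together with the observation that $\mathrm{lca}(u,g)$ always lies on $P^*$. The only thing that deserves a brief check is that the three conditions for $\tau(u) = \bot$ in \Cref{def:anchor} (negativity, non-integrality, or $\ell(u) < \alpha(u)$) are all automatically ruled out when $f(u)$ truly equals $d(u,g)$, so none of them can occur for a correctly-predicting node.
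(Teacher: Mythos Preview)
Your proposal is correct and follows exactly the reasoning the paper intends: the paper does not give a formal proof of this claim but only illustrates it via \Cref{fig: Anchor} and its caption, which points to precisely the identity $D+\ell(u)-f(u)=2\,\ell(\mathrm{lca}(u,g))$ that you derive. Your write-up simply makes this one-line computation and its contrapositive explicit, so there is nothing to add.
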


\begin{figure}[t]
    \centering
    \scalebox{1}{    \input{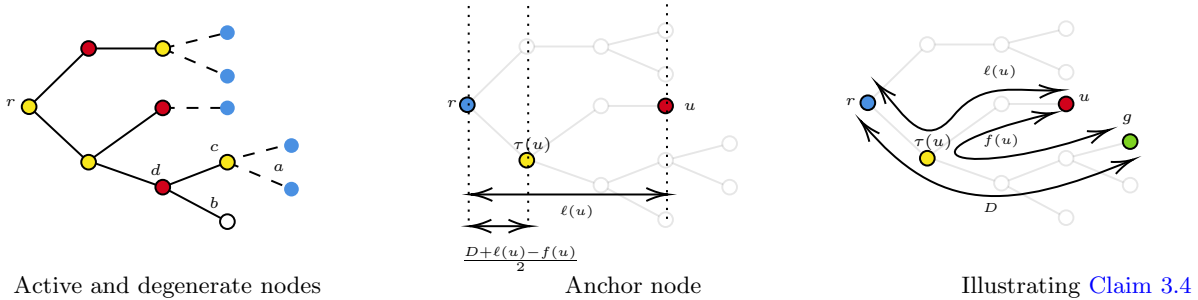}}
    \caption{The first figure from the left illustrates active and degenerate nodes. Nodes such as $a$ (shaded in blue) are in $\partial V_t$ while the rest are visited nodes in $V_t$.
    Unshaded node $b$ is inactive (since it has no un-visited descendant), while all other shaded nodes (blue, yellow and red) are active.
    Among the active nodes, nodes such as $c$ (shaded in yellow) are non-degenerate nodes as they have at least two active children. Finally nodes such as $d$ (shaded in red) are degenerate as they have exactly one active child.
    \\ 
    The second and third figures give an example of anchor node $\tau(u)$ (in yellow) at level $\frac12(D+\ell(u)-f(u))$ for given node $u$ (in red) at level $\ell(u)$. The rightmost figure (with root $r$ and goal $g$ also indicated) illustrates~\cref{clm:anchor}, showing that when $u$'s prediction $f(u)$ is correct, then its anchor is the least common ancestor of $u$ and goal $g$ (since $D+\ell(u)-f(u)$ is equal to twice the distance of $\tau(u)$ from $r$).}
    \label{fig: Anchor}
\end{figure}

For any node $v\in T$, define its children be $\chi_i(v)$ for
$i=1,2,\ldots,\Delta_v$, where $\Delta_v \leq \Delta$ is the number of
children for $v$. Note that the order is arbitrary but prescribed and
fixed throughout the algorithm. For any time $t$, node $v$, and $i \in
[\Delta_v]$, define the visited portion of the subtree rooted at the $i^{th}$ child as $\ST^t_i(v):= \eT^t_{\chi_i(v)}$.

\begin{definition}[Loads $\sigma_i$ and $\sigma$]
\label{def:loads}
  For any time $t$, node $v$, and index $i \in [\Delta_v]$, define  \[ \sigma_i^t(v):=|\{u\in
  \ST_i^t(v) \mid \tau(u)=v\}|. \] 
In other words, $\sigma_i^t(v)$ is the number of nodes in $\ST^t_i(v)$
that have $v$ as their anchor.  Define
$\sigma^t(v)=\sum_{i=1}^{\Delta_v} \sigma_i^t(v)$ to be the total
number of nodes in $\eT^t_v \setminus \{v\}$ which have $v$ as their
anchor.
\end{definition}

\begin{definition}[Critical Node]
  \label{definition: k-degree critical}
  For any time $t$, active and non-degenerate node $v$, and index $j \in [\Delta_v]$, let
  $q_j :=\arg\min_{i \neq j} \{ \sigma_i^t(v) \mid \text{$\chi_i(v)$
    is active at time $t$} \}$. Call $v$ a \emph{critical node with respect to $j$ at time $t$} if it satisfies
  \begin{enumerate}[(i),nolistsep]
  \item $\sigma^t_j(v)\geq 2\sigma^t_{q_j}(v)$, namely, the number of
    nodes of $\ST^t_j(v)$ that have $v$ as their anchor is at least
    twice larger than the number of nodes of $\ST^t_{q_j}(v)$ that
    have $v$ as their anchor; and
  \item $2\sigma^t_j(v)\geq |\ST^t_j(v)|$, namely,
  at least half of the nodes of $\ST^t_j(v)$ have  $v$ as their
    anchor.
  \end{enumerate}
\end{definition}

\subsection{The \AlgoKnownD Algorithm}
 
Equipped with the definitions in \S\ref{sec:known-D-defs}, 
at a high level, the main idea of the algorithm is to balance the
\emph{loads} (as defined in \Cref{def:loads}) of all the nodes
$v$. Note that if the goal $g\in \ST_i(v)$, then the nodes
$u\in \ST_i(v)$ that have $v$ as their anchor (i.e., $\tau(u)=v$) have
erroneous predictions; hence balancing the loads automatically balances
the cost and the budget.  To balance the loads, we use the definition
of a \emph{critical} node (see \Cref{definition: k-degree critical}):
whenever a node $v$ becomes critical, the algorithm goes back and
explores another subtree of $v$, thereby maintaining the balance.

More precisely,
our
algorithm \AlgoKnownD does the
following: 
at each time step $t$, it checks whether there is a node that is critical.
If there is no such node, the
algorithm performs one more step of the current DFS, giving priority to the unexplored child of $v_t$ with smallest prediction. On the other hand, if there is a
critical node $v$, then this $v$ must be the anchor $\tau(v_t)$. In
this case the algorithm pauses the current DFS, returns to the anchor
$\tau(v_t)$ and resumes the DFS in $\tau(v_t)$'s child subtree having
the smallest load (say $\ST_q(\tau(v_t))$). This DFS may have been paused at some time $t'<t$,
and hence is continued starting at node $v_{t'}$. The variable $\mem(v)$ saves the vertex that the algorithm left the subtree rooted on $v$ last time. For example, in this case $\mem(\chi_q(\tau(v_t)))=v_{t'}$. If no such time
$t'$ exists, the algorithm starts a new DFS from some child of
$\tau(v_t)$ whose subtree has the smallest load (in this case, $\mem(\chi_q(\tau(v_t)))=\bot$). The pseudocode
appears as \Cref{alg:degree-k}.


\begin{algorithm}[h]
  \caption{$\AlgoKnownD(r, D, B)$}\label{alg:degree-k}
  $v_0 \gets r$, $t \gets 0$\;
  $\mem(r)\gets r$ and $\mem(v)\gets \bot$ for all $v \neq r$\;
  \While{$v_t \neq g$ and $|V_t| < B$}{
    \If{$\tau(v_t) \neq \bot$ and $\tau(v_t)$ is active and not degenerate and
      $\tau(v_t)$ is
      critical w.r.t.\ the index of the subtree containing $v_t$ at time $t$ 
    }{
      \label{l:check} 
      $q\gets$ the child index $q$ s.t.  $\tau(v_t)$ is critical w.r.t. $q$\;
      \leIf{$\mem(\chi_q(\tau(v_t)))=\bot$}{$v_{t+1} \gets \chi_q(\tau(v_t)$}
      {$u\gets \mem(\chi_q(\tau(v_t))$}
    }
    \Else{
      $u\gets v_t$
    }
    \While{$v_{t+1}$ undefined and $u$ has no child}{
      \label{iterate: in Algorithm2}
      $w\gets$ $u$'s closest active ancestor  \;
      $q \gets \arg\min_i \{ \sigma_i^t(w) \mid \chi_i(w)
        \text{ active }\}$
        \;
      \leIf{$\mem(\chi_q(w))=\bot$}{$v_{t+1}\gets \chi_q(w)$}{$u\gets \mem(\chi_q(w))$}
    }
    \lIf{$v_{t+1}$ undefined}{$v_{t+1}\gets$ $u$'s child with smallest prediction}
    \lForEach{ancestor $u$ of $v_{t+1}$}{
      $\mem(u)\gets v_{t+1}$
    }
    $t \gets t+1$\;
  }
\end{algorithm}

A few observations: (a) While $D = d(r,g)$ does not appear explicitly in the algorithm, it is used in the definition of \emph{anchors}
(recall \Cref{def:anchor}). Even when $d(r,g)$, the predicted distance at the root, is not the true distance to the goal (as may happen
in \Cref{sec: alg for unknown D}), given any input $D$ in \Cref{alg:degree-k}, we will still define $\tau(v)$ to be
$v$'s ancestor at level $\alpha(u) := \frac12(D+\ell(u)-f(u))$. (b) The new node $v_t$ is
always on the \emph{frontier}: i.e., the nodes which are either leaves
of $T$ or have unvisited children. Moreover, (c) the memory value
$\mem(v) = \bot$ if and only if $v \not\in V_t$, else $\mem(v)$ is on the
frontier in the subtree below $v$.


\subsection{Analysis for the \AlgoKnownD Algorithm}
\label{sec:analysis for algorithm 2}

The proof of \Cref{thm:Known-D rigorous} 
proceeds in two steps. The first step is
to show that the total amount of ``extra'' exploration, i.e., the
number of nodes that do not lie on the $r$-$g$ path, is
$O(\Delta \cdot |\cE|)$. Formally, let $P^*$ denote the $r$-$g$ path;
for the rest of this section, suppose $g\in \ST_1(v)$ for all
$v\in P^*$. Define the \emph{extra exploration} to be the number of
nodes visited in the subtrees hanging off this path: 
\[ \ExEx(t) := \sum_{v\in P^*} \sum_{i\neq 1} |\ST_i^t(v)|. \]

\begin{lemma}[Bounded Extra Exploration]
  \label{lemma: extra exploration totally bounded}
  For all times $t^*$, 
  $\ExEx(t^*) \leq 7\Delta \cdot
    |\cE^{t^*}|$. 
\end{lemma}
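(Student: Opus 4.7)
The plan is to charge the extra exploration to erroneous nodes, losing only an $O(\Delta)$ factor. First I would unpack the anchor structure: by~\cref{clm:anchor}, any correctly-predicted node $u \in \ST_i^{t^*}(v)$ with $v \in P^*$ and $i \ne 1$ must have $\tau(u) = v$ (since $v$ is the LCA of $u$ and $g$). Hence every node in such a side subtree either contributes to the load $\sigma_i^{t^*}(v)$ or lies in $\cE^{t^*}$, giving
\[ \ExEx(t^*) \;\le\; \sum_{v\in P^*,\, i \ne 1} \sigma_i^{t^*}(v) \;+\; |\cE^{t^*}|. \]
Symmetrically, $\sigma_1^{t^*}(v)$ counts only erroneous nodes---any correctly-predicted $u \in \ST_1(v)$ has its anchor strictly below $v$ inside $\ST_1(v)$---so by uniqueness of anchors, $\sum_{v \in P^*} \sigma_1^{t^*}(v) \le |\cE^{t^*}|$.

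Second, I would establish a balance invariant of the form
\[ \sigma_i^{t^*}(v) \;\le\; 2\sigma_1^{t^*}(v) \;+\; 2\,|\cE^{t^*} \cap \ST_i^{t^*}(v)|
\qquad\text{for each } v \in P^*,\ i \ne 1. \]
The idea is to consider the last time $t_{\text{last}}$ that $\sigma_i(v)$ was incremented: at this moment a node $u$ with $\tau(u) = v$ was added to $\ST_i(v)$, so the criticality check on $v$ fired. If $v$ is \emph{not} critical w.r.t.\ $i$ at $t_{\text{last}}$, then either condition (i) of \Cref{definition: k-degree critical} fails and $\sigma_i^{t_{\text{last}}}(v) < 2\sigma_{q_i}^{t_{\text{last}}}(v) \le 2\sigma_1^{t^*}(v)$ (using that $\chi_1(v)$ is active throughout and loads are monotone non-decreasing), or condition (ii) fails and more than half of $\ST_i^{t_{\text{last}}}(v)$ is un-anchored at $v$ and hence erroneous by~\cref{clm:anchor}.

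The main obstacle is the case in which $v$ \emph{is} critical w.r.t.\ $i$ at $t_{\text{last}}$: the algorithm then switches away to the min-load subtree $\ST_{q_i}(v)$ and does not return to $\ST_i(v)$, but $\sigma_i^{t^*}(v)$ can be much larger than $\sigma_1^{t^*}(v)$ at that moment. To handle this, I would chain the argument inductively over the sequence of critical events at $v$: between consecutive switches, the chosen load grows, and because the algorithm always moves to the \emph{minimum} other active load, the minimum active load at $t^*$ remains within a constant factor of every previously-paused load (up to errors charged at each switch). Combining the balance invariant with the first step then gives
\[ \ExEx(t^*) \;\le\; 2\Delta\sum_{v \in P^*}\sigma_1^{t^*}(v) \;+\; O(\Delta)\,|\cE^{t^*}| \;\le\; O(\Delta)\,|\cE^{t^*}|, \]
and careful bookkeeping of the constants produces the claimed bound $\ExEx(t^*)\le 7\Delta\cdot|\cE^{t^*}|$.
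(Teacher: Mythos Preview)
Your plan is the paper's plan: decompose the extra exploration into the loads $\sigma_i^{t^*}(v)$ plus local errors (this is the content of the paper's auxiliary count $x^t(v)$ and the bound $\sum_v x^t(v)\le 2|\cE^t|$), establish a per-subtree balance invariant linking $\sigma_i$ to $\sigma_1$, then sum. So the route is right; two steps are not yet pinned down.

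\textbf{The edge case.} Your invariant $\sigma_i^{t^*}(v)\le 2\sigma_1^{t^*}(v)+2|\cE^{t^*}\cap \ST_i^{t^*}(v)|$ is literally false in one situation: $\sigma_i^{t^*}(v)=|\ST_i^{t^*}(v)|=1$ and $\sigma_1^{t^*}(v)=0$, with $\chi_i(v)$ correctly predicted (so the single node in $\ST_i^{t^*}(v)$ is not in $\cE$). The paper isolates exactly this as a third branch of a trichotomy and then observes it can only occur when some child $\chi_j(v)$ is erroneously predicted (otherwise the tie-breaking rule visits $\chi_1(v)$ first). Splitting $P^*$ into $A$ (all children correct) and $B$ (at least one child wrong) charges this ``$+1$'' to $|B|\le|\cE^{t^*}|$, contributing the extra $\Delta|\cE^{t^*}|$ that produces the constant $7$.

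\textbf{The ``critical at $t_{\text{last}}$'' case.} Your chaining sketch does not close. After the switch, the algorithm may wander among subtrees $\ST_j(v)$ with $j\ne 1$ indefinitely; nothing you wrote forces $\sigma_1$ to grow or forces the errors in $\ST_i$ to be large. The paper handles this not by forward chaining but by (a) proving a separate re-entry lemma---the algorithm enters $\ST_i(v)$ from outside only when $\sigma_i(v)$ is the minimum over active children, hence $\sigma_i(v)\le\sigma_1(v)$ at that moment---and (b) running the whole trichotomy as an induction on time (``let $t$ be the first time all three fail''). With the re-entry lemma, if the walk is back inside $\ST_i(v)$ at time $t$ there was an intermediate time $t''$ with $\sigma_1^{t''}(v)\ge\sigma_i^{t''}(v)=\sigma_i^{t'}(v)$, which immediately yields branch~(I); if the walk is outside $\ST_i(v)$ at time $t$, the inductive hypothesis at $t'$ carries over. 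You are missing exactly this re-entry lemma; once you state and prove it, your argument and the paper's coincide.
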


Next,  we need to control the total distance traveled, which is the 
second step of our analysis:
\begin{lemma}[Bounded Cost]
  \label{lem:distance-by-extra-exploration}
  For all times $t^*$,
  \[ \sum_{t\leq t^*} d(v_{t-1}, v_t) \leq d(r,v_{t^*}) +  10 \ExEx(t^*)  +
    16 |\cE^{t^*}|. \]
\end{lemma}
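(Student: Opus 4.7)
The plan is to move to an edge-level accounting of the total movement cost. For each non-root node $u \in V_{t^*}$, let $e_u$ denote the number of times the algorithm enters the subtree $T_u$. Since each entry into $T_u$ is matched by a subsequent exit except possibly the final one---which happens iff $u$ lies on the $r$-to-$v_{t^*}$ path, call it $\pi$---the edge from $u$ to its parent is traversed exactly $n_u = 2e_u - \mathbf{1}[u \in \pi]$ times. Summing yields
\[
\sum_{t \leq t^*} d(v_{t-1}, v_t) \;=\; \sum_{u \in V_{t^*}\setminus\{r\}} n_u \;=\; 2\sum_u e_u \;-\; d(r, v_{t^*}),
\]
so the target inequality reduces to $\sum_u e_u \le d(r, v_{t^*}) + 5\,\ExEx(t^*) + 8\,|\cE^{t^*}|$.

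The next step is to split $\sum_u e_u = (|V_{t^*}|-1) + \sum_u (e_u - 1)$ into first-entries and re-entries. The first-entry count $|V_{t^*}|-1$ decomposes as $d(r,v_{t^*})$ non-root nodes on $\pi$ plus the visited nodes in subtrees branching off $\pi$. A short case analysis based on whether $v_{t^*}$ lies on $P^*$ shows that the off-$\pi$ visited nodes are dominated by $\ExEx(t^*)$: when $v_{t^*}\in P^*$ these are exactly the nodes summed in $\ExEx(t^*)$; when $v_{t^*}\notin P^*$ the portion of $\pi$ past its divergence from $P^*$ itself lies in an $\ExEx$-counted subtree, so $|V_{t^*}|-1\le d(r,v_{t^*})+\ExEx(t^*)$ in either case.

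The heart of the argument is bounding the re-entry term $\sum_u (e_u-1) = O(\ExEx(t^*)+|\cE^{t^*}|)$. Every re-entry into a subtree $T_u$ can be traced back to a jump out of $T_u$, which in turn was triggered by a critical event at some ancestor $\tau$ of $u$. The plan is to charge each re-entry to the critical event that caused it, and each critical event to erroneous or extra-explored nodes via \Cref{definition: k-degree critical}: when $\tau$ becomes critical w.r.t.\ index $j$, condition (ii) $2\sigma_j^t(\tau)\ge |\ST_j^t(\tau)|$ says that at least half of the visited nodes in $\ST_j^t(\tau)$ are anchored at $\tau$, and by \Cref{clm:anchor} these must be erroneous whenever $\tau$ is not the LCA of $g$ with some vertex in $\ST_j(\tau)$; meanwhile condition (i) $\sigma_j^t(\tau)\ge 2\sigma_{q_j}^t(\tau)$ forces the ``cheap'' sibling's load to at least double before $\tau$ can become critical again, keeping critical events at any single $\tau$ under control. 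A token/amortization argument across all ancestors then yields the required bound.

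The main obstacle will be the amortization: an erroneous or off-$P^*$ node $u$ may in principle be implicated in critical events at several of its ancestors, so naive summation double-counts. I expect to handle this by allocating each node in $\cE^{t^*}$ and each off-$P^*$ visited node a constant budget of tokens, and showing that each critical event (together with the re-entries it triggers along the descent path from $\tau$ to the resumed position) consumes at least one previously-unused token; this is the step that ultimately produces the explicit constants $10$ and $16$ in the statement.
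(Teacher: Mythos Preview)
Your edge-level accounting and the reduction to $\sum_u e_u \le d(r,v_{t^*}) + 5\,\ExEx(t^*) + 8|\cE^{t^*}|$ are fine and essentially match the paper's identity $\cost^{t^*} = d(r,v_{t^*}) + 2\cost_u^{t^*}$. But the first-entry step contains a genuine error: the claim $|V_{t^*}|-1 \le d(r,v_{t^*}) + \ExEx(t^*)$ is false when $v_{t^*}\notin P^*$. You forget the visited $P^*$-nodes beyond the divergence point $w$; these lie neither on $\pi$ nor in any $\ExEx$-counted subtree. Concretely, let $r$ have children $a\in P^*$ and a leaf $b\notin P^*$, with $a$'s only child $a_1\in P^*$; set $f(a)=D-1$ (correct), $f(a_1)=D+2$ (so $\tau(a_1)=r$), and $f(b)=D+1$. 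The algorithm visits $a$, then $a_1$, then finds $r$ critical and jumps to $b$. At $t^*=3$ we have $|V_{t^*}|-1=3$ but $d(r,b)+\ExEx(t^*)=1+1=2$.

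The paper sidesteps this by never trying to bound $|V_{t^*}|-1$. Instead it bounds the upward cost directly, splitting upward traversals of each edge into two types: (ii) the unique upward traversal when the subtree below is completely visited, and (i) ``callback'' traversals triggered by a critical anchor above. Type~(ii) never occurs on a $P^*$-edge (the goal sits below it), so those contribute at most $\ExEx(t^*)$. This is exactly the place where your decomposition leaks: in your split, visited-but-not-yet-completed $P^*$-nodes show up as first entries with nothing to pay for them, whereas in the paper's split they simply contribute no type-(ii) upward traversal at all.

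For the callback part, your sketch is in the right direction but misses the concrete mechanism. The paper does not run a per-node token scheme across all ancestors; it proves, for each fixed anchor $v$ and child index $j$, that the successive times $t_1<\cdots<t_m$ at which $v$ becomes critical w.r.t.\ $j$ satisfy $\sigma_j^{t_{i+1}}(v)\ge 2\sigma_j^{t_i}(v)$ (because one must re-enter $C_j(v)$ only when $j$ attains the minimum load), and the callback distance at time $t_i$ is at most $|C_j^{t_i}(v)|\le 2\sigma_j^{t_i}(v)$. Summing the geometric series gives $\cost_u^{t^*}(v)\le 4\sigma^{t^*}(v)$, and then $\sum_v \sigma^{t^*}(v)$ is controlled by the $x^t(v)$-accounting of \Cref{lemma: error for each node alg2}, yielding $\sum_v \cost_u^{t^*}(v)\le 4\big(\ExEx(t^*)+2|\cE^{t^*}|\big)$. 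Plugging into $\cost^{t^*}=d(r,v_{t^*})+2(\ExEx(t^*)+\sum_v\cost_u^{t^*}(v))$ gives the constants $10$ and $16$; your proposed token argument would have to reproduce this geometric doubling, which is not apparent from the outline.
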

Using the lemmas above (setting $t^*$ to be the time $\tau^*$ when we
reach the goal) proves \Cref{thm:known-D}. In the following sections,
we now prove \Cref{lemma: extra exploration totally bounded,lem:distance-by-extra-exploration}.

\subsection{Bounding the Extra Exploration}
\label{sec: bounded extra exploration alg 2}

\begin{lemma}
  \label{lemma: error for each node alg2}
  For any node $v\in T^t$, define $x^t(v)$ as follows:
  \begin{enumerate}[nolistsep]
  \item if $g\notin T_v$, then $x^t(v):=\sigma^t(v)$.
  \item if $g\in T_v \setminus \{v\}$, let $g\in T_{\chi_j(v)}$. Define $y_1^t(v):=\sigma_j^t(v)$, $y_2^t(v):=\sum_{i\neq j} (|\ST_i^t(v)|-\sigma_i^t(v))$ and
    $x^t(v):= y_1^t(v) + y_2^t(v)$.
  \end{enumerate}
  Then $\sum_{v\in T^t}x^t(v)\leq 2|\cE^t|$.
\end{lemma}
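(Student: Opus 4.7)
My strategy is a per-node charging argument: I will show that each visited erroneous node $u \in \cE^t$ is counted at most twice across the sum $\sum_{v \in \eT^t} x^t(v)$, while each visited correct node is not counted at all. Summing then immediately yields the bound $2|\cE^t|$.

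To see that correct nodes contribute nothing, I apply Claim~\ref{clm:anchor}: if $u$'s prediction is correct, then $\tau(u) = \mathrm{LCA}(u, g)$. In case~1 ($g \notin T_v$), the term $\sigma^t(v)$ only counts $u$ with $\tau(u) = v$; but a correct such $u$ forces $v$ to be an ancestor of $g$, contradicting $g \notin T_v$. In case~2 (where $v \in P^*$, since $g \in T_v \setminus \{v\}$), the term $y_1^t(v)$ again demands $\tau(u) = v$ for $u \in \ST^t_j(v)$; for correct $u$, both $u$ and $g$ lie in $T_{\chi_j(v)}$, so $\mathrm{LCA}(u,g)$ is a strict descendant of $v$ and hence $\tau(u) \neq v$. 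The term $y_2^t(v)$ counts $u \in \ST^t_i(v)$ with $i \neq j$ and $\tau(u) \neq v$; but for correct such $u$, since $u$ and $g$ lie in different children of the path-node $v$, $\mathrm{LCA}(u, g) = v$ and so $\tau(u) = v$, meaning such correct $u$ are excluded from $y_2^t$.

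To bound the contribution of an erroneous $u \in \cE^t$, let $p := \mathrm{LCA}(u, g)$, a node on $P^*$, and observe that $u$ contributes to $x^t(v)$ only if $v$ is a proper ancestor of $u$. If $u \in P^*$ (so $u = p$), every proper ancestor $v$ of $u$ lies on $P^*$ and satisfies $u \in \ST^t_j(v)$, so $u$ can only be counted by $y_1^t(v)$, which requires $\tau(u) = v$; this occurs for at most one $v$. If $u \notin P^*$, write $u \in \ST^t_{i_0}(p)$ with $i_0 \neq j$, and partition the proper ancestors of $u$ into three disjoint groups: (i) strict ancestors of $p$, all on $P^*$ with $u \in \ST^t_j(v)$, each contributing to $y_1^t(v)$ iff $\tau(u) = v$; (ii) $v = p$ itself, contributing to $y_2^t(p)$ iff $\tau(u) \neq p$; and (iii) strict proper ancestors of $u$ lying below $p$, which all sit in $\ST^t_{i_0}(p)$ off $P^*$, contributing to $\sigma^t(v)$ iff $\tau(u) = v$. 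Since $\tau(u)$ is a single (possibly $\bot$) node, groups (i) and (iii) together yield at most one contribution; combining with the at-most-one contribution at $v = p$ gives a total of at most $2$.

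The main technical obstacle is the disjoint-case bookkeeping in the last step: one must verify that the contribution at $v = p$ (triggered precisely when $\tau(u) \neq p$) and the contribution at $v = \tau(u)$ (triggered precisely when $\tau(u)$ is a proper ancestor of $u$ distinct from $p$) are tracked as independent events, so that their combined count is at most $2$ rather than, say, $3$ if one over-counts strict-ancestor and strict-descendant cases of $p$. Once the location of $\tau(u)$ relative to $p$ and $P^*$ is pinned down via Claim~\ref{clm:anchor} and the anchor definition, the enumeration is direct.
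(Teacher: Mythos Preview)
Your proof is correct and is essentially the same argument as the paper's, just organized dually: the paper splits the sum into the ``anchor'' part $\sum_{v\notin P^*}\sigma^t(v)+\sum_{v\in P^*}y_1^t(v)$ and the ``non-anchor'' part $\sum_{v\in P^*}y_2^t(v)$, bounding each by $|\cE^t|$ via exactly the observations you use, whereas you fix an erroneous $u$ and show it lands at most once in each part. The underlying content---correct nodes never appear, the $\tau(u)=v$ contributions are pinned to a single $v$, and the $y_2$ contributions across $v\in P^*$ are disjoint---is identical.
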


\begin{proof}
  Let $P^*$ be the $r$-$g$ path in $T$. If $g\notin T_v$ (i.e.,
  $v\notin P^*$), then by \Cref{clm:anchor} all the nodes with $v$ as
  anchor belong to $\cE$. Else suppose $g\in T_v$ (i.e., $v\in P^*$),
  and suppose $g\in T_{\chi_j(v)}$. Now all nodes $u$ in $\ST_j(v)$
  having anchor $v$ belong to $\cE$, since the least common ancestor
  of $u$ and $g$ can be no higher than $\chi_j(v)$. This means
  \[\sum_{v\in T^t\setminus P^*}x^t(v) + \sum_{v\in P^*}y_1^t(v) \leq
    \sum_{v\in T^t} |\{u\in \cE \mid \tau(u)=v\}|\leq |\cE^t|.\]
  Finally, suppose $g\in T_v$ (i.e., $v\in P^*$) and
  $g\in T_{\chi_j(v)}$. Now for any node $u \in T_{\chi_i(v)}$ for
  $i\neq j$, the least common ancestor of $u$ and $g$ is $v$. Hence
  nodes in $T_{\chi_i(v)}$ for $i\neq j$ whose anchor is not $v$ must
  be wrongly predicted. Denote the set of such nodes by $Y_2^t(v)$.
  Note that $|Y_2^t(v)|=y_2^t(v)$, and $Y_2^t(v)$ for each $v \in P^*$
  are disjoint. Hence we have
  \[\sum_{v\in P^*}y_2^t(v)\leq \sum_{v\in P^*}|Y_2^t(v)|\leq |\cE^t|.\]
  Summing the two inequalities we get the proof.
\end{proof}

\begin{lemma}
  \label{lemma: not entering if sigma is not minimal}
  For any node $v\in T$ and any index $i\in \{1,2, \ldots ,\Delta_v\}$
  such that
  $\sigma_i^t(v) > \min_{q} \{\sigma_q^t(v)\mid \chi_q(v) \text{ is
    active at time } t \}$. If $v_t \in T_{\chi_j(v)}$
  for some  $j\neq i$ then $v_{t+1}\notin T_{\chi_i(v)}$.
\end{lemma}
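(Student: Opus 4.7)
My approach is a case analysis of how $v_{t+1}$ is set in \Cref{alg:degree-k}, driven by the single observation that whenever the algorithm commits to enter a new subtree $\chi_q(w)$---either in the critical-node branch (\cref{l:check}) or in the inner while loop---the index $q$ is chosen as the minimum-$\sigma^t$ active child of $w$. By the hypothesis $\sigma^t_i(v) > \min_q\{\sigma^t_q(v)\mid \chi_q(v)\text{ active}\}$, the algorithm can therefore never pick the index $i$ when deciding at the node $v$. So the only way to slip into $T_{\chi_i(v)}$ is via memory pointers, and I will show these jumps cannot land there either.

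The core of the argument is an invariant: throughout the time-$t$ decision process, the current candidate vertex $u$ satisfies $u \notin T_{\chi_i(v)}$, and moreover $u \neq v$. For the initial $u$: in the else branch $u = v_t \in T_{\chi_j(v)}$; in the critical branch $u = \mem(\chi_q(\tau(v_t)))$, where the subtree $T_{\chi_q(\tau(v_t))}$ is either disjoint from $T_v$ (if $\tau(v_t)$ sits above $v$ and $q$ points to a sibling of the ancestor of $v$, or if $\tau(v_t)\in T_{\chi_j(v)}$) or contains all of $T_v$ (if $\chi_q(\tau(v_t))$ is an ancestor of $v$). In the former case $\mem$ is outside $T_v$; in the latter case $\mem(\chi_q(\tau(v_t))) = v_t$ because $v_t$ is the most recent visited descendant and the algorithm updates $\mem$ along every ancestor after each visit. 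Either way $u \notin T_{\chi_i(v)}$.

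For an inductive step over iterations of the inner loop, let $w$ be $u$'s closest active ancestor. Since $u \notin T_{\chi_i(v)}$, no ancestor of $u$ is a strict descendant of $\chi_i(v)$; hence $w \notin T_{\chi_i(v)}$. The chosen $q$ is the minimum-load active child of $w$, so $\chi_q(w) = \chi_i(v)$ would force $w = v$ and $q = i$, contradicting the hypothesis. Therefore $\chi_q(w) \neq \chi_i(v)$. The subtree $T_{\chi_q(w)}$ is then either disjoint from $T_{\chi_i(v)}$ (so $\mem(\chi_q(w))$ stays out of $T_{\chi_i(v)}$) or an ancestor-subtree of $\chi_i(v)$, in which case $\chi_q(w)$ is also an ancestor of $v_t$, and the invariant of $\mem$ gives $\mem(\chi_q(w)) = v_t \in T_{\chi_j(v)}$. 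To see $u \neq v$ throughout, note that $\mem(x) = v$ would require $v$ to be the most recently visited descendant of $x$, but any ancestor $x$ of $v$ is also an ancestor of $v_t$, forcing $\mem(x) = v_t \neq v$; any other $x$ has $v \notin T_x$.

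To finish, I inspect the three points where $v_{t+1}$ is set. If $v_{t+1} = \chi_q(\tau(v_t))$ or $v_{t+1} = \chi_q(w)$, the argument above already gives $v_{t+1} \notin T_{\chi_i(v)}$. If $v_{t+1}$ is set on the last line as $u$'s smallest-prediction child, then since $u \notin T_{\chi_i(v)}$ and $u \neq v$, the child of $u$ lies in $T_{\chi_i(v)}$ only if $u$ is the parent of some node in $T_{\chi_i(v)}$; a direct child of $u$ sits at level $\ell(u)+1$, so being equal to $\chi_i(v)$ forces $\ell(u) = \ell(v)$ and $u = v$ (for ancestors), while being a strict descendant of $\chi_i(v)$ would force $u \in T_{\chi_i(v)}$. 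Both are excluded, so $v_{t+1} \notin T_{\chi_i(v)}$. The main obstacle I expect is purely bookkeeping: tracking the state of $\mem$ under repeated jumps. The key simplification is the clean fact that $\mem(x) = v_t$ whenever $x$ is an ancestor of $v_t$, which collapses all memory-based excursions back into $T_{\chi_j(v)}$.
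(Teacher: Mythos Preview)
Your approach is correct and aligns with the paper's: both rest on the single observation that whenever the algorithm selects a child subtree at a node $w$, it picks the one of minimum load, so at $w=v$ it can never select index $i$. The paper's own proof is essentially a three-sentence sketch that asserts ``\Cref{alg:degree-k} will ensure that $v_{t+1}$ either remains in $T_{\chi_j(v)}$ or moves into $T_{\chi_w(v)}$'' without unpacking the mechanism; you supply the detailed case analysis the paper omits.

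Two small imprecisions in your case split are worth tightening. First, when $\tau(v_t)=v$ itself, the subtree $T_{\chi_q(\tau(v_t))}=T_{\chi_q(v)}$ is neither disjoint from $T_v$ nor does it contain $T_v$; it is a proper child-subtree of $v$. You should add this as a third case and note that here $q\neq i$ directly by the minimum-load rule (one also needs the easy observation that criticality forces $\sigma_{j_0}^t(\tau(v_t))>\sigma^t_{q_{j_0}}(\tau(v_t))$, so excluding $j_0$ from the argmin does not change it). Second, when $\tau(v_t)\in T_{\chi_j(v)}$ you wrote ``disjoint from $T_v$''; in fact $T_{\chi_q(\tau(v_t))}\subseteq T_{\chi_j(v)}\subseteq T_v$, though it \emph{is} disjoint from $T_{\chi_i(v)}$, which is all you need. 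Neither issue affects the validity of your argument.
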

\begin{proof}
  The proof is by contradiction. Assume there is such a time $t$, and
  let
  $w := \arg\min_q \{\sigma_q^t(v)\mid \chi_q(v) \text{ is active at
    time } t\}$. Since $v_{t+1} \in T_{\chi_i(v)}$, the subtree under
  node $\chi_i(v)$ was not fully visited at time $r$ and hence
  $\chi_i(v)$ was active. By the definition of $w$ and the condition
  on $i$ in the lemma statement, we have
  $\sigma_i^t(v)>\sigma_w^t(v)$. Now \Cref{alg:degree-k} will ensure
  that $v_{t+1}$ either remains in $T_{\chi_j(v)}$ or moves into $T_{\chi_w(v)}$.
\end{proof}

\begin{lemma}
  \label{lemma: inequalities guarantee in Alg 2}
  For any node $v$ on the $r$-$g$ path $P^*$, recall the assumption
  that $g\in \ST_1(v)$. For any time $t$ and any $i\neq 1$, at least
  one of the following statements must hold:
  \begin{enumerate}[(I), noitemsep]
  \item $\sigma_i^{t}(v) \leq 2\sigma_1^{t}(v) $.
  \item $2\sigma_i^{t}(v) \leq |\ST_i^{t}(v)| $.
  \item $\sigma_i^t(v)=|\ST_i^t(v)|=1, \sigma_1^t(v)=0$.
  \end{enumerate}
\end{lemma}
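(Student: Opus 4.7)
The plan is to prove this by induction on $t$, with base case $t=0$ trivial since $\sigma_i^0(v)=\sigma_1^0(v)=|\ST_i^0(v)|=0$ and (I) holds. For the inductive step I classify the transition by where $v_t$ sits and what $\tau(v_t)$ is, and verify that whichever of (I), (II), or (III) was satisfied at $t-1$ survives the update. Several cases are immediate from monotonicity: if $v_t\notin T_v$ or $v_t\in T_{\chi_k(v)}$ with $k\neq 1,i$ then none of $\sigma_i^t(v),\sigma_1^t(v),|\ST_i^t(v)|$ changes; if $v_t\in T_{\chi_1(v)}$ only $\sigma_1^t(v)$ may grow, which helps (I) and smoothly turns a (III) at $t-1$ either into (III) again or into (I); and if $v_t\in T_{\chi_i(v)}$ with $\tau(v_t)\neq v$ only $|\ST_i^t(v)|$ grows, which helps (II) and turns a (III) at $t-1$ into (II).

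The delicate case is $v_t\in T_{\chi_i(v)}$ with $\tau(v_t)=v$, where $\sigma_i^t(v)$ and $|\ST_i^t(v)|$ each grow by one. If (I) was strict at $t-1$ then $\sigma_i^t(v)\leq 2\sigma_1^{t-1}(v)=2\sigma_1^t(v)$ so (I) survives; if (II) was strict at $t-1$ we get (II) at $t$; and if (I) and (II) were both tight at $t-1$ with $\sigma_i^{t-1}(v)=0$ then (III) is freshly triggered at $t$. The sub-case "(III) at $t-1$" is handled by observing that $|\ST_i^{t-1}(v)|=1$ forces $v_{t-1}=\chi_i(v)$, whence $\tau(v_{t-1})=v$ and the criticality of $v$ with respect to $i$ at time $t-1$ is trivially verified ($\sigma_{q_i}^{t-1}(v)=\sigma_1^{t-1}(v)=0$ and $2\sigma_i^{t-1}(v)\geq |\ST_i^{t-1}(v)|$); the algorithm's critical check at iteration $t-1$ then routes $v_t$ into a sibling subtree of $\chi_i(v)$, contradicting $v_t\in T_{\chi_i(v)}$.

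The main obstacle is the remaining sub-case where (I) and (II) are both tight at $t-1$ with $\sigma_i^{t-1}(v)\geq 2$. Here my plan is to trace back to the most recent time $s<t$ at which $\sigma_i(v)$ was incremented by a node with $\tau=v$; such an $s$ exists because $\sigma_i^{t-1}(v)\geq 2$, and the tight equalities at $t-1$ combined with the monotonicity $\sigma_i^s(v)=\sigma_i^{t-1}(v)$, $\sigma_1^s(v)\leq\sigma_1^{t-1}(v)$, $|\ST_i^s(v)|\leq|\ST_i^{t-1}(v)|$ imply that $v$ is critical with respect to $i$ at time $s$ too; since $\tau(v_s)=v$, the algorithm's critical check fires at iteration $s$ and $v_{s+1}\notin T_{\chi_i(v)}$. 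For $\sigma_i(v)$ to increment again at time $t$, the algorithm must re-enter $T_{\chi_i(v)}$ at some $t_0\in(s,t]$; I would use \Cref{lemma: not entering if sigma is not minimal} to argue that any re-entry coming from a sibling subtree forces $\sigma_i^{t_0-1}(v)$ to be the minimum active-sibling load, yielding $\sigma_1^{t-1}(v)\geq \sigma_1^{t_0-1}(v)\geq\sigma_i^{t_0-1}(v)=\sigma_i^{t-1}(v)=2\sigma_1^{t-1}(v)$ and hence $\sigma_1^{t-1}(v)\leq 0$, contradicting $\sigma_1^{t-1}(v)\geq 1$. The hardest technical point I anticipate is ruling out re-entries into $T_{\chi_i(v)}$ initiated by critical jumps at a strict ancestor of $v$ (which can land on $\mem$-stored positions on the $r$-$v$ path and then DFS-descend into subtree $i$); I expect to handle these either by reducing them to a final sibling-level transition to which \Cref{lemma: not entering if sigma is not minimal} still applies, or by a direct argument about how $\mem(v)$ evolves between $s$ and $t$.
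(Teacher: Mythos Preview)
Your overall strategy—the trace-back to the last time $\sigma_i(v)$ was incremented, criticality of $v$ at that moment, and a re-entry argument via \Cref{lemma: not entering if sigma is not minimal}—is the same core mechanism the paper uses. The paper packages it as a contradiction at the \emph{first} time $t$ where all three statements fail, rather than as an induction on $t$; that reframing is what lets the paper avoid a detailed case-split on which of (I), (II), (III) held at $t-1$.

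Two concrete gaps in your plan:

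\textbf{The ``(III) at $t-1$'' case is not as easy as you say.} The assertion that $|\ST_i^{t-1}(v)|=1$ forces $v_{t-1}=\chi_i(v)$ is unjustified: $\chi_i(v)$ may have been visited at some earlier time $s<t-1$, with the algorithm wandering elsewhere between $s$ and $t-1$, and (III) can persist across many steps. So $v_{t-1}$ need not be $\chi_i(v)$, and your criticality-at-$t{-}1$ argument does not fire. What actually handles this case is the same trace-back/re-entry reasoning you outline for the ``main obstacle''; it is not a separate quick case.

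\textbf{Your case-split is not exhaustive.} You cover ``(I) strict'', ``(II) strict'', ``(I) and (II) both tight'', and ``(III)'', but you miss ``(I) tight with (II) failing'' and ``(I) failing with (II) tight''. For instance, $\sigma_i^{t-1}=2,\ \sigma_1^{t-1}=1,\ |\ST_i^{t-1}|=3$ has (I) holding with equality and (II) failing; after $v_t\in T_{\chi_i(v)}$ with $\tau(v_t)=v$ all three fail at $t$, yet none of your five cases applies. These missing cases are again dispatched by the same trace-back argument (the criticality inequalities at $s$ go through because failing an inequality is only stronger than tightness), so the plan is salvageable—but you should recognize that the ``main obstacle'' argument is the \emph{only} nontrivial case, not one of several.

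This is precisely what the paper's contradiction framing buys: once all three fail at $t$, one deduces $\sigma_i^t(v)\geq 2$ directly (else (I) or (III) would hold) and runs the trace-back uniformly, with no need to track which invariant held at $t-1$. Your flagging of the re-entry-from-strict-ancestor subtlety is apt; the paper's invocation of \Cref{lemma: not entering if sigma is not minimal} leaves this somewhat implicit, and your proposed fix (reduce to a sibling-level transition or track $\mem(v)$) is the right direction.
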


\begin{proof}
  For sake of a contradiction, assume there exists $t,i$ such that at
  time $t$ none of the three statements are true, and this is the
  first such time. 
  If $|\ST_i^{t}(v)|=1$, then the falsity of second statement gives
  $\sigma_i^{t}(v) > \nf12\, |\ST_i^{t}(v)|=\nf12$, and so
  $\sigma_i^{t}(v)=1$. Then the first statement being false implies
  $\sigma_1^{t}(v)<\nf12$, which means the third
  statement must hold.

  Henceforth let us assume $|\ST_i^{t}(v)|\geq 2$. Let $t'<t$ be the
  latest time such $v_{t'}\in \ST_i(v)$ and $\tau(v_{t'})=v$. Because the second statement is false, 
  $\sigma_i^{t}(v)> \nf12 \, |\ST_i^{t}(v)| \geq 1$, and so such a time
  $t'$ exists.

 Since $t'$ is the latest time satisfying the condition, we have $\sigma_i^{t}(v)\leq \sigma_i^{t'}(v)+1$. Moreover,  the number of nodes in $C_i^t(v)$ whose anchor is not $v$ does not decrease, hence $|\ST_i^{t}(v)| - \sigma_i^{t}(v)\geq
  |\ST_i^{t'}(v)|-\sigma_i^{t'}(v)$. Also, the number of nodes in $C_1^t(v)$ whose anchor is $v$ does not decrease, hence $\sigma_1^{t}(v)\geq
  \sigma_1^{t'}(v)$. 

  Thus we can get 

\begin{equation}
\label{equation: inequalities for t' alg 2}
    \begin{aligned}
        &\sigma_i^{t'}(v)-2\sigma_1^{t'}(v) \geq \sigma_i^{t}(v)-2\sigma_1^{t}(v) -1 \geq 0\\
        &2\sigma_i^{t'}(v) - |\ST_i^{t'}(v)| \geq 2\sigma_i^{t}(v) - |\ST_i^{t}(v)| - 1\geq 0
    \end{aligned}
\end{equation}

Now if $C_i^{t'}(v)$ is completely visited, then obviously $v_{t'+1}\notin C_i(v)$. Otherwise, $C_i^{t'}(v)$ is active. Also because $g\in \ST_1(v)$, hence $\ST_1(v)$ cannot be completely visited unless the algorithm ends, which means $v$ is not degenerate and $C_1^{t'}(v)$ is still active. Furthermore, we have inequalities \eqref{equation: inequalities for t' alg 2}, hence $v$ must be critical w.r.t. the subtree containing $v_{t'}$ (because taking $q=1$ we get the two inequalities for critical hold, although $\sigma_1^{t'}(v)$ may not be the smallest one). Hence at time $t'+1$ the algorithm will go to a node which is not in $\ST_i(v)$. 

\textbf{If $v_t \notin C_i^t(v)$}:
Note that one of the three statements holds for $t'$.
If one of the first two statements is true to $t'$, then the same statement is also true to $t$ because $\sigma_i^t(v)=\sigma_i^{t'}(v)$, $|C_i^t(v)|=|C_i^{t'}(v)|$ and $\sigma_1^t(v)\geq \sigma_1^{t'}(v)$. Otherwise we have $\sigma_i^t(v) = \sigma_i^{t'}(v) = |C_i^t(v)|=|C_i^{t'}(v)| = 1$. Then if $\sigma_1^t(v)=0$, then the third statement is true to $t$; if $\sigma_1^t(v)\geq 1$, then the first statement is true to $t$.

\textbf{Otherwise $v_t\in C_i^t(v)$}:
By \Cref{lemma: not entering if sigma is not minimal}, there must exist a time $t>t''>t'$ such that $\sigma_1^{t''}(v)\geq \sigma_i^{t''}(v)$ (otherwise the algorithm will never enter $C_i(v)$ since $C_1(v)$ is always active). 
 Hence by the analysis before, we have $\sigma_1^{t''}(v)\geq \sigma_i^{t'}(v)\geq 1$. 
 Because $t'$ is defined as the latest time before $t$ when $v_t\in C_i(v)$, we have $\sigma_i^{t''}(v) = \sigma_i^{t'}(v)$.
 Hence $\sigma_i^{t}(v)\leq \sigma_i^{t'}(v)+1\leq 2\sigma_i^{t''}(v)\leq 2\sigma_1^{t''}(v)\leq 2\sigma_1^t(v)$, which is the first statement in this lemma. 
\end{proof}

\begin{lemma}
  \label{lemma: extra exploration bounded for each node alg2}
  For any node $v$ on the $r$-$g$ path $P^*$, and any time $t$,
  \begin{enumerate}[(i), nolistsep]
  \item if $f(\chi_i(v))=d(\chi_i(v),g)$ for all $i\in [\Delta_v]$
    then $\sum_{i\neq 1}|\ST_i^t(v)|\leq 3\Delta x^t(v)$, 
  \item else $\sum_{i\neq 1}|\ST_i^t(v)|\leq 3\Delta x^t(v)+\Delta $.
  \end{enumerate}
\end{lemma}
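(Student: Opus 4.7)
The plan is to bound $\sum_{i\neq 1}|\ST_i^t(v)|$ by applying \Cref{lemma: inequalities guarantee in Alg 2} to each index $i\neq 1$ individually, then grouping the indices by which of the three alternatives (I)--(III) applies. Writing $\delta_i := |\ST_i^t(v)| - \sigma_i^t(v) \geq 0$ (so that $y_2^t(v) = \sum_{i\neq 1}\delta_i$ and $y_1^t(v)=\sigma_1^t(v)$), the three alternatives translate respectively into: (II) gives $\sigma_i^t(v)\leq \delta_i$ and hence $|\ST_i^t(v)|\leq 2\delta_i$; (I) gives $|\ST_i^t(v)|=\sigma_i^t(v)+\delta_i \leq 2 y_1^t(v)+\delta_i$; and (III) forces $|\ST_i^t(v)|=1$.

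Summing the (II)-bound over its indices contributes at most $2 y_2^t(v)$, and summing the (I)-bound over the at most $\Delta-1$ remaining non-(III) indices contributes at most $2(\Delta-1)y_1^t(v) + y_2^t(v)$. Adding these gives $3 y_2^t(v)+2(\Delta-1)y_1^t(v) \leq 3\Delta\cdot x^t(v)$. The (III) indices each contribute a single visited node, and there are at most $\Delta-1$ such indices, for an extra additive $\Delta$. Combining these bounds proves~(ii).

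To upgrade to the sharper bound~(i), the plan is to rule out case~(III) entirely when every child of $v$ has a correct prediction. Under this hypothesis, $f(\chi_1(v)) = D - \ell(v) - 1$ is strictly smaller than $f(\chi_j(v)) = D - \ell(v) + 1$ for any $j\neq 1$, so whenever \Cref{alg:degree-k} is at $v$ and selects a child by smallest prediction (the ``pure DFS'' branch), it picks $\chi_1(v)$. Thus the algorithm can reach a sibling $\chi_i(v)$ with $i\neq 1$ only through the criticality trigger at $v$ or through the $\arg\min_q \sigma_q$ backtracking step. The criticality trigger at $v$ with $v_t\in \ST_1(v)$ requires $2\sigma_1^t(v)\geq |\ST_1^t(v)|\geq 1$, forcing $\sigma_1^t(v)\geq 1$; a parallel argument using the $\mem(\cdot)$ mechanism (which keeps the algorithm descending inside $\ST_1(v)$ until the load $\sigma_1^t(v)$ is nonzero) handles the backtracking branch. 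Either way, $\sigma_1^t(v)\geq 1$ by the moment $\chi_i(v)$ is first visited, contradicting the $\sigma_1^t(v)=0$ condition of (III). Hence only the (II)- and (I)-bounds ever contribute, giving the sharper estimate.

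The main technical obstacle is the last step: making the ``$\sigma_1^t(v)\geq 1$ by first entry to $\chi_i(v)$'' argument watertight for the $\arg\min$-backtracking branch, since that branch can fire at $w=v$ from a dead-end inside $\ST_1(v)$ and the algorithm's tie-breaking between $q=1$ (re-entering $\ST_1$ via $\mem$) and a fresh $q=i$ has to be pinned down. A clean invariant of the form \emph{``whenever $\chi_i(v)$ is visited for the first time for some $i\neq 1$, some previously-visited node in $\ST_1(v)$ already has $v$ as its anchor''} — maintained across the interleaving of DFS, criticality, and backtracking moves of \Cref{alg:degree-k} — would close the argument and yield~(i).
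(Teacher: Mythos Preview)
Your counting argument for part~(ii) is correct and is essentially the paper's proof, just organized differently: the paper merges cases~(I) and~(II) into a single per-index inequality $x_i^t(v):=|\ST_i^t(v)|-\sigma_i^t(v)+\sigma_1^t(v)\geq \sigma_i^t(v)/2$ (or $\geq (\sigma_i^t(v)-1)/2$ in the general case), yielding $|\ST_i^t(v)|\leq 3x_i^t(v)$ (resp.\ $3x_i^t(v)+1$), and then sums over $i\neq 1$. Your case-split achieves the same bound.

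For part~(i), your plan to rule out case~(III) by showing $\sigma_1^t(v)\geq 1$ at the first entry to any $\chi_i(v)$ with $i\neq 1$ is exactly what the paper does, though the paper states it tersely (``by the algorithm, the first node reached among $\{\chi_i(v)\}$ must be $\chi_1(v)$, which means the third statement cannot hold''). Your stated ``technical obstacle'' about the $\arg\min$-backtracking branch is not a real gap. The observation you are missing is that $\chi_1(v)$ is \emph{always active}, because $g\in T_{\chi_1(v)}$ and the algorithm has not yet terminated. Consequently, if $u$ is any leaf inside $\ST_1(v)$, its closest active ancestor lies in $T_{\chi_1(v)}$ (at worst $\chi_1(v)$ itself), never $v$; so the backtracking loop cannot select $w=v$ from within $\ST_1(v)$. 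Combined with the facts that (a)~the DFS step at $u=v$ occurs only the first time $v$ is visited and then picks $\chi_1(v)$ by smallest prediction, and (b)~after $\chi_1(v)$ is visited no $\mem(\cdot)$ ever points back to $v$, the only remaining way to first reach $\chi_i(v)$ is your criticality case, which already forces $\sigma_1^t(v)\geq 1$. This closes the argument without any further invariant.
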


\begin{proof} 
  For the first case: if $f(\chi_i(v))=d(\chi_i(v),g)$ for all $i$, then $f(\chi_1(v))$ is the smallest label among all $f(\chi_i(v))$ since the predictions are all correct. Hence by the algorithm, the first node reached among $\{\chi_i(v)\}$ must be $\chi_1(v)$, which means the third statement in \Cref{lemma: inequalities guarantee in Alg 2} cannot hold. By \Cref{lemma: inequalities guarantee in Alg 2}, for any $i,t$, $\sigma_i^{t}(v)\leq 2\sigma_1^t(v)$ or $2\sigma_i^t(v)\leq |\ST_i^t(v)|$.
  
If $\sigma_i^{t}(v)\leq 2\sigma_1^t(v)$: $|\ST_i^t(v)|-\sigma_i^t(v)+\sigma_1^t(v)\geq \sigma_1^t(v)\geq  \sigma_i^t(v)/2$; If $2\sigma_i^t(v)\leq |\ST_i^t(v)|$: $|\ST_i^t(v)|-\sigma_i^t(v)+\sigma_1^t(v)\geq |\ST_i^t(v)|-\sigma_i^t(v) \geq  \sigma_i^t(v)$. Either of them can lead to a conclusion that \[|\ST_i^t(v)|-\sigma_i^t(v)+\sigma_1^t(v)\geq \sigma_i^t(v)/2.\] 
Denote $x_i^t(v):=|\ST_i^t(v)|-\sigma_i^t(v)+\sigma_1^t(v)$.
Then by $\sigma_1^t(v)\geq 0$ and the inequality above, we have $|\ST_i^t(v)|\leq x_i^t(v) +\sigma_i^t(v)\leq 3x_i^t(v)$.

Hence $\sum_{i\neq 1}|\ST_i^t(v)|\leq 3\sum_{i\neq 1}x_i^t(v)
=3\sum_{i\neq 1}(|\ST_i^t(v)|-\sigma_i^t(v) + (\Delta-1) \sigma_1^t(v))
\leq 3\Delta ( \sigma_1^t(v) + \sum_{i\neq 1}|\ST_i^t(v)|-\sigma_i^t(v))
= 3\Delta x^t(v)$. Here the last equality is because of \Cref{lemma: error for each node alg2}.

Second, consider other cases. By \Cref{lemma: inequalities guarantee in Alg 2}, $\sigma_i^{t}(v)\leq 2\sigma_1^t(v)+1$ or $2\sigma_i^t(v)\leq |\ST_i^t(v)|+1$.

If $\sigma_i^{t}(v)\leq 2\sigma_1^t(v)+1$: $|\ST_i^t(v)|-\sigma_i^t(v)+\sigma_1^t(v)+\nf 12\geq \sigma_1^t(v) + \nf12\geq  \sigma_i^t(v)/2$; If $2\sigma_i^t(v)\leq |\ST_i^t(v)|+1$: $|\ST_i^t(v)|-\sigma_i^t(v)+\sigma_1^t(v)+\nf12\geq |\ST_i^t(v)|-\sigma_i^t(v) + \nf12 \geq  \sigma_i^t(v)$.
Either of them can lead to a conclusion that \[|\ST_i^t(v)|-\sigma_i^t(v)+\sigma_1^t(v)+1/2\geq \sigma_i^t(v)/2.\] Denote $x_i^t(v):=|\ST_i^t(v)|-\sigma_i^t(v)+\sigma_1^t(v)$,
then $|\ST_i^t(v)|\leq x_i^t(v) +\sigma_i^t(v)\leq 3x_i^t(v)+1$.

Consequently $\sum_{i\neq 1}|\ST_i^t(v)|\leq \sum_{i\neq
  i}(3x_i^t(v)+1)= 3\Delta x^t(v)+\Delta $,  where the last equality is because of \Cref{lemma: error for each node alg2}.
\end{proof}

We can finally bound the extra exploration.

\begin{proof}[Proof of~\Cref{lemma: extra exploration totally bounded}]
  Divide the set of nodes on $P^*$ into two sets $A,B$: $A$ contains
  the nodes all of whose children are correctly labeled, and $B$
  contains the other nodes. Then 
  \begin{align}
    \ExEx(t^*) &= \sum_{v\in A} \sum_{i\neq 1}|\ST_i^{t^*}(v)| + \sum_{v\in B} \sum_{i\neq 1}|\ST_i^{t^*}(v)| \\
          &\stackrel{(\star)}{\leq} \sum_{v\in A} 3\Delta x^{t^*}(v) + \sum_{v\in B} (3\Delta x^{t^*}(v)+\Delta ) \\
          &= 3\Delta \sum_{v\in P^*} x^{t^*}(v) + \Delta |B| 
            \stackrel{(\star\star)}{\leq} 6\Delta |\cE^{t^*}| + \Delta |\cE^{t^*}| 
            = 7\Delta |\cE^{t^*}|.
  \end{align}
  The inequality~$(\star)$ uses \Cref{lemma: extra exploration bounded
    for each node alg2}, and $(\star\star)$ uses 
 \Cref{lemma: error for each node alg2}. This proves \Cref{lemma: extra exploration totally bounded}.
\end{proof}

\subsection{Bounding the Movement Cost}
\label{sec:bound-movem-cost}

In this subsection, we bound the total movement cost (and not just the
number of visited nodes), thereby proving~\Cref{lem:distance-by-extra-exploration}.

First, we partition the edge traversals made by the algorithm into
\emph{downwards} (from a parent to a child) and \emph{upwards} (from a
child to its parent) traversals, and denote the cost incurred by the
downwards and upwards traversals until time $t$ by $\cost_d^t$ and
$\cost_u^t$ respectively. We start at the root and hence get
$\cost_d^t = \cost_u^t + d(r,v_t)$; since we care about the time $t^*$
when we reach the goal state $g$, we have
\begin{equation}
  \label{equation: cost to cost-u}
  \cost^{t^*} = \cost_u^{t^*} +\cost_d^{t^*} = 2\cost_u^{t^*} + d(r,v_t).
\end{equation}
It now suffices to bound the upwards movement $\cost_u^{t^*}$. For any edge
$(u,v)$ with $v$ being the parent and $u$ the child, we further partition the
upwards traversals along this edge into two types:
\begin{enumerate}[(i)]
\item upward traversals when the \textbf{if} statement is true at time $t$
  for a node $v_{s}$ (which lies at or below $u$) and we move the
  traversal to another subtree of $\tau(v_s)$ (which lies at or above
  $v$), and 
\item the unique upward traversal when we have completely visited the
  subtree under the edge.
\end{enumerate}

The second type of traversal happens only once, and it never happens
for the edges on the $r$-$g$ path $P^*$ (since those edges contain the
goal state under it, which is not visited until the very end). Hence
the second type of traversals can be charged to the extra exploration
$\ExEx(t^*)$. It remains to now bound the first type of upwards traversals,
which we refer to as \emph{callback} traversals.

We  further partition the callback traversals based on the identity
of the anchor which was critical at that timestep: let
$\cost_u^{t}(v)$ denote the callback traversal cost at those times $s$
when $v = \tau(v_s)$. Hence the total cost of callback traversals is
$\sum_{v\in T^{t^*}} \cost_u^{t^*}(v)$, and
\begin{gather}
  \cost^{t^*} = d(r,v_t) + 2\bigg(\ExEx(t^*) + \sum_{v \in T^{t^*}}
  \cost^{t^*}_u(v) \bigg). \label{eq:1}
\end{gather}
We now control each term of the latter sum.

\begin{lemma}
  \label{lemma: alg2 bounded callback cost for each node}
  For any time $t$ and any node $v\in T^t$, $\cost_u^t(v)\leq 4\sigma^t(v)$.
\end{lemma}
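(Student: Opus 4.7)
The plan is to bucket the callbacks with anchor $v$ by the index $j_s$ of the child-subtree of $v$ containing $v_s$, and to prove a geometric doubling bound within each bucket. For a single callback at time $s$ with $\tau(v_s)=v$, the contribution to $\cost_u^t(v)$ is $d(v_s,v)\leq |\ST^s_{j_s}(v)|\leq 2\sigma^s_{j_s}(v)$, using criticality condition~(ii). So the lemma reduces to showing $\sum_{s}\sigma^s_{j_s}(v)\leq 2\sigma^t(v)$, where the sum is over callbacks with anchor $v$.

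Fix a child index $j$ and let $s_1<\cdots<s_{m_j}$ enumerate the callback times from subtree $j$, and write $a_l:=\sigma^{s_l}_j(v)$. The heart of the argument is the doubling bound $a_{l+1}\geq 2a_l$. Let $t^*\leq s_{l+1}$ be the \emph{last} time the algorithm re-enters $\ST_j(v)$ before $s_{l+1}$, so $v_{t^*-1}\notin\ST_j(v)$ and $v_{t^*}\in\ST_j(v)$; by monotonicity of $\sigma_j$, one has $\sigma_j^{t^*-1}(v)\geq a_l$. The key observation is that at the decision time $t^*-1$ the algorithm must be selecting a node of $\ST_j(v)$ via either the critical-if branch with $\tau(v_{t^*-1})=v$ and $q=j$, or via the backtracking while-loop with $w=v$ and $q=j$; both of these pick $j$ as the min-load active child of $v$. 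The remaining case (the smallest-prediction default with $u=v$ reached via $\mem$-jumps) is ruled out, because $\mem(x)=v$ for any ancestor $x$ of $v$ is incompatible with any earlier visit to a descendant of $v$ — and $v_{s_l}\in\ST_j(v)$ is such a visit. Hence $\sigma^{t^*-1}_j(v)\leq \sigma^{t^*-1}_{q_j}(v)$, where $q_j$ is the min-load child of $v$ other than $j$ at time $s_{l+1}$ (and $\chi_{q_j}(v)$ is active at $t^*-1$ since activity only decreases over time). Combining with criticality condition~(i) at $s_{l+1}$ and monotonicity of $\sigma_{q_j}$,
\[ a_{l+1}\;\geq\;2\sigma^{s_{l+1}}_{q_j}(v)\;\geq\;2\sigma^{t^*-1}_{q_j}(v)\;\geq\;2\sigma^{t^*-1}_j(v)\;\geq\;2a_l. \]

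Summing the geometric series within a bucket gives $\sum_l a_l\leq 2 a_{m_j}\leq 2\sigma_j^t(v)$, and summing over $j$ produces $\sum_s \sigma^s_{j_s}(v)\leq 2\sigma^t(v)$, which yields $\cost_u^t(v)\leq 4\sigma^t(v)$. The main obstacle is establishing the min-load invariant at the re-entry time $t^*-1$; this requires a careful case analysis of the three ways the algorithm can set $v_{t+1}$, particularly the $\mem$-driven jumps inside the while-loop and the critical-if branch, together with a close look at the update rule for $\mem$ to rule out the smallest-prediction default firing at $u=v$ between two callbacks of the same subtree.
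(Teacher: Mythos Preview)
Your proposal is correct and follows essentially the same approach as the paper: bucket callbacks by child index, bound each callback by $2\sigma_j$ via criticality condition~(ii), establish the geometric doubling $\sigma_j^{s_{l+1}}\geq 2\sigma_j^{s_l}$ using that re-entry into $\ST_j(v)$ only occurs when $j$ is the min-load active child of $v$, and sum. The only difference is that the paper packages the min-load-at-re-entry fact into a separate lemma (\Cref{lemma: not entering if sigma is not minimal}) which it invokes as a black box, whereas you sketch the underlying case analysis of how $v_{t+1}$ can be set; the content is the same.
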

\begin{proof}
  For node $v$ and index $j$, let $S$ be the set of times $s \leq t$ for
  which $v_s \in \ST_j^s(v)$ and
  the \textbf{if} condition is satisfied with $\tau(v_s) = v$ (i.e, $\tau(v_s)=v$, $v$ is active and not degenerate and
      $v$ is
      critical w.r.t. the subtree containing\ $v_s$ at time $s$). The cost of the upwards movement at this
  time is $d(v_{s},v) \leq |\ST_j^{s}(v)|\leq 2\sigma_j^{t_i}(v)$; the
  latter inequality is true by criticality.

\Cref{lemma: not entering if sigma is not minimal}
  ensures that we only enter $\ST_j(v)$
  from a node outside it at some time $s$ when
  $j \in \arg\min_{q}\{\sigma_q^{s}(v)\}$. Hence, if
  $S = \{t_1, \ldots, t_m\}$ then for each $i$ there must exist a time
  $s_i$ satisfying $t_i<s_i<t_{i+1}$ such that
  $\min_{q}\{\sigma_q^{s_i}(v)\}=\sigma_j^{s_i}(v)$. Consequently,
  \[\sigma_j^{t_{i+1}} \geq 2\min_{q}\{\sigma_q^{t_{i+1}}(v)\} \geq
    2\min_{q}\{\sigma_q^{s_i}(v)\}=2\sigma_j^{s_i}(v)\geq
    2\sigma_j^{t_i}(v).\]
  Hence, for each $t_i \in S$,
  \begin{gather}
    \sum_{i=1}^m d(v_{t_i},v) \leq \sum_{i=1}^m 2\sigma_j^{t_i}(v)
    \leq 4\sigma_j^{t_m}(v) \leq 4\sigma^t_j(v).
  \end{gather}
  This is the contribution due to a single subtree $T_{\chi_j(v)}$;
  summing over all subtrees gives a bound of $4\sigma^t(v)$, as claimed.
\end{proof}

\begin{proof}[Proof of~\Cref{lem:distance-by-extra-exploration}]
  The equation~(\ref{eq:1}) bounds the total movement cost
  $\cost^{t^*}$ until time $t^*$ in terms of $D$, the extra
  exploration, and the ``callback'' (upwards) traversals
  $\sum_v \cost_u^{t^*}(v)$.  \Cref{lemma: alg2 bounded callback cost for
    each node} above bounds each term $\cost_u^{t^*}(v)$ by
  $4\sigma^{t^*}(v)$. To bound this last summation,
  \begin{itemize}[nolistsep]
  \item For each $v \not\in P^*$, $\sigma^{t^*}(v) = x^{t^*}(v)$ by
    \Cref{lemma: error for each node alg2}. 
  \item For each $v \in P^*$, recall our assumption that
    $g\in \ST_1(v)$, so
    \begin{align*}
    \sum_{v\in P^*} \sigma^{t^*}(v) &= \sum_{v\in P^*} \bigg( \sigma^{t^*}_1(v) + \sum_{i \neq 1}
                      \sigma^{t^*}_i(v) \bigg) \\
                    &\leq \sum_{v \in P^*}  x^{t^*}(v) + \sum_{v \in P^*} \sum_{i \neq 1}
                      |\ST^{t^*}_i(v)| = \sum_{v \in P^*}  x^{t^*}(v) + \ExEx(t^*),
    \end{align*}
    where $\sigma^{t^*}_1(v)\leq x^{t^*}(v)$ is directly given by definition in \Cref{lemma: error for each node alg2}.   
  \end{itemize}
  Summing over all $v$ (using \Cref{lemma: error for each node alg2}), and substituting into~(\ref{eq:1}) gives the claim.
\end{proof}


\section{The General Tree Exploration Algorithm}
\label{sec: alg for unknown D}

We now build on the ideas from known-distance case to give our
algorithm for the case where the true target distance $d(g,r)$ is not
known in advance, and we have to work merely with the predictions. Recall the guarantee we want to prove:
\UnknownD*

Note that Algorithm \AlgoKnownD requires knowing $D$ exactly in computing \textit{anchors}; an approximation to $D$ does not suffice. Because of this, a simple black-box use of Algorithm \AlgoKnownD using a \textit{``guess-and-double''} strategy 
does not seem to work.
The main idea behind our algorithm is clean: we explore increasing
portions of the tree. If most of the predictions we see have been
correct, we show how to find a node whose prediction must be
correct. Now running \Cref{alg:degree-k} rooted at this node can solve
the problem. On the other hand, if most of predictions that we have
seen are incorrect, this gives us enough budget to explore
further. 

\subsection{Definitions}
\label{sec: unknown D preliminary}

\begin{definition}[Subtree $\subtree(u,v)$]
  Given a tree $T$, node $v$ and its neighbor $u$, let $\subtree(u,v)$
  denote the set of nodes $w$ such that the path from $w$ to $v$
  contains $u$.
\end{definition}

\begin{lemma}[Tree Separator]
  \label{lemma:lemma-for-alg-vab}
  Given a tree $T$ with maximum degree $\Delta$ and $|T| = n > 2\Delta$
  nodes, there exists a node $v$ and two neighbors $a,b$ such that
  $|\subtree(a,v)|>\frac{|T|}{2\Delta}$ and
  $|\subtree(b,v)|>\frac{|T|}{2\Delta}$. Moreover, such $v,a,b$ can be
  found in linear time.
\end{lemma}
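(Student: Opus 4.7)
The plan is to use the classical notion of a tree \emph{centroid}: a node $v$ whose removal splits $T$ into components each of size at most $|T|/2$. Such a centroid is well-known to exist in any tree and can be found in linear time (compute subtree sizes in one DFS rooted arbitrarily, then walk towards the heaviest subtree until no child subtree exceeds $|T|/2$). I will take this $v$ as the node in the statement, and argue that among its $\leq \Delta$ neighbors there must be at least two, call them $a$ and $b$, whose corresponding subtrees $\Gamma(a,v),\Gamma(b,v)$ each have size strictly larger than $|T|/(2\Delta)$.

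To justify existence of $a$ and $b$, let the neighbors of $v$ be $u_1,\dots,u_k$ with $k\le\Delta$, and order them so that the sizes $s_i := |\Gamma(u_i,v)|$ satisfy $s_1 \ge s_2 \ge \cdots \ge s_k$. The subtrees $\Gamma(u_i,v)$ partition $T\setminus\{v\}$, so $\sum_i s_i = n-1$ where $n := |T|$. The centroid property gives $s_1\le n/2$. Suppose for contradiction that $s_2 \le n/(2\Delta)$. Then every $s_i$ for $i\ge 2$ is at most $n/(2\Delta)$, so
\[
s_1 \;=\; (n-1) - \sum_{i\ge 2} s_i \;\ge\; (n-1) - (\Delta-1)\cdot\frac{n}{2\Delta} \;=\; \frac{n}{2} + \frac{n}{2\Delta} - 1.
\]
The assumption $n > 2\Delta$ gives $n/(2\Delta) > 1$, so $s_1 > n/2$, contradicting the centroid property. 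Hence $s_2 > n/(2\Delta)$ as well, and we can take $a := u_1$ and $b := u_2$.

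For the running time, the centroid can be located in $O(n)$ time as above, and once the centroid $v$ is identified, choosing the two largest of its (at most $\Delta$) adjacent subtrees takes $O(\Delta) = O(n)$ more time. Combined, this yields the claimed linear-time construction.

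The argument is essentially a one-shot pigeonhole given the centroid, so there is no substantial technical obstacle; the only delicate point is ensuring the condition $n > 2\Delta$ is actually used to turn the weak inequality $s_1 > n/2 - 1$ into the strict violation $s_1 > n/2$ of the centroid property. (If one only had $n \ge 2\Delta$, the argument would degrade and the conclusion would have to be weakened to a non-strict inequality on the subtree sizes.)
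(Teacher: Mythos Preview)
Your proof is correct and follows essentially the same approach as the paper: pick a centroid $v$, take $a,b$ to be the neighbors with the two largest subtrees, and use pigeonhole together with the centroid bound $s_1\le n/2$ and the hypothesis $n>2\Delta$ to conclude. The only cosmetic difference is that the paper bounds $s_1$ and $s_2$ directly by separate averaging arguments, whereas you derive the bound on $s_2$ by contradiction (and get $s_1\ge s_2$ for free).
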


\begin{proof}
  Let $v$ be a \emph{centroid} of tree $T$, i.e., a vertex such that
  deleting $v$ from $T$ breaks it into a forest containing subtrees of
  size at most $n/2$~\cite{Jordan}. Each such subtree corresponds to
  some neighbor of $v$. Let $a,b$ be the neighbors corresponding to
  the two largest subtrees. Then
  $|\subtree(a,v)| \geq \frac{n-1}{\Delta} >
  \frac{n}{2\Delta}$. Moreover the second largest subtree may contain
  $\frac{n-|\subtree(a,v)|-1}{\Delta - 1} \geq \frac{n/2 - 1}{\Delta
    -1 } > \frac{n}{2\Delta}$ when $\Delta < n/2$. 
\end{proof}

\begin{definition}[Vote $\vote(u,c)$ and Dominating vote $\vote(S,c)$]
  Given a center $c$, let the \emph{vote} of any node $u\in T$ be
  $\vote(u,c):=f(u)-d(u,c)$. For any set of nodes $S$, define the
  \emph{dominating vote} to be $\vote(S,c):=x$ if $\vote(u,c) = x$ for
  at least half of the nodes $u\in S$. If such majority value $x$ does
  not exist, define $\vote(S,c) := -1$.
\end{definition}

\subsection{The \TreeX Algorithm}

Given these definitions, we can now give the algorithm. Recall that \Cref{thm:Known-D rigorous} says that \Cref{alg:degree-k} finds $g$ in
  $d(r_\rho,g)  + c_1\Delta\cdot |\cE|$ steps, for some constant $c_1\geq 1$. We proceed in
rounds: in round $\rho$ we run \Cref{alg:degree-k} and visit
approximately $\Delta \cdot (c_1+\beta)^{\rho}$ vertices, where
$\beta \geq 1$ is a parameter to be chosen later. 
Now we focus on two
disjoint and ``centrally located'' subtrees of size $\approx (c_1+\beta)^{\rho}$
within the visited nodes. Either the majority of these nodes have
correct predictions, in which case we use their information to
identify one correct node. Else a majority of them are incorrect, in
which case we have enough budget to go on to the next round. A formal
description appears in \Cref{alg: alg for Graph Searching D not known}.

\begin{algorithm}[h]
  \caption{$\TreeX(r, \beta)$}
  \label{alg: alg for Graph Searching D not known}
  $r_0\gets r$, $D_0\gets f(v)$, $\rho\gets 0$\;
  \While{goal $g$ not found}{
    $B_{\rho} \gets (c_1+\beta)^{\rho} \cdot (2\Delta+1)$ \;
    \If{$B_\rho< D_\rho/\beta$}{
        run $\AlgoKnownD(r_{\rho},D_{\rho}, B_{\rho})$ \;}
    \Else{
        run $\AlgoKnownD(r_{\rho},D_{\rho}, D_\rho + c_1B_{\rho})$ \;
    }
    $T^{\rho+1}\gets$ tree induced by nodes that have ever been visited
    so far \;
    $r_{\rho+1},a_{\rho+1},b_{\rho+1}\gets$ centroid for $T^{\rho}$ and its two neighbors promised by~\Cref{lemma:lemma-for-alg-vab}\;
    let $D_{a,\rho+1}\gets \vote(\subtree(a_{\rho+1}, r_{\rho+1}), r_{\rho+1})$ and $D_{b,\rho+1}\gets \vote(\subtree(b_{\rho+1}, r_{\rho+1}), r_{\rho+1})$\;
    define new distance estimate $D_{\rho+1}\gets \max\{D_{a,\rho+1},D_{b,\rho+1}\} $\;
    move to vertex $r_{\rho+1}$\;
    $\rho\gets \rho+1$\;
  }
\end{algorithm}

\subsection{Analysis of the \TreeX Algorithm}

\begin{lemma}
  \label{lemma: D not known find correct D'}
  If the goal is not
  visited before round $\rho$ when $B_\rho \geq 4|\cE|(2\Delta+1)$, we have $D_\rho = d(r_\rho,g)$.
\end{lemma}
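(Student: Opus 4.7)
The approach I would take is to exploit the fact that when the goal $g$ has not been visited by the start of round $\rho$, the subtree of already-explored nodes $T^\rho$ lies entirely on one side of $g$ in $T$, so correct predictions inside one of the two centroid-guaranteed subtrees of $r_\rho$ unanimously vote for $d(r_\rho, g)$, while the other subtree cannot vote for anything strictly larger.

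I would first pin down the geometry. Since $T^\rho$ is a connected subtree of $T$ containing $r_\rho$ but not $g$, there is a unique ``entry point'' $p \in T^\rho$ (the closest node of $T^\rho$ to $g$) such that for every $u \in T^\rho$ one has $d(u,g) = d(u,p) + d(p,g)$. Next I apply \Cref{lemma:lemma-for-alg-vab} to $T^\rho$ and take the two centroid-neighbor subtrees $\Gamma(a_\rho, r_\rho)$ and $\Gamma(b_\rho, r_\rho)$, each containing more than $|T^\rho|/(2\Delta)$ nodes of $T^\rho$. Because these are disjoint subtrees hanging off distinct neighbors of $r_\rho$, the point $p$ lies in at most one of them; let $S$ be one avoiding $p$. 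For every $u \in S$, the $u$-to-$p$ path in $T$ is forced to pass through $r_\rho$, so $d(u,g) = d(u, r_\rho) + d(r_\rho, g)$. Consequently every correctly-predicted $u \in S$ contributes $\vote(u, r_\rho) = f(u) - d(u, r_\rho) = d(r_\rho, g)$. Once the size bound $|S| > 2|\cE|$ is in hand, a strict majority of votes in $S$ equal $d(r_\rho, g)$, forcing $\vote(S, r_\rho) = d(r_\rho, g)$. For the other subtree, the triangle inequality gives $\vote(u, r_\rho) = d(u,g) - d(u,r_\rho) \leq d(r_\rho, g)$ for every correctly-predicted $u$, and since $>|\cE|$ such $u$ exist there, the dominating vote (if defined) cannot exceed $d(r_\rho, g)$. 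Taking $D_\rho = \max\{D_{a,\rho}, D_{b,\rho}\}$ then yields $D_\rho = d(r_\rho, g)$.

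The main obstacle, and the only quantitative step, is securing $|S| > 2|\cE|$ in both centroid subtrees, which via \Cref{lemma:lemma-for-alg-vab} reduces to a lower bound of the shape $|T^\rho| > 4\Delta |\cE|$. To get this, I would use that in round $\rho-1$ the call to $\AlgoKnownD$ runs until either $g$ is discovered or $|V_t|$ saturates its budget, and that the budget is at least $B_{\rho-1} = B_\rho/(c_1+\beta)$ in both branches of the if/else; since we assumed $g$ was not found, all those visited nodes sit inside $T^\rho$, giving $|T^\rho| \geq B_{\rho-1}$. Combining with the hypothesis $B_\rho \geq 4|\cE|(2\Delta+1)$ produces the required inequality once $\beta$ is fixed appropriately relative to $c_1$. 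The delicate bits are the bookkeeping: verifying that $|V_t|$ is monotone so fresh visits really accumulate in $T^\rho$, checking that both branches of the budget choice deliver the same lower bound, and propagating the constants so that the calculation $|T^\rho|/(2\Delta) > 2|\cE|$ closes cleanly — these feed directly into the constraints on $\beta$ used in the subsequent cost analysis.
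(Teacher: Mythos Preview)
Your proposal is correct and follows essentially the same approach as the paper: both arguments use the centroid-separator lemma to obtain two large subtrees, argue that $g$ (equivalently, your entry point $p$) lies on at most one side so that correct nodes on the other side unanimously vote $d(r_\rho,g)$, and then use the $>2|\cE|$ size bound to force the majority. Your explicit introduction of the ``entry point'' $p$ is just a clean reparametrization of the paper's case split on whether $g \in \subtree(a_\rho,r_\rho)$, and you are in fact more scrupulous than the paper about the $B_{\rho-1}$-versus-$B_\rho$ bookkeeping in the quantitative step.
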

\begin{proof}
First, if $|\cE|=0$, then the conclusion holds obviously. So next we assume $|\cE|>0$.
  The execution of \Cref{alg:degree-k} in round $\rho-1$ visits at least
  $B_{\rho-1} = (c_1+\beta)^{(\rho-1)}\cdot (2\Delta+1)$ distinct nodes. Using
  the assumption on $B_\rho$, we have
  \[ |T^\rho|\geq 4|\cE|\cdot (2\Delta+1) > 4\Delta|\cE| >
    2\Delta. \] \Cref{lemma:lemma-for-alg-vab} now implies that both
  the subtrees $\subtree(a_{\rho},r_{\rho})$ and
  $\subtree(b_\rho,r_\rho)$ contain more than
  $\frac{1}{2\Delta}|T^\rho|>2|\cE|$ nodes. Since at most $|\cE|$ nodes
  are erroneous, more than half of the nodes in each of 
  $\subtree(a_\rho,r_\rho)$ and $\subtree(b_\rho,r_\rho)$ have correct
  predictions.

  Finally, observe that if $g \not\in \subtree(a_\rho,r_\rho)$, then
  for any correct node $x$ in $\subtree(a_\rho,r_\rho)$ we have
  $f(x) = d(x,g) = d(x,r_\rho)+d(r_\rho,g)$, and hence its vote
  $\vote(x,r_\rho) = d(r_\rho,g)$. Since a majority of nodes in
  $\subtree(a_\rho,r_\rho)$ are correct, we get
  \begin{gather}
    D_{a,\rho} = \vote(\subtree(a_{\rho}, r_{\rho}), r_{\rho}) =
    d(r_\rho,g). \label{eq:2}
  \end{gather}
   On the other hand, if
  $g \in \subtree(a_\rho,r_\rho)$, then for any correct node $x$ in
  $\subtree(a_\rho,r_\rho)$ we have
  $f(x) = d(x,g) \leq d(x,a_\rho)+d(a_\rho,g) <
  d(x,r_\rho)+d(r_\rho,g) $. Thus its vote, and hence the vote of a
  strict majority of nodes in the subtree $\subtree(a_\rho,r_\rho)$
  have
  \begin{gather}
    D_{a,\rho} < d(r_\rho,g). \label{eq:3}
  \end{gather}
  If no value is in a strict majority, recall that we define
  $D_{a,\rho} = -1$, which also satisfies~(\ref{eq:3}). The same
  arguments hold for the subtree $\subtree(b_\rho,r_\rho)$ as
  well. Since the goal $g$ belongs to at most one of these subtrees,
  we have that $D_\rho = \max(D_{a,\rho}, D_{b,\rho}) = d(r_\rho,g)$,
  as claimed.
\end{proof}

\begin{lemma}
\label{lemma: D not known vt distance new}
For any round $\rho$, $d(r_\rho,r)\leq O(B_{\rho})$.
Moreover, for any round $\rho$ such that $B_\rho\geq 4|\cE|(2\Delta+1)$, $d(r_\rho,r)\leq O(B_{\rho-1})+ O(\beta|\cE|\Delta)$.
\end{lemma}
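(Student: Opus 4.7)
The key observation underpinning the plan is that $T^\rho$ is a subtree of the underlying tree $G$, so the graph distance between any two of its nodes equals their path length inside $T^\rho$; in particular $r, r_\rho \in T^\rho$ gives $d(r_\rho,r)\le|T^\rho|-1$, and both parts therefore reduce to bounding $|T^\rho|$.

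For part~(i) the plan is to charge each prior round's contribution to its budget and then sum geometrically. Each call to $\AlgoKnownD$ halts as soon as $|V_t|$ reaches its input budget $B$, so it adds at most $B$ new nodes to $T^{\rho'+1}\setminus T^{\rho'}$ in round $\rho'$. The \textbf{if} branch uses $B=B_{\rho'}$; the \textbf{else} branch, under its triggering condition $D_{\rho'}\le\beta B_{\rho'}$, uses $B=D_{\rho'}+c_1 B_{\rho'}\le (c_1+\beta)B_{\rho'}=B_{\rho'+1}$. Hence $|T^\rho|\le\sum_{\rho'<\rho} B_{\rho'+1}$, and since $\{B_\rho\}$ is geometric with ratio $c_1+\beta>1$ this telescopes to $O(B_\rho)$, giving $d(r_\rho,r)=O(B_\rho)$.

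For part~(ii) the plan is to case-split on which branch of $\TreeX$ ran in round $\rho-1$. If it was the \textbf{if} branch, at most $B_{\rho-1}$ new nodes were added that round, and combined with $|T^{\rho-1}|=O(B_{\rho-1})$ (which the argument for part~(i) establishes en route) this yields $|T^\rho|=O(B_{\rho-1})$. If it was the \textbf{else} branch, suppose for contradiction that $B_{\rho-1}\ge 4|\cE|(2\Delta+1)$: then \Cref{lemma: D not known find correct D'} at round $\rho-1$ gives $D_{\rho-1}=d(r_{\rho-1},g)$, and since $c_1 B_{\rho-1}>c_1\Delta|\cE|$, \Cref{thm:Known-D rigorous} forces $\AlgoKnownD(r_{\rho-1},D_{\rho-1},D_{\rho-1}+c_1 B_{\rho-1})$ to reach $g$ within its budget, contradicting that the goal is not yet found. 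Hence $B_{\rho-1}<4|\cE|(2\Delta+1)=O(|\cE|\Delta)$, the round $\rho-1$ budget is at most $(c_1+\beta)B_{\rho-1}=O(\beta|\cE|\Delta)$, and $|T^\rho|\le|T^{\rho-1}|+O(\beta|\cE|\Delta)=O(\beta|\cE|\Delta)$. Combining both branches yields $d(r_\rho,r)\le O(B_{\rho-1})+O(\beta|\cE|\Delta)$.

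The main point to watch is the contradiction step in the \textbf{else} branch: \Cref{thm:Known-D rigorous} bounds the \emph{movement cost} of $\AlgoKnownD$, whereas the budget is a cap on the number of \emph{visited nodes}. Each new node is reached by traversing at least one edge, so the node count is dominated by the movement cost and the cost guarantee translates into reaching $g$ before the node-count budget is exhausted; once this translation is pinned down, the remainder is the routine geometric summation and the above case split.
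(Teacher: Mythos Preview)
Your proof is correct and takes a somewhat different route from the paper's. The paper argues incrementally via the triangle inequality, claiming $d(r_\rho,r_{\rho-1})\le B_\rho$ for each round and summing; for part~(ii) it lets $\rho'$ be the first round with $B_{\rho'}\ge 4|\cE|(2\Delta+1)$, shows every round in $[\rho',\rho-1]$ must take the \textbf{if} branch (else the goal would already be found), and then sums $\sum_{i=\rho'}^{\rho-1}B_i + d(r_{\rho'},r)$. You instead use the single observation $d(r_\rho,r)\le|T^\rho|-1$ and bound $|T^\rho|$ directly: for part~(i) by summing the per-round budgets, and for part~(ii) by a case split on \emph{only} the branch taken in round $\rho-1$, invoking the part~(i) bound on $|T^{\rho-1}|$ to absorb all earlier rounds.

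What each approach buys: your $|T^\rho|$ bound sidesteps the need to justify the increment $d(r_{\rho+1},r)\le d(r_\rho,r)+B_\rho$ (which the paper states but does not argue carefully; note $r_{\rho+1}$ is the centroid of \emph{all} of $T^{\rho+1}$ and need not lie among the newly visited nodes). Your part~(ii) argument is also slightly stronger in that it never actually invokes the hypothesis $B_\rho\ge 4|\cE|(2\Delta+1)$: the contradiction step only needs $B_{\rho-1}\ge 4|\cE|(2\Delta+1)$ to trigger \Cref{lemma: D not known find correct D'}, so the bound holds for every round that is reached. The paper's route, on the other hand, makes the structure ``after round $\rho'$ every surviving round is an \textbf{if} round'' explicit, which is useful later in the proof of \Cref{thm:main}. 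You correctly flag the one genuine subtlety, namely that \Cref{thm:Known-D rigorous} bounds movement cost while the budget caps $|V_t|$, and your translation (each new node costs at least one edge, so $|V_t|\le\text{cost}+1$) closes that gap.
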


\begin{proof}
Since $r_{\rho}$ is at distance at most $(c_1+c_3)B_{\rho-1}=B_\rho$ from
  $r_{\rho-1}$, an inductive argument shows that its distance from
  $r_0 = r$ is at most $(B_0+ \cdots + B_{\rho}) =O(B_{\rho})$. 
  
  Moreover, when $B_\rho\geq 4|\cE|(2\Delta+1)$, we have $d(r_\rho,g)=D_\rho$ by \Cref{lemma: D not known find correct D'}. Hence if $B_\rho \geq D_\rho/\beta$, the algorithm finds the goal in this round by \Cref{thm:Known-D rigorous}. 
Therefore, for any rounds $\rho$ when $B_\rho\geq 4|\cE|(2\Delta+1)$ except the last round, the number of nodes visited by \Cref{alg:degree-k} is at most $B_\rho$, hence we have $d(r_{\rho+1},r)\leq d(r_{\rho},r)+B_\rho$. We denote $\rho'$ to be the first round $\rho'$ such that $B_{\rho'}\geq 4|\cE|(2\Delta+1)$. Thus by induction we have \[
d(r_{\rho},r) \leq \sum_{i=\rho'}^{\rho-1} B_{i} + d(r_{\rho'},r) \leq
O(B_{\rho-1})+O(B_{\rho'})\leq O(B_{\rho-1})+
O(\beta|\cE|\Delta). \qedhere \] 
\end{proof}

\begin{proof}[Proof of~\Cref{thm:main}]
  Firstly, for the rounds $\rho$ when $B_\rho<4|\cE|(2\Delta+1)$: in
  each round, \Cref{alg:degree-k} at most visits
  $(c_1+\beta)B_\rho=B_{\rho+1}$ nodes, the cost incurred is at most
  $19B_{\rho+1}$,
  by~\Cref{lem:distance-by-extra-exploration}. Moreover, the distance
  from the ending node to $r_{\rho+1}$ is a further $O(B_{\rho+1})$ by
  \Cref{lemma: D not known vt distance new}. Therefore, since the
  bounds $B_\rho$ increase geometrically, the cost summed over all
  rounds until round $\rho$ is $O(B_{\rho+1})=O(\beta|\cE|\Delta)$.
  
  Secondly, for any rounds $\rho$ when $B_\rho\geq 4|\cE|(2\Delta+1)$
  except the last round, by \Cref{lemma: D not known find correct D'}
  and \Cref{thm:Known-D rigorous}, the number of nodes visited by
  \Cref{alg:degree-k} is at most $B_\rho$ (the reasoning is the same
  as that in \Cref{lemma: D not known vt distance new}). Hence the
  cost incurred is at most $19B_\rho$. Moreover, by \Cref{lemma: D not
    known vt distance new} the distance from the ending node to
  $r_{\rho +1}$ is at most $O(B_\rho) + O(\beta\Delta |\cE|)$, which
  means the total cost in round $\rho$ is at most
  $O(B_\rho)+ O(\beta\Delta |\cE|)$.

  Moreover, if we denote round $\rho'$ to be the first round such that
  $B_{\rho'} \geq 4|\cE|(2\Delta+1)$, then we have, for any round
  $\rho>\rho'$, $B_\rho>\beta \Delta |\cE|$. Hence the cost in round
  $\rho$ is $O(B_\rho)$.

  Finally, consider the last round $\rho^*$. We only need to consider the
  case when $B_{\rho^*} \geq 4|\cE|(2\Delta+1)$, otherwise the cost
  has been included in the first case. By \Cref{thm:Known-D rigorous},
  the cost incurred in this round is at most
  $D_{\rho^*}+c_1\Delta|\cE|\leq d(r,g)+d(r_{\rho^*},r) + c_1\Delta
  |\cE|$.  So summing the bounds above, the total cost is at most
  \begin{align*}
    &O(\beta\Delta |\cE|) + O(B_{\rho'})+O(\beta\Delta |\cE|) +
      \sum_{i=\rho'+1}^{\rho^*-1}O(B_{i}) +
      d(r,g)+d(r_{\rho^*},r)+c_1\Delta |\cE| \\ &\qquad \leq d(r,g)+ O(B_{\rho^*-1}) +O(\beta\Delta |\cE|)
                                                  \leq d(r,g) + O(d(r,g)/\beta)+O(\beta\Delta |\cE|) 
  \end{align*}
  Here the final inequality uses that
  \[B_{\rho^*-1}\leq D_{\rho^*-1}/\beta\leq (d(r,g)+O(\beta
  B_{\rho^*-1}))/\beta\leq (d(r,g)+O(B_{\rho^*-1}))/\beta.\] 
  Setting $\beta = O(1/\delta)$ gives the proof.
\end{proof}


\section{The Planning Problem}
\label{sec:full-information}

In this section we consider the planning version of the problem when
the entire graph $G$ (with unit edge lengths, except for \S\ref{sec:analys-bound-doubl-2}), the starting node $r$, and the entire prediction
function $f:V\to \mathbb{Z}$ are given up-front. The agent can use
this information to plan its exploration of the graph. We propose an
algorithm for this version and then prove the cost bound for trees,
and then for a graph with bounded doubling dimension. We begin by defining the \emph{implied-error} function $\varphi(v)$, which
gives the total error if the goal is at node $v$.

\begin{definition}[Implied-error]
  The \emph{implied-error function} $\varphi: V \to \ZZ$ maps each
  node $v\in V$ to $\varphi(v):=|\{u \in V \mid d(u,v)\neq f(u)\}|$,
  which is the $\ell_0$ error if the goal were at $v$.
\end{definition}

The search algorithm for this planning version is particularly simple:
we visit the nodes in rounds, where round $\rho$ visits nodes with
implied-error $\varphi$ value at most $\approx 2^\rho$ in the
cheapest possible way.
The challenge is to show that the total cost incurred until reaching
the goal is small. Observe that $|\cE| = \varphi(g)$, so if this value
is at most $2^\rho$, we terminate in round $\rho$. 

\begin{algorithm}[h]
  \caption{\FullInfoX}
  \label{alg: search according to phi}
  $\rho\gets 0$, $S_{-1} \gets \emptyset$, $r_{-1} \gets r$\;
  \While{$g$ not found}{
    $S_\rho\gets \{v\in T \mid \varphi(v) < 2^\rho\} \setminus (\cup_{i =
      -1}^{\rho-1} S_i)$\;
    \eIf{$S_\rho \neq \emptyset$}{
      $C_\rho\gets$ min-length Steiner Tree on $S_\rho$\;
      go to an arbitrary node $r_\rho$ in $S_\rho$\;
      visit all nodes in $C_\rho$ using an Euler tour of cost at most
      $2|C_\rho|$, and return to $r_\rho$\;
    }{$r_\rho \gets r_{\rho-1}$}
    $\rho \gets \rho+1$\;
  }
\end{algorithm}

\subsection{Analysis}
\label{sec:analysis-full-info}

Recall our main claim for the planning algorithm:
\FullInfo*

The proof relies on the fact that
\Cref{alg: search according to phi} visits a node in $S_\rho$ only after
visiting all nodes in $\cup_{s < \rho} S_s$ and not finding the goal $g$;
this serves a proof that $|\cE| = \varphi(g) \geq 2^\rho$. 
The proof below shows that (a)~the cost of the tour of $C_\rho$ is
bounded and (b)~the total cost of each transition is small. 
Putting these claims together then proves \Cref{thm:full-info}. We start with a definition.

\begin{definition}[Midpoint Set]
  Given a set of nodes $U$, define its \emph{midpoint set} $M(U)$ to be the set
  of points $w$ such that the distance from $w$ to all points in $U$
  is equal. 
\end{definition}

\begin{lemma}[$\varphi$-Bound Lemma]
  \label{lemma: phi and Mid set}
  For any two sets of nodes $S, U \subseteq G$, we have
  \[ \sum_{v\in U} \varphi(v) \geq |S\setminus M(U)|.\]
\end{lemma}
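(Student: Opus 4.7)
The plan is to prove the bound by a straightforward double-counting / swap-of-summation argument, using the defining property of the midpoint set.

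First I would unfold the definition of $\varphi$ and swap the order of summation:
\[
\sum_{v \in U} \varphi(v) \;=\; \sum_{v \in U} |\{u \in V : d(u,v) \neq f(u)\}| \;=\; \sum_{u \in V} |\{v \in U : d(u,v) \neq f(u)\}|.
\]
Then I would restrict the outer sum on the right-hand side to $u \in S \setminus M(U)$, which can only decrease it. So it suffices to show that for every $u \in S \setminus M(U)$, at least one $v \in U$ satisfies $d(u,v) \neq f(u)$, i.e.\ $|\{v \in U : d(u,v) \neq f(u)\}| \geq 1$.

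This step is the heart of the argument, and follows directly from the definition of $M(U)$. If $u \notin M(U)$, then the multiset of distances $\{d(u,v) : v \in U\}$ is not constant, so it takes at least two distinct integer values. In particular, no single integer (and in particular, $f(u)$) can be equal to $d(u,v)$ for every $v \in U$. Hence there exists at least one $v \in U$ with $d(u,v) \neq f(u)$, as required.

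Combining, we get
\[
\sum_{v \in U} \varphi(v) \;\geq\; \sum_{u \in S \setminus M(U)} |\{v \in U : d(u,v) \neq f(u)\}| \;\geq\; \sum_{u \in S \setminus M(U)} 1 \;=\; |S \setminus M(U)|,
\]
which is the claimed inequality. I do not anticipate any real obstacle here, since the only non-trivial observation is the one-line pigeonhole step about distances taking at least two values outside $M(U)$; everything else is bookkeeping.
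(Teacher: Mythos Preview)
Your proof is correct and takes essentially the same approach as the paper: both arguments observe that each $w \in S \setminus M(U)$ has at least two distinct distances to points of $U$, so $f(w)$ must disagree with $d(w,v)$ for at least one $v \in U$, and hence $w$ contributes at least once to $\sum_{v\in U}\varphi(v)$. You simply make the double-counting explicit by swapping the order of summation, whereas the paper states the contribution argument in one sentence.
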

\begin{proof}
  If node $w \in S$ does not lie in $M(U)$, then there are two nodes
  $u,v\in U$ for which $d(u,w)\neq d(v,w)$. This means $f(w)$ cannot
  equal both of them, and hence contributes to at least one of
  $\varphi(u)$ or $\varphi(v)$.
\end{proof}

\begin{corollary}
  \label{cor: distance less than phi}
  For any two nodes $u,v\in G$, we have $d(u,v)\leq \varphi(u)+\varphi(v)$.
\end{corollary}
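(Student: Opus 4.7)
The plan is to derive this directly from the $\varphi$-Bound Lemma by choosing $U$ to be the pair $\{u,v\}$ and $S$ to be the vertex set of a shortest $u$-$v$ path. First I would fix any shortest path $P$ from $u$ to $v$; since edges have unit length, $P$ consists of $d(u,v)+1$ nodes. The key geometric observation is that along $P$, at most one node can be equidistant from both endpoints: any $w\in P$ satisfies $d(u,w)+d(w,v)=d(u,v)$, so $d(u,w)=d(v,w)$ forces $d(u,w)=d(u,v)/2$, which determines $w$ uniquely (and exists only if $d(u,v)$ is even).

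Thus $|P\cap M(\{u,v\})|\leq 1$, which gives $|P\setminus M(\{u,v\})|\geq d(u,v)$. Applying \Cref{lemma: phi and Mid set} with $U=\{u,v\}$ and $S=P$ then yields
\[
\varphi(u)+\varphi(v)\;=\;\sum_{x\in U}\varphi(x)\;\geq\;|P\setminus M(U)|\;\geq\;d(u,v),
\]
which is exactly the claim.

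There is essentially no obstacle here: the only thing to be careful about is the parity case (whether a midpoint node actually exists on $P$), but the bound $|P\cap M(\{u,v\})|\leq 1$ handles both cases uniformly. The corollary is a clean one-line consequence of the lemma once the right choice of $S$ and $U$ is made.
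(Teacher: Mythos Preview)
Your proposal is correct and follows essentially the same route as the paper's proof: both apply \Cref{lemma: phi and Mid set} with $U=\{u,v\}$ and $S$ a shortest $u$-$v$ path, use $|S|=d(u,v)+1$ from unit edge lengths, and observe $|S\cap M(U)|\leq 1$. You simply spell out the parity reason for the last bound more explicitly than the paper does.
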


\begin{proof}
  Apply \Cref{lemma: phi and Mid set} for set $U = \{u,v\}$ and $S$
  being a (shortest) path between them (which includes both
  $u,v$). All edges have unit lengths so $|S| = d(u,v)+1$;
  moreover, $|M(U) \cap S| \leq 1$.
\end{proof}

\subsubsection{Analysis for Trees (\Cref{thm:full-info}(i))}
\label{sec:analys-trees-crefthm}

\begin{lemma}[Small Steiner Tree]
  \label{lemma: full-info C bound}
  If $\rho = 0$ then $|C_\rho| = 1$ else $|C_\rho| \leq O( \Delta \cdot 2^\rho)$.
\end{lemma}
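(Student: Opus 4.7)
The plan is to treat $\rho = 0$ and $\rho \geq 1$ separately. For $\rho = 0$, the set $S_0$ consists exactly of the nodes with $\varphi(v) = 0$, and by \Cref{cor: distance less than phi} any two such nodes $u, v$ satisfy $d(u,v) \leq \varphi(u) + \varphi(v) = 0$, forcing $u = v$. Hence $|S_0| \leq 1$ and $|C_0| \leq 1$.

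For $\rho \geq 1$, the heart of the argument is to apply \Cref{lemma: phi and Mid set} (the $\varphi$-bound) with $U = S_\rho$ and $S = C_\rho$, yielding
\[
|C_\rho \setminus M(S_\rho)| \;\leq\; \sum_{v\in S_\rho} \varphi(v) \;<\; |S_\rho|\cdot 2^\rho.
\]
The complementary piece is controlled by showing that $|C_\rho \cap M(S_\rho)| \leq 1$ whenever $|S_\rho|\geq 2$. The argument is that if two distinct points $w_1, w_2$ both lie in $M(S_\rho) \cap C_\rho$, then $d(w_1,v)-d(w_2,v)$ is the same constant for every $v\in S_\rho$; decomposing each distance through the unique attachment point $q_v$ of $v$ on the $w_1$--$w_2$ path in $T$ forces all these $q_v$ to coincide at a single point $q^*$. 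A short case analysis on the location of $q^*$ (strictly interior to the $w_1$--$w_2$ path, or coinciding with $w_1$ or $w_2$) then contradicts $w_1, w_2 \in C_\rho$: in each case $S_\rho$ lies in a subtree of $T$ that excludes at least one of $w_1, w_2$, so its convex hull $C_\rho$ cannot contain both.

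Putting these bounds together yields $|C_\rho| \leq |S_\rho|\cdot 2^\rho + 1$. The main obstacle is converting this to the claimed $O(\Delta \cdot 2^\rho)$, since $|S_\rho|$ itself may exceed $\Delta$. I anticipate handling this by rooting $C_\rho$ at its center $c$ (where $\deg_T(c) \leq \Delta$), so that $C_\rho$ decomposes into at most $\Delta$ branches of depth at most $2^\rho$ each; a careful per-branch accounting, invoking the $\varphi$-bound restricted to each branch together with the lower bound $\varphi(v) \geq 2^{\rho-1}$ for $v \in S_\rho$ (coming from $v \notin S_{\rho-1}$), should close the gap and yield the stated $\Delta$ factor.
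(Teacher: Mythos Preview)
Your treatment of $\rho=0$ and the argument that $|M(S_\rho)\cap C_\rho|\leq 1$ are both fine, but the proof has a genuine gap at the end. The bound you actually establish is $|C_\rho|\leq |S_\rho|\cdot 2^\rho + 1$, and $|S_\rho|$ is \emph{not} bounded by $\Delta$; in fact the lemma you are proving only guarantees $|S_\rho|\leq |C_\rho|=O(\Delta\cdot 2^\rho)$, so your inequality is vacuous by a factor of $2^\rho$. The proposed fix does not close this: bounding the depth of each branch by $2^\rho$ says nothing about its size (a degree-$\Delta$ tree of depth $d$ can have $\Theta(\Delta^d)$ nodes), and the lower bound $\varphi(v)\geq 2^{\rho-1}$ is a red herring---it gives no control on $|C_\rho|$ or $|S_\rho|$. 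If you try to run the $\varphi$-bound branch by branch with two terminals $\{u_i,u_j\}$, the midpoint set $M(\{u_i,u_j\})$ can intersect the branch in an entire hanging subtree, so that route stalls as well.

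The paper avoids this detour entirely by choosing $U$ small from the outset. It takes a diametral pair $u_1,u_2\in S_\rho$ and its midpoint $w$ on the $u_1$--$u_2$ path (if no midpoint exists, already $|C_\rho|\leq \varphi(u_1)+\varphi(u_2)\leq 2^{\rho+1}$). Then it picks one terminal $u_i\in S_\rho$ from each of the $k\leq\Delta$ subtrees of $C_\rho\setminus\{w\}$ and sets $U=\{u_1,\dots,u_k\}$. The diametral property is exactly what forces $M(U)\cap C_\rho\subseteq\{w\}$: for $x\in C_{\rho,j}$, whichever of $u_1,u_2$ lies in a different branch (call it $u_i$) satisfies $d(w,u_i)\geq d(w,u_j)$, whence $d(x,u_i)>d(x,u_j)$. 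One application of the $\varphi$-bound then gives $|C_\rho|\leq \sum_{i\leq k}\varphi(u_i)+1\leq \Delta\cdot 2^\rho+1$. Your center $c$ would in fact coincide with this $w$ (the tree center is the midpoint of a longest path), so the right fix to your outline is not ``per-branch accounting'' but rather: pick one representative per branch of $c$, take those $\leq\Delta$ points as $U$, and prove $M(U)\cap C_\rho\subseteq\{c\}$ using the diametral property---which is precisely the paper's argument.
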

\begin{proof}
  If $\rho=0$, then $S_\rho$ contains all nodes with $\varphi(v) = 0$; there
  can be only one such node. Else if $|S_\rho|\leq 1$ then
  $|C_\rho|\leq 1 \leq 2^\rho$, so assume that $|S_\rho|>1$ and let
  $u_1,u_2 := \arg\max_{u,v\in S_\rho}\{d(u,v)\}$ be a farthest pair of
  nodes in $S_\rho$. Consider path $p$ from $u_1$ to $u_2$: if all nodes
  $w\in p$ have $d(w,u_1) \neq d(w,u_2)$, then the midpoint set
  $|M(\{u_1,u_2\})|=0$, so \Cref{lemma: phi and Mid set} says
  $|C_\rho|\leq \varphi(u_1)+\varphi(u_2) \leq 2\times 2^{\rho} = 2^{\rho+1}$,
  giving the proof. Hence, let's consider the case where there exists
  $w\in p$ with $d(w,u_1)=d(w,u_2)$.

  Let $w$'s neighbors in $C_\rho$ be $q_1,\ldots,q_k$ for some
  $k\leq \Delta$. If we delete $w$ and its incident edges, let $C_{\rho,i}$
  be the subtree of $C_\rho$ containing $q_i$; suppose that
  $u_1 \in C_{\rho,1}$ and $u_2 \in C_{\rho,2}$. Choose any arbitrary vertex
  $u_i \in (C_{\rho,i} \cap S_\rho)$; such a vertex exists because $C_\rho$ is a
  min-length Steiner tree connecting $S_\rho$. Let
  $U := \{u_1, \ldots, u_k\}$.

  Consider any node $x\neq w$ in $C_\rho$: this means $x \in C_{\rho,j}$ for
  some $j$. Choose $i \in \{1,2\}$ such that $i \neq j$. By the tree
  properties, $d(x,u_i)=d(x,w)+d(w,u_i)$. Moreover, we have
  $d(u_i,u_{2-i})\geq d(u_j, u_{2-i})$ by our choice of $\{u_1, u_2\}$,
  so $d(w,u_i)\geq d(w,u_j)$.  This
  means
  \[d(x,u_i)=d(x,w)+d(w,u_i)\geq
    d(x,w)+d(w,u_j)=d(x,q_j)+d(u_j,q_j)+2> d(x,u_j),\] which means
  $x\notin M(U)$. In summary, $M(U)=\{w\}$ or $|M(U)|=0$, so applying
  \Cref{lemma: phi and Mid set} in either case gives \[ |C_\rho|\leq
  |C_\rho\setminus M(U)|+1\leq \sum_{i=1}^k \varphi(u_i)+1\leq
  \Delta\cdot (2^{\rho}+1). \qedhere\]
\end{proof}

\begin{lemma}[Small Cost for Transitions]
  \label{lemma: full-info cost at iteration i}
  Consider the first round $\rho_0$ such that $r_{\rho_0} \neq r$, then
  $d(r,r_{\rho_0}) \leq d(r, g) + |\cE| + 2^{\rho_0} \mathbf{1}_{({\rho_0} > 0)}$. For each subsequent round $\rho>\rho_0$,
  $d(r_{\rho-1}, r_\rho) \leq 2^{\rho+1}$.
\end{lemma}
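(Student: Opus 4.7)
The plan is to bound both transition costs using \Cref{cor: distance less than phi} (which gives $d(u,v)\leq \varphi(u)+\varphi(v)$) together with the triangle inequality, exploiting the key fact that whenever the algorithm actually updates $r_\rho$, it picks $r_\rho\in S_\rho$, so $\varphi(r_\rho) < 2^\rho$.

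For the first transition, I will write $d(r, r_{\rho_0}) \leq d(r,g) + d(g, r_{\rho_0})$ by the triangle inequality, and then apply the corollary to obtain $d(g, r_{\rho_0}) \leq \varphi(g) + \varphi(r_{\rho_0}) = |\cE| + \varphi(r_{\rho_0})$. When $\rho_0 > 0$, the bound $\varphi(r_{\rho_0}) < 2^{\rho_0}$ gives the stated estimate directly. When $\rho_0 = 0$, membership $r_0\in S_0$ forces $\varphi(r_0) < 2^0 = 1$, hence $\varphi(r_0) = 0$ by integrality of $\varphi$; this eliminates the $2^{\rho_0}$ term, matching the indicator $\mathbf{1}_{(\rho_0>0)}$.

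For subsequent rounds $\rho>\rho_0$, if $S_\rho=\emptyset$ then $r_\rho=r_{\rho-1}$ and the bound is trivial, so assume $S_\rho\neq\emptyset$. I apply the corollary directly to get $d(r_{\rho-1}, r_\rho) \leq \varphi(r_{\rho-1}) + \varphi(r_\rho)$. The only subtlety is that $r_{\rho-1}$ may have been inherited from an earlier round if some intermediate $S_{\rho'}$ was empty; nonetheless, $r_{\rho-1}$ was chosen from $S_{\rho'}$ for some $\rho' \leq \rho-1$, so $\varphi(r_{\rho-1}) < 2^{\rho'} \leq 2^{\rho-1}$. Combined with $\varphi(r_\rho) < 2^\rho$, this yields $d(r_{\rho-1}, r_\rho) < 2^{\rho-1} + 2^\rho \leq 2^{\rho+1}$, as required.

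I do not foresee any real obstacle: the argument is essentially a two-line application of \Cref{cor: distance less than phi} and the triangle inequality. The only bookkeeping point is tracking which earlier round last updated $r_{\rho-1}$, and this is handled cleanly by the observation that $r_{\rho-1}$ always lies in some $S_{\rho'}$ with $\rho' \leq \rho - 1$.
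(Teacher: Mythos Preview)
Your proposal is correct and follows essentially the same approach as the paper: triangle inequality through $g$ for the first transition, then a direct application of \Cref{cor: distance less than phi} for subsequent ones, using $\varphi(r_\rho)<2^\rho$ whenever $r_\rho\in S_\rho$. You are in fact slightly more careful than the paper's own proof, both in explicitly handling the $\rho_0=0$ case via integrality of $\varphi$ and in noting that $r_{\rho-1}$ may have been inherited from an earlier nonempty round $\rho'\le\rho-1$.
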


\begin{proof}
  If the first transition happens in round ${\rho_0}$, its cost is
  \[ d(r, r_{\rho_0}) \leq d(r,g) + d(g,r_{\rho_0}) \leq d(r,g) + \varphi(g) +
    \varphi(r_{\rho_0}) \leq d(r,g) + |\cE| + 2^{\rho_0} \mathbf{1}_{({\rho_0} > 0)}, \] where we
  used~\Cref{cor: distance less than phi} for the second inequality.
  For all other transitions, \Cref{cor: distance less than phi} again
  gives $d(r_{\rho-1}, r_\rho) \leq \varphi(r_{\rho-1}) + \varphi(r_\rho) \leq
  2^{\rho-1} + 2^\rho \leq 2^{\rho+1}$.
\end{proof}

\begin{proof}[Proof of~\Cref{thm:full-info}(i)]
  Suppose $g$ belongs to $S_\rho$, then
  $|\cE| \geq 2^{\rho-1} \cdot \mathbf{1}_{\rho > 0}$. But now the cost over
  all the transitions is at most
  $d(r,g) + |\cE| + O(2^\rho) \cdot \mathbf{1}_{\rho > 0}$ by summing the
  results of \Cref{lemma: full-info cost at iteration i}. The cost of
  the Euler tours are at most $\sum_{s \leq \rho} 2(|C_s|-1)$ by \Cref{lemma: full-info C bound}, which
  gives at most $O(\Delta \cdot 2^\rho) \cdot \mathbf{1}_{\rho >
    0}$. Combining these proves the theorem.
\end{proof}

\subsection{Analysis for Bounded Doubling Dimension (\Cref{thm:full-info}(ii))}
\label{sec:analys-bound-doubl}

For a graph $G = (V,E)$ with doubling dimension $\alpha$, and
unit-length edges, we consider running \Cref{alg: search according to
  phi}, as for the tree case.
We merely replace \Cref{lemma: full-info C bound} by the following
lemma, and the rest of the proof is the same as the proof of the tree case:
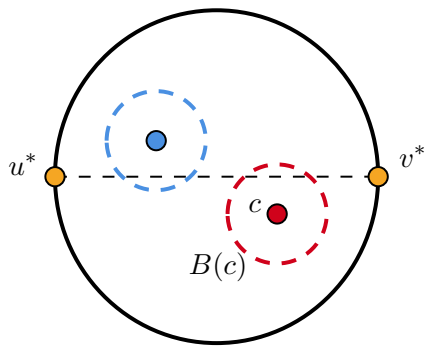
\begin{wrapfigure}{R}{0.4\textwidth}
\vspace{-0.8cm}
  \centering
   \tikzset{every picture/.style={line width=0.75pt}} 

\begin{tikzpicture}[x=0.75pt,y=0.75pt,yscale=-1,xscale=1]

\draw  [line width=1.5]  (110.95,152.2) .. controls (110.94,105.81) and (147.43,68.19) .. (192.44,68.18) .. controls (237.45,68.18) and (273.94,105.78) .. (273.95,152.17) .. controls (273.96,198.56) and (237.48,236.18) .. (192.47,236.18) .. controls (147.46,236.19) and (110.96,198.59) .. (110.95,152.2) -- cycle ;
\draw  [dash pattern={on 4.5pt off 4.5pt}]  (110.95,152.2) -- (273.95,152.17) ;
\draw  [fill={rgb, 255:red, 245; green, 166; blue, 35 }  ,fill opacity=1 ] (115.78,152.71) .. controls (116.06,150.04) and (114.13,147.65) .. (111.46,147.37) .. controls (108.79,147.09) and (106.41,149.03) .. (106.13,151.69) .. controls (105.85,154.36) and (107.78,156.74) .. (110.45,157.02) .. controls (113.11,157.3) and (115.5,155.37) .. (115.78,152.71) -- cycle ;
\draw  [fill={rgb, 255:red, 245; green, 166; blue, 35 }  ,fill opacity=1 ] (278.8,152.39) .. controls (278.92,149.71) and (276.85,147.44) .. (274.17,147.32) .. controls (271.49,147.2) and (269.23,149.28) .. (269.11,151.95) .. controls (268.99,154.63) and (271.06,156.9) .. (273.74,157.02) .. controls (276.41,157.14) and (278.68,155.06) .. (278.8,152.39) -- cycle ;
\draw  [color={rgb, 255:red, 74; green, 144; blue, 226 }  ,draw opacity=1 ][fill={rgb, 255:red, 0; green, 0; blue, 0 }  ,fill opacity=0 ][dash pattern={on 5.63pt off 4.5pt}][line width=1.5]  (137,134) .. controls (137,120.19) and (148.19,109) .. (162,109) .. controls (175.81,109) and (187,120.19) .. (187,134) .. controls (187,147.81) and (175.81,159) .. (162,159) .. controls (148.19,159) and (137,147.81) .. (137,134) -- cycle ;
\draw  [color={rgb, 255:red, 208; green, 2; blue, 27 }  ,draw opacity=1 ][dash pattern={on 5.63pt off 4.5pt}][line width=1.5]  (198,171) .. controls (198,157.19) and (209.19,146) .. (223,146) .. controls (236.81,146) and (248,157.19) .. (248,171) .. controls (248,184.81) and (236.81,196) .. (223,196) .. controls (209.19,196) and (198,184.81) .. (198,171) -- cycle ;
\draw  [fill={rgb, 255:red, 74; green, 144; blue, 226 }  ,fill opacity=1 ] (166.85,134.22) .. controls (166.97,131.54) and (164.89,129.27) .. (162.22,129.15) .. controls (159.54,129.03) and (157.27,131.11) .. (157.15,133.78) .. controls (157.03,136.46) and (159.11,138.73) .. (161.78,138.85) .. controls (164.46,138.97) and (166.73,136.89) .. (166.85,134.22) -- cycle ;
\draw  [fill={rgb, 255:red, 208; green, 2; blue, 27 }  ,fill opacity=1 ] (227.85,171.22) .. controls (227.97,168.54) and (225.89,166.27) .. (223.22,166.15) .. controls (220.54,166.03) and (218.27,168.11) .. (218.15,170.78) .. controls (218.03,173.46) and (220.11,175.73) .. (222.78,175.85) .. controls (225.46,175.97) and (227.73,173.89) .. (227.85,171.22) -- cycle ;

\draw (86,139) node [anchor=north west][inner sep=0.75pt]   [align=left] {$\displaystyle u^{*}$};
\draw (283,135) node [anchor=north west][inner sep=0.75pt]   [align=left] {$\displaystyle v^{*}$};
\draw (207,162) node [anchor=north west][inner sep=0.75pt]   [align=left] {$\displaystyle c$};
\draw (177,189) node [anchor=north west][inner sep=0.75pt]   [align=left] {$\displaystyle B( c)$};

\end{tikzpicture}
  \caption{Let $u^*,v^*$ be the diameter of set $S_\rho$ (i.e, $u^*,v^*=\text{argmax}_{u,v\in S_\rho}d(u,v)$). $c$ is any node in $N$ and $B(c)$ is its neighbor. We show in \Cref{clm:ball-distance} that the size of $B(c)$ is $O(2^\rho)$.}
\vspace{-0.5cm}
\label{fig: Full-info ball}
\end{wrapfigure}

\begin{lemma}
  \label{lem:full-info-C-bound-doubling}
  The total length of the tree
  $C_\rho$ is at most $2^{O(\alpha)}\cdot 2^{2\rho}$.   
\end{lemma}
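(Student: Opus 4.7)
My strategy is to first establish that the terminal set $S_\rho$ has size $|S_\rho| \leq 2^{O(\alpha)} \cdot 2^\rho$, and then convert this into the Steiner-tree bound via a minimum-spanning-tree argument. To bound $|S_\rho|$, let $D^* := \max_{u,v \in S_\rho} d(u,v)$ and let $u^*, v^* \in S_\rho$ realize this diameter. \Cref{cor: distance less than phi} applied to this pair gives $D^* \leq \varphi(u^*) + \varphi(v^*) < 2 \cdot 2^\rho$, so $S_\rho$ fits in a ball of radius $D^*$. I would greedily build a maximal $\delta$-separated set $N \subseteq S_\rho$ with $\delta := \lfloor D^*/5 \rfloor$, which is automatically a $\delta$-net of $S_\rho$; by the doubling property, $|N| \leq 2^{O(\alpha)}$ since $\delta = \Theta(D^*)$.

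The key per-ball bound is $|B(c, \delta) \cap S_\rho| \leq 2 \cdot 2^\rho$ for every $c \in N$. For each such $c$, the triangle inequality gives $d(c, u^*) + d(c, v^*) \geq D^*$, so one of the endpoints of the diameter pair---call it $u_c$---satisfies $d(c, u_c) \geq D^*/2 > 2\delta$. For any $w \in B(c, \delta)$ we then have $d(w, c) \leq \delta < d(c, u_c) - \delta \leq d(w, u_c)$, so $B(c, \delta) \cap M(\{c, u_c\}) = \emptyset$. Applying \Cref{lemma: phi and Mid set} with $U = \{c, u_c\}$ and $S = S_\rho$ yields
\[ |B(c, \delta) \cap S_\rho| \leq |S_\rho \setminus M(\{c, u_c\})| \leq \varphi(c) + \varphi(u_c) < 2 \cdot 2^\rho. \]
Since $N$ covers $S_\rho$, summing over $c \in N$ gives $|S_\rho| \leq |N| \cdot 2 \cdot 2^\rho \leq 2^{O(\alpha)} \cdot 2^\rho$.

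Finally, since $C_\rho$ is a minimum-length Steiner tree on $S_\rho$ in a graph whose metric restricts to a set of diameter $D^* = O(2^\rho)$, its length is at most that of the minimum spanning tree of $S_\rho$ in the metric closure, which in turn is at most the star tree through any single terminal: $(|S_\rho| - 1) \cdot D^*$. Combined with the bound on $|S_\rho|$, this yields $|C_\rho| \leq 2^{O(\alpha)} \cdot 2^\rho \cdot O(2^\rho) = 2^{O(\alpha)} \cdot 2^{2\rho}$, as required. The small-diameter edge case $D^* < 5$ is handled directly: $S_\rho$ then fits in a ball of $O(1)$-radius, which has at most $2^{O(\alpha)}$ nodes by the standard doubling bound on ball sizes, so $|C_\rho| \leq |S_\rho| \cdot D^* = 2^{O(\alpha)}$. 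The main obstacle lies in choosing the net scale $\delta$: it must be small enough that $B(c, \delta)$ stays strictly off the bisector $M(\{c, u_c\})$ (forcing $\delta < D^*/4$ via the triangle inequality on the diameter pair), yet large enough that the net size remains $2^{O(\alpha)}$, which is precisely why the choice $\delta = \Theta(D^*)$ works.
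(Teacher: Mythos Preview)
Your proof is correct and follows essentially the same approach as the paper's: bound the diameter of $S_\rho$ via \Cref{cor: distance less than phi}, take a $\Theta(D^*)$-net of $S_\rho$ (the paper uses scale $R/8$, you use $\lfloor D^*/5\rfloor$), show each net-cell has at most $O(2^\rho)$ terminals by pairing its center with the farther diameter endpoint and invoking \Cref{lemma: phi and Mid set}, and finish by multiplying the resulting $|S_\rho|$ bound by the diameter. The only cosmetic differences are your use of metric balls instead of Voronoi cells and the explicit handling of the small-diameter case $D^*<5$; both are fine.
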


\begin{proof}
  If $|S_\rho|\leq 1$, then $|C_\rho|\leq 1$. Hence next we assume that $|S_\rho|\geq 2$. 
  Define $R := \max_{u,v \in S_\rho} d(u,v)$, and let
  $u^*, v^* \in S_\rho$ be some points at mutual distance $R$.  Let
    $N$ be an $R/8$-net of $S_\rho$. (An $\e$-net $N$ for a set $S$ satisfies the properties (a)~$d(x,y) \geq \e$ for all $x,y \in N$, and (b) for all $s \in S$ there exists $x \in N$ such that $d(x,s) \leq \e$.) Since the metric has doubling
  dimension $\alpha$, it follows that
    $|N| \leq (\frac{R}{R/8})^{O(\alpha)} = 2^{O(\alpha)}$~\cite{GuptaKL03}.  Let each
  point in $S_\rho$ choose a closest net point (breaking ties
  arbitrarily), and let $B(c) \sse S_\rho$ be the points that chose 
  $c \in N$ as their closest net point (see \Cref{fig: Full-info ball} for a sketch).

  \begin{claim}
    \label{clm:ball-distance}
    For each net point $c\in N$, we have $|B(c)| \leq O(2^\rho)$.
  \end{claim}

  \begin{proof}
  Because $d(v^*,c)+d(u^*,c)\geq d(u^*,v^*) = R$, hence without loss of generality we assume $d(v^*,c)\geq R/2$. For any point $w\in B(c)$, $d(w,v^*)\geq d(v^*,c)-d(c,w)\geq R/2-R/8 > R/8\geq d(w,c)$. Hence $w$ is not in $M(\{c,v^*\})$. Hence by Lemma~\ref{lemma: phi and Mid set}, \[2^{\rho+1}\geq \varphi(c)+\varphi(v^*)\geq |S_\rho \setminus M(\{v^*,c\})| \geq |B(c)|. \qedhere \]
  \end{proof}

  There are $2^{O(\alpha)}$ net points, so
  $|S_\rho| \leq 2^{O(\alpha)} \cdot 2^\rho$. Finally, \Cref{cor:
    distance less than phi} holds for general unit-edge-length
  graphs, so the cost of connecting any two nodes in $S_\rho$ is at
  most $2^\rho$, and therefore $|C_\rho| \leq 2^{O(\alpha)} \cdot 2^{2\rho}$.
\end{proof}

Using \Cref{lem:full-info-C-bound-doubling} instead of \Cref{lemma:
  full-info C bound} in the proof of \Cref{thm:full-info}(i) gives the
claimed bound of $2^{O(\alpha)}\cdot |\cE|^2$, and completes the proof
of \Cref{thm:full-info}(ii).

\subsection{Analysis for Bounded Doubling Dimension: Integer Lengths}
\label{sec:analys-bound-doubl-2}

In this part, we further generalize the proof above to the case when
the edges can have positive integer lengths.
Consider an graph $G = (V,E)$ with doubling dimension $\alpha$ and
general (positive integer) edge lengths. Define the $\ell_1$ analog of
the implied-error function to be:
\[ \varphi_1(v) := \sum_{u \in V} |f(u) - d(u,v)|. \] Since we are in
the full-information case, we can compute the $\varphi_1$ value for
each node. Observe that $\varphi_1(g)$ is the $\ell_1$-error; we prove
the following guarantee.

\begin{theorem}
  \label{thm:full-info-doubling-lengths}
  For graph exploration on arbitrary graphs with positive integer edge
  lengths, the analog of \Cref{alg: search according to phi} that uses
  $\varphi_1$ instead of $\varphi$, incurs a cost
  $d(r,g) + 2^{O(\alpha)} \cdot O( \varphi_1(g))$.
\end{theorem}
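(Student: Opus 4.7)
The plan is to run the $\ell_1$-variant of \Cref{alg: search according to phi} (bucketing by $\varphi_1$ instead of $\varphi$, so $S_\rho = \{v : \varphi_1(v) < 2^\rho\} \setminus \bigcup_{i<\rho} S_i$) and then mirror the proof of \Cref{thm:full-info}(ii). I would replace the two workhorse lemmas — the $\varphi$-bound (\Cref{lemma: phi and Mid set}) and the Steiner-tree bound (\Cref{lem:full-info-C-bound-doubling}) — with their $\varphi_1$ analogs. Once those are in hand, the bookkeeping for transitions and Euler tours carries over essentially verbatim.

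The first step is the $\ell_1$ analog of the $\varphi$-bound. For every triple $u,v,w$ the pointwise triangle inequality gives
\[ |f(w) - d(u,w)| + |f(w) - d(v,w)| \geq |d(u,w) - d(v,w)|, \]
so summing over $w \in V$ yields
\[ \varphi_1(u) + \varphi_1(v) \geq \sum_{w \in V} |d(u,w) - d(v,w)|. \]
Specializing to the single term $w=v$ on the right side already gives the general-edge-length analog of \Cref{cor: distance less than phi}, namely $d(u,v) \leq \varphi_1(u) + \varphi_1(v)$, which is exactly what the transition bound of \Cref{lemma: full-info cost at iteration i} requires and which transfers to general positive integer edge lengths without change.

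The main work, and the step I expect to be the principal obstacle, is the Steiner-tree bound. Let $R := \max_{u,v \in S_\rho} d(u,v) \leq 2 \cdot 2^\rho$ by the corollary, let $u^*, v^*$ be a diametrical pair of $S_\rho$, and let $N$ be an $(R/8)$-net of $S_\rho$. Since the metric has doubling dimension $\alpha$, $|N| \leq 2^{O(\alpha)}$. For any net point $c \in N$, the triangle inequality applied to $R = d(u^*, v^*)$ shows that at least one of $u^*, v^*$ is at distance $\geq R/2$ from $c$; call it $v^*$. For each $w$ in the cluster $B(c) \subseteq S_\rho$ of points assigned to $c$, we have $d(w,c) \leq R/8$ and $d(w,v^*) \geq R/2 - R/8 = 3R/8$, so $|d(w,v^*) - d(w,c)| \geq R/4$. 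Plugging into the $\ell_1$-bound inequality gives
\[ |B(c)| \cdot \tfrac{R}{4} \leq \varphi_1(c) + \varphi_1(v^*) \leq 2 \cdot 2^\rho, \]
so $|B(c)| \leq O(2^\rho/R)$ and $|S_\rho| \leq 2^{O(\alpha)} \cdot 2^\rho/R$. A Steiner tree on $S_\rho$ is then built by attaching each point of $S_\rho$ to its net representative (cost $\leq |S_\rho| \cdot R/8$) and linking the net points by any spanning structure of length $\leq |N| \cdot R$, yielding $|C_\rho| \leq 2^{O(\alpha)} \cdot 2^\rho$. The quantitative nature of $\varphi_1$ (charging $|d(w,v^*)-d(w,c)|$ for being off-center instead of merely $1$) is precisely what lets us save the extra factor $2^\rho$ that was present in the $\ell_0$ analysis, and this is the crucial place where the $\ell_1$ proof diverges from the $\ell_0$ one.

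Assembling the pieces, the Euler tour cost telescopes as $\sum_{s \leq \rho^*} 2|C_s| \leq 2^{O(\alpha)} \cdot 2^{\rho^*+1}$, and the transition costs are bounded by $d(r,g) + O(2^{\rho^*})$ by the same computation as in \Cref{lemma: full-info cost at iteration i}, now using the $\varphi_1$ corollary in place of \Cref{cor: distance less than phi}. Because the algorithm can only reach $g$ in round $\rho^*$ if $\varphi_1(g) \geq 2^{\rho^*-1}$, we conclude that the total cost is $d(r,g) + 2^{O(\alpha)} \cdot O(\varphi_1(g))$, which is the claimed bound.
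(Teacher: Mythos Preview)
Your proposal is correct and follows essentially the same route as the paper: replace \Cref{cor: distance less than phi} by the $\ell_1$ inequality $d(u,v)\le \varphi_1(u)+\varphi_1(v)$, and replace \Cref{clm:ball-distance} by the bound $|B(c)|\cdot R \le O(2^\rho)$ coming from the pointwise estimate $|d(w,v^*)-d(w,c)|\ge R/4$ for $w\in B(c)$. The only cosmetic difference is in how the Steiner tree on $S_\rho$ is assembled: the paper sums $\sum_{v\in B(c)} d(v,u^*)\le |B(c)|\cdot R \le O(2^\rho)$ over $c\in N$ to get a star at $u^*$ of length $2^{O(\alpha)}\cdot 2^\rho$, whereas you first bound $|S_\rho|\le 2^{O(\alpha)}\cdot 2^\rho/R$ and then build a two-level tree (points to net reps, net reps to one another); both yield $|C_\rho|\le 2^{O(\alpha)}\cdot 2^\rho$.
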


The proof is almost the same as that for the unit length case. We merely replace~\Cref{cor: distance less than phi} and \Cref{clm:ball-distance} by the following two lemmas.

\begin{lemma}
  \label{lem:weighted-distance-phi}
  For any two vertices $u,v$, their distance $d(u,v) \leq \nf12( \varphi_1(u) + \varphi_1(v))$.
\end{lemma}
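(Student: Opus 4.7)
The plan is to mimic the proof of Corollary \ref{cor: distance less than phi}, but since we are now summing absolute errors (rather than counting $\ell_0$ errors), the midpoint-set machinery can be replaced by the elementary triangle inequality for real numbers. Concretely, for each $w \in V$, applying $|a-b| + |a-c| \geq |b-c|$ with $a = f(w)$, $b = d(w,u)$, $c = d(w,v)$ gives
\[ |f(w) - d(w,u)| + |f(w) - d(w,v)| \;\geq\; |d(w,u) - d(w,v)|. \]
Summing this over all $w \in V$ and using the definition of $\varphi_1$ yields
\[ \varphi_1(u) + \varphi_1(v) \;\geq\; \sum_{w \in V} |d(w,u) - d(w,v)|. \]

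To finish, I would simply restrict the right-hand sum to its two terms corresponding to $w = u$ and $w = v$ (both of which lie in $V$). Using $d(u,u) = d(v,v) = 0$ and the symmetry $d(u,v) = d(v,u)$, each of these two terms equals exactly $d(u,v)$. Hence the sum is at least $2\,d(u,v)$, and combining gives
\[ \varphi_1(u) + \varphi_1(v) \;\geq\; 2\,d(u,v), \]
which is the desired bound.

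There is essentially no obstacle here: the argument is a two-line triangle-inequality computation and is in fact simpler than the unit-length analog, since we avoid ever having to reason about vertices along a shortest path and merely evaluate the inequality at its two endpoints. The proof uses only that $d$ is a metric, so it works verbatim for arbitrary nonnegative (not necessarily integer) edge lengths.
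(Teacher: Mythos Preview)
Your proof is correct and takes essentially the same approach as the paper: both restrict the sums defining $\varphi_1(u)+\varphi_1(v)$ to the two terms $w=u$ and $w=v$ and then apply the real triangle inequality $|a-b|+|a-c|\geq |b-c|$ to obtain $2d(u,v)$. The only cosmetic difference is that you apply the triangle inequality pointwise before restricting, while the paper restricts first and then bounds; the content is identical.
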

\begin{proof}
By definition of $\varphi_1$ we have
  $\varphi_1(u)+\varphi_1(v)\geq |f(u)|+|f(v)-d(u,v)| + |f(u)-d(u,v)|+|f(v)|\geq 2d(u,v)$.
\end{proof}

\begin{claim}
    \label{clm:ball-distance int length}
    For each net point $c\in N$, we have
    $\sum_{v \in B(c)} d(v,u^*) \leq O(2^\rho)$.
  \end{claim}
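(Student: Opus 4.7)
The plan is to run the same argument as for \Cref{clm:ball-distance}, replacing the $\ell_0$-style $\varphi$-bound lemma by its $\ell_1$-analog. The key elementary fact will be that for any $u,v,w$, the triangle inequality gives $|f(w) - d(w,u)| + |f(w) - d(w,v)| \geq |d(w,u) - d(w,v)|$; summing over $w$ in any set $B$ then yields
\[ \varphi_1(u) + \varphi_1(v) \;\geq\; \sum_{w \in B} |d(w,u) - d(w,v)|. \]
This is the quantitative substitute for \Cref{lemma: phi and Mid set}.

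First, exactly as in the unit-length proof, I would observe that $d(u^*, c) + d(v^*, c) \geq d(u^*, v^*) = R$, so without loss of generality $d(v^*, c) \geq R/2$. The net property then gives $d(w, c) \leq R/8$ for every $w \in B(c)$, and hence $d(w, v^*) \geq 3R/8$, so $|d(w, v^*) - d(w, c)| \geq R/4$. Feeding this into the displayed $\ell_1$ inequality for the pair $\{c, v^*\}$, and using that both $c, v^* \in S_\rho$ satisfy $\varphi_1 \leq 2^\rho$, yields the cardinality bound $|B(c)| \leq O(2^\rho / R)$.

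Next, \Cref{lem:weighted-distance-phi} shows that any two points of $S_\rho$ are at distance at most $2^\rho$, so in particular $R \leq 2^\rho$ and $d(w, u^*) \leq R$ for every $w \in B(c) \subseteq S_\rho$. Combining the per-point upper bound with the cardinality bound gives
\[ \sum_{w \in B(c)} d(w, u^*) \;\leq\; |B(c)| \cdot R \;\leq\; O(2^\rho), \]
which is the claim.

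The only delicate step is finding the right $\ell_1$-analog of \Cref{lemma: phi and Mid set}: in the unit-edge case that lemma counted points escaping the midpoint set, whereas here one needs a quantitative version that lower-bounds the gap $|d(w,u) - d(w,v)|$. The bound $|a| + |b| \geq |a-b|$ handles this translation cleanly, so I do not foresee a real obstacle beyond it; the rest of the argument is a term-by-term rerun of \Cref{clm:ball-distance}, with the additional observation that bounding $\sum_{w \in B(c)} d(w, u^*)$ (rather than $|B(c)|$) costs only an extra factor of $R$, which is absorbed by the $1/R$ already present in the cardinality estimate.
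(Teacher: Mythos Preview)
Your proposal is correct and follows essentially the same route as the paper's proof: both pick whichever of $u^*,v^*$ is farther from $c$, use the net property together with the elementary inequality $|f(w)-d(w,u)|+|f(w)-d(w,v)|\ge |d(w,u)-d(w,v)|$ to obtain $|B(c)|\cdot R \le O(2^\rho)$, and then finish with $d(w,u^*)\le R$. The only cosmetic difference is that you isolate the $\ell_1$-analog of the $\varphi$-bound lemma explicitly, whereas the paper performs the same estimate inline.
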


  \begin{proof}
    Let $w$ be the node among $u^*, v^*$ that is further from $c$; by
    the triangle inequality, $d(c,w) \geq R/2$. By the properties of
    the net, $d(v,c) \leq R/8$. Again using the triangle inequality,
    $d(v,w) \geq 3R/8$. Hence
    \[ \varphi_1(w) + \varphi_1(c) \geq \sum_{v \in B(c)} \big( |f(v) -
      d(v,w)| + |f(v) - d(v,c)| \big) \geq |B(c)| \cdot (\nf{3R}8 -
      \nf{R}8).\] Since both $w,c \in S_\rho$, this implies that
    \[ |B(c)|\cdot R \leq 4(\varphi_1(w) + \varphi_1(c)) \leq
      O(2^\rho). \] Finally, we use that $d(v,u^*) \leq R$ by our choice
    of $R$ to complete the proof.
  \end{proof}

Now to prove
\Cref{thm:full-info-doubling-lengths}, we mimic the
proof of~\Cref{thm:full-info}(ii), just substituting 
\Cref{lem:weighted-distance-phi} and
\Cref{clm:ball-distance int length} instead of \Cref{cor:
  distance less than phi} and \Cref{clm:ball-distance}.


\section{Closing Remarks}

In this paper we study a framework for graph exploration problems with
predictions: as the graph is explored, each newly observed node gives
a prediction of its distance to the goal. While graph searching is a
well-explored area, and previous works have also studied models where
nodes give directional/gradient information (``which neighbors are
better''), such distance-based predictions have not been previously
studied, to the best of our knowledge. We give algorithms for
exploration on trees, where the total distance traveled by the agent
has a relatively benign dependence on the number of erroneous
nodes. We then show results for the planning version of the problem,
which gives us hope that our exploration results may be extendible to
broader families of graphs. This is the first, and most natural open
direction. 

Another intriguing direction is to reduce the space complexity of our
algorithms, which would allow us to use them on very large implicitly
defined graphs (say computation graphs for large dynamic programming
problems, say those arising from reinforcement learning problems, or
from branch-and-bound computation trees). Can we give time-space
tradeoffs? Can we extend our results to multiple agents? 
  A more open-ended direction is to
consider other forms of quantitative hints for graph searching, beyond
distance estimates (studied in this paper) and gradient information (studied in previous works).



{\small
\bibliographystyle{alpha}
\bibliography{sample,bibliography}
}





\section{Further Discussion}

\subsection{$\ell_0$-versus-$\ell_1$ Error in Suggestions}
\label{sec:ell0-versus-ell1}

Most of the paper deals with $\ell_0$ error: namely, we relate our
costs to $|\cE|$, the number of vertices that give incorrect
predictions of their distance to the goal. Another reasonable notion
of error is the $\ell_1$ error: $\sum_v |f(v) - d(v,g)|$.

For the case of integer edge-lengths and integer predictions, both of
which we assume in this paper, it is immediate that the $\ell_0$-error
is at most the $\ell_1$-error: if $v$ is erroneous then the former
counts $1$ and the latter at least $1$. If we are given integer
edge-lengths but fractional predictions, we can round the predictions
to the closest integer to get integer-valued predictions $f'$, and
then run our algorithms on $f'$. Any prediction that is incorrect in
$f'$ must have incurred an $\ell_1$-error of at least $\nf12$ in $f$.
Hence all our results parameterized by the $\ell_0$ error imply
results parameterized with the $\ell_1$ error as well.

\subsection{Extending to General Edge-Lengths}
\label{sec:general-lengths}

A natural question is whether a guarantee like the one proved in
\Cref{thm:main} can be shown for trees with general integer weights:
let us see why such a result is not possible.
\begin{enumerate}
\item The first observation is that the notion of error needs to be
  changed from $\ell_0$ error something that is homogeneous in the
  distances, so that scaling distances by $C>0$ would change the error
  term by $C$ as well. One such goal is to guarantee
  the total movement to be 
  \[ O(d(r,g) + \text{ some function of the $\ell_p$ error} ), \]
  where $\ell_p$-error is $(\sum_v |f(v) - d(v,g)|^p)^{1/p}$. 
\item Consider a complete binary tree of height $h$, having $2^h$
  leaves. Let all edges between internal nodes have length $0$, and
  edges incident to leaves have length $L \gg 1$. The goal is at one
  of the leaves. Let all internal nodes have $f(v) = L$, and let all
  leaves have prediction $2L$. Hence the total $\ell_p$ error is $2L$,
  whereas any algorithm would have to explore half the leaves in
  expectation to find the goal; this would cost $\Theta(2^h \cdot L)$,
  which is unbounded as $h$ gets large.
\item The problem is that zero-length edges allow us to simulate
  arbitrarily large degrees. Moreover, the same argument can be
  simulated by changing zero-length edges to unit-length edges; the
  essential idea remains the same. and setting $f(v)$ for each node
  $v$ to be $L$ plus its distance to the root. Setting $L \geq 2^h$
  gives the total $\ell_p$ error to be $O(L + 2^h)$, whereas any
  algorithm would incur cost at least $\approx L \cdot 2^{h}$.
\end{enumerate}
This suggests that the right extension to general edge-lengths
requires us to go beyond just parameterizing our results with the
maximum degree $\Delta$; this motivates our study of graphs with bounded doubling
dimension in
\S\ref{sec:full-information}.

\subsection{Gradient Information}
\label{sec:grad}

Consider the information model where the agent gets to see
\emph{gradient} information: each edge is imagined to be oriented
towards the endpoint with lower distance to the goal. The agent can
see some noisy version of these directions, and the error is the
number of edges with incorrect directions. We now show an example
where both the optimal distance and the error are $D$, but any
algorithm must incur cost $\Omega(2^D)$. Indeed, take a complete
binary tree of depth $D$, with the goal at one of the leaves. Suppose
the agent sees all edges being directed towards the root. The only
erroneous edges are the $D$ edges on the root-goal path. But any
algorithm must suffer cost $\Omega(2^D)$.

\end{document}